\documentclass[10pt,twocolumn,twoside]{IEEEtran}

\usepackage{generic}
\usepackage{amsmath,amssymb,amsfonts}
\DeclareMathOperator{\Tr}{Tr}
\usepackage{graphicx}
\usepackage{textcomp}

\def\BibTeX{{\rm B\kern-.05em{\sc i\kern-.025em b}\kern-.08em
    T\kern-.1667em\lower.7ex\hbox{E}\kern-.125emX}}
    
\usepackage{amsthm}
\usepackage{graphics} 
\usepackage{epsfig} 
\usepackage{times} 

\usepackage[sorting=none,maxbibnames=99,giveninits=true]{biblatex}
\usepackage{bm}
\usepackage{xcolor}
\usepackage{booktabs}
\usepackage{siunitx}
\usepackage{hyperref}
\usepackage{cleveref}

\usepackage{caption}
\usepackage{lipsum}
\usepackage{multirow}
\usepackage{diagbox} 
\usepackage{subcaption}

\newcommand{\R}{\mathbb{R}}
\newcommand{\E}{\mathbb{E}}
\newcommand{\Pro}{\mathbb{P}}
\newcommand{\N}{\mathcal{N}}

\newcommand{\erf}{\textup{erf}}

\newcommand{\cass}{\mathcal{A}^{i,j}_{\varepsilon}}
\newcommand{\casm}{\mathcal{A}^{\mathcal{I}_m,j}_{\varepsilon}}
\newcommand{\casa}{\mathfrak{A}_\varepsilon^{\mathcal{I}_m,j}}

\newcommand{\casv}{\mathfrak{R}_\varepsilon^{\mathcal{I}_m,j}}

\newcommand{\rpro}{\bm{\mathcal{A}}^{\mathcal{I}_m}_{\varepsilon}}
\newcommand{\sigi}{\sigma_i}
\newcommand{\sigj}{\sigma_j}

\newcommand{\tm}{\tilde{\mu}}
\newcommand{\ts}{\tilde{\sigma}}

\usepackage[T1]{fontenc}
\newcommand{\textincon}[1]{%
{\fontfamily{zi4}\selectfont #1}}
\newcommand{\VAR}{{\textrm{\large \textincon{V$@$R}}}}
\newcommand{\AVAR}{{\textrm{\large \textincon{AV$@$R}}}}

\newtheorem{definition}{\bf Definition}
\newtheorem{assumption}{\bf Assumption}
\newtheorem{theorem}{\bf Theorem}
\newtheorem{lemma}{\bf Lemma}
\newtheorem{corollary}{\bf Corollary}

\title{\LARGE \bf Incremental Risk Assessment for Cascading Failures in Large-Scale Multi-Agent Systems}

\author{Guangyi Liu$^{1}$, Vivek Pandey$^{2}$, Christoforos Somarakis$^{3}$, and Nader Motee$^{2}$%

\thanks{$^{1}$Guangyi Liu is with Amazon Robotics, North Reading, MA, USA. {\tt\small gyliu@amazon.com}. This paper is independent of his position at Amazon and does not relate to his employment there.}%

\thanks{$^{2}$Vivek Pandey and Nader Motee are with the Department of Mechanical Engineering and Mechanics, Lehigh University, Bethlehem, PA, USA. {\tt\small \{vkp219,motee\}@lehigh.edu}.}%

\thanks{$^{3}$Christoforos Somarakis is a Senior Scientist with the Applied Mathematics Group, Merck \& Co., USA. {\tt\small christoforos.somarakis@merck.com}.}%
}

\begin{document}

\maketitle

\begin{abstract}
We develop a  framework for studying and quantifying the risk of cascading failures in time-delay consensus networks, motivated by a team of agents attempting temporal rendezvous under stochastic disturbances and communication delays. To assess how failures at one or multiple agents amplify the risk of deviation across the network, we employ the Average Value-at-Risk as a systemic measure of cascading uncertainty. Closed-form expressions reveal explicit dependencies of the risk of cascading failure on the Laplacian spectrum, communication delay, and noise statistics. We further establish fundamental lower bounds that characterize the best-achievable network performance under time-delay constraints. These bounds serve as feasibility certificates for assessing whether a desired safety or performance goal can be achieved without exhaustive search across all possible topologies. In addition, we develop an efficient single-step update law that enables scalable propagation of conditional risk as new failures are detected. Analytical and numerical studies demonstrate significant computational savings and confirm the tightness of the theoretical limits across diverse network configurations.
\end{abstract}

\section{Introduction}

Consensus networks are fundamental to a wide range of applications, from opinion formation in social systems to coordination in engineered multi-agent systems. For instance, in human societies, individuals often form beliefs and make decisions based on perceived social consensus~\cite{krueger1998perception}. In engineered settings, such as robotic teams, agents coordinate their behavior by following common protocols that promote group agreement~\cite{fagiolini2008consensus}. Despite their broad utility, consensus processes are inherently vulnerable to imperfections such as communication delays, limited sensing, and external disturbances. These factors can cause individual agents to diverge from the group consensus and degrade overall system performance.

Much of the existing literature has focused on the probability of consensus failure caused by such disturbances~\cite{Somarakis2019g}. However, an equally important and less explored question is how such deviations propagate through a network. Traditionally, a ``cascading failure'' describes a domino effect analyzed strictly \emph{after} a hard failure has occurred at a specific node. In this paper, we generalize the concept of a cascade to the continuous domain. We define a \emph{fluctuation cascade} as the conditional amplification and propagation of large deviations across a network under partial or range-bounded information. Rather than only asking what happens when a hard failure has already occurred, we ask: \emph{How does an agent entering an unsafe alarm zone amplify the conditional risk of failure for the rest of the system?} By capturing this continuous risk propagation, our framework models the precursors to system-wide failures, encompassing traditional post-failure cascades as a special, limiting case.

Uncertainty is intrinsic to physical systems, from quantum particles to large-scale engineered networks. As such, failures in consensus systems are not merely possible but inevitable over time~\cite{7438924,zhang2018cascading,zhang2019robustness,9683468,liu2022emergence, liu2025risk}. This motivates our interest in analyzing the \emph{resilience} of consensus networks in the presence of existing failures. For example, how does a single malfunctioning robot affect the performance of a coordinated swarm? Or, in a social context, how do committed opinions influence the ability of the network to reach consensus~\cite{xie2011social}? From a systems perspective, understanding these effects is essential for designing robust control and decision-making architectures that can isolate or contain the spread of failures.

In this paper, we develop a theoretical framework grounded in systemic \emph{risk analysis} \cite{rockafellar2000optimization,rockafellar2002conditional} to evaluate the likelihood and severity of cascading failures in time-delay consensus networks. Our goal is to quantify how failures occurring at one or multiple agents propagate through the network and elevate the risk of deviation in other agents. This risk-based formulation provides actionable insights into the design of resilient networked systems, enabling systematic evaluation of their ability to withstand and localize the effects of partial failures.

As a motivating case study, we consider a team of autonomous agents attempting to reach consensus on a rendezvous time. Agents exchange information over a time-invariant communication graph subject to uniform time delays and independent stochastic disturbances. These delays and noise sources model real-world limitations such as sensor latency and environmental uncertainty. Our analysis focuses on the event where agents fails to reach agreement, and we study how this deviation alters the risk profile for other agents in the network.

\textit{Our Contributions:} Building upon our previous work on first-order consensus networks~\cite{Somarakis2019g,Somarakis2016g,Somarakis2017a}, we extend the notion of individual deviation risk to \emph{cascading} risk. Specifically:
\begin{itemize}
    \item We introduce a formal framework using Average Value-at-Risk ($\AVAR$) to quantify the risk of cascading failures in stochastic consensus networks with time-delay.
    \item We derive closed-form expressions for the conditional risk of large deviations, given that one or more agents have already failed to reach consensus. These expressions explicitly capture the effects of network topology, time-delay, and noise.
    \item We analyze how uncertainty in individual nodes and their interactions contributes to overall risk via marginal variance and pairwise correlation structures.
    \item We validate our theoretical findings through simulation studies on canonical graph topologies (e.g., path, star, and complete graphs), revealing how structural features influence the system's vulnerability.
\end{itemize}

\noindent \textit{Relation to Prior Work:} 
Compared with our earlier conference works~\cite{liu2022risk,liu2023cascading}, which focused on evaluating conditional cascading risk for a given network, this paper makes two substantive generalizations. First, we establish time-delay–induced fundamental limits and derive a universal lower bound on the best-achievable cascading risk (\Cref{thm:risk_bound}), providing a feasibility certificate independent of specific topologies. Second, we develop a single-step incremental update rule (\Cref{thm:conditional_prob_update}) that efficiently propagates conditional mean and variance as new failures are observed, enabling scalable online risk re-evaluation without recomputing high-dimensional inverses from scratch.

All the proofs of theoretical results are provided in the appendix.

\section{Mathematical Notation}

We denote the non-negative orthant of $\R^n$ by $\R_+^n$, the standard basis by $\{\bm{e}_1, \dots, \bm{e}_n\}$, and the all-ones vector by $\bm{1}_n = [1,\dots,1]^\top$. The $n \times n$ identity matrix is $I_n$.

Let $\mathcal{G}$ be a simple, connected, undirected, and weighted graph with Laplacian $L = [l_{ij}] \in \R^{n \times n}$ defined by
\[
    l_{ij} := \begin{cases}
        -k_{ij}, & i \neq j \\
        \sum_{j \neq i} k_{ij}, & i = j
    \end{cases},
\]
where $k_{ij} \geq 0$ is the edge weight. The matrix $L$ is symmetric and positive semi-definite~\cite{van2010graph}, with eigenvalues $0 = \lambda_1 < \lambda_2 \le \dots \le \lambda_n$. Let $Q = [\bm{q}_1|\dots|\bm{q}_n]$ be the orthonormal eigenvector matrix satisfying $Q^\top Q = I_n$ and $\bm{q}_1 = \bm{1}_n / \sqrt{n}$. Then $L = Q \Lambda Q^\top$ with $\Lambda = \text{diag}(0, \lambda_2, \dots, \lambda_n)$.

Let $\mathcal{L}^2(\R^q)$ be the space of $\R^q$-valued random vectors with finite second moments on $(\Omega, \mathcal{F}, \mathbb{P})$. A Gaussian vector $\bm{y} \sim \mathcal{N}(\bm{\mu}, \Sigma)$ has mean $\bm{\mu} \in \R^q$ and covariance $\Sigma \in \R^{q \times q}$. The error function $\mathrm{erf}:\R \to (-1,1)$ is
$\mathrm{erf}(x) = \frac{2}{\sqrt{\pi}} \int_0^x e^{-t^2} \mathrm{d}t,$
with inverse $\mathrm{erf}^{-1}(\cdot)$.

\section{Problem Statement}     \label{sec:problem_statement}

We consider a class of time-delay linear consensus networks that arise in engineering applications such as clock synchronization in sensor networks, time or spatial rendezvous, and heading alignment in swarms; see~\cite{ren2007information,olfati2007consensus,saldana2018modquad} for details. As an application, we consider the rendezvous problem in time where the group objective is to meet simultaneously at a prespecified location known to all agents.\footnote{Rendezvous in space is very similar to rendezvous in time by switching the role of time and location.} Agents do not have prior knowledge of the precise meeting time as it may have to be adjusted in response to unexpected emergencies or exogenous uncertainties \cite{ren2007information}. Thus, all agents should agree on a rendezvous time by achieving the consensus, which can be accomplished by each agent $i = 1, \dots, n$ creating a state variable, say $x^{(i)} \in \R$, representing its belief of the rendezvous time. Each agent's initial belief is set to its preferred time that it can rendezvous with others. Then, the rendezvous dynamics for each agent evolves in time according to the following stochastic differential equations:
\begin{equation} \label{eq:dyn-one}
    \text{d}  x^{(i)}_t = u^{(i)}_t \, \text{d}  t + b \, \text{d}  w^{(i)}_t,
\end{equation}
for all $i = 1, \dots, n$. Each agent's control input is $u^{(i)}_t \in \R$. The source of uncertainty is diffused in the network as additive stochastic noise, and its magnitude is uniformly scaled by the diffusion coefficient $b \in \R_{+}$. The impact of uncertain environments on dynamics of agents are modeled by independent Brownian motions $w^{(1)}, \dots, w^{(n)}$. 
In many real-world systems, such as multi-robot teams using motion capture for coordination, agents receive updates via wireless broadcast from a central processor, leading to near-identical communication latency. Motivated by this, we assume that all agents experience an identical communication time-delay $\tau \in \mathbb{R}_{+}$~\cite{ren2007information}. The control inputs are determined via a negotiation process by forming a linear consensus network over a communication graph using the following feedback law:
\begin{equation} \label{eq:feedback}
    u^{(i)}_t = \sum_{j = 1}^{n} k_{ij} \left( x^{(j)}_{t-\tau} - x^{(i)}_{t-\tau} \right),
\end{equation}
where $k_{ij} \in \R_{+}$ are nonnegative feedback gains. Let us denote the state vector by $\bm{x}_t = [ x^{(1)}_{t}, \dots, x^{(n)}_{t} ]^\top$ and the vector of exogenous disturbance by $\bm{w}_t = [w^{(1)}_{t}, \dots, w^{(n)}_{t} ]^\top$. The dynamics of the resulting closed-loop network can be cast as a linear consensus network that is governed by the following stochastic differential equation:
\begin{equation} \label{eq:dyn}
    \text{d}  \bm{x}_t = -L \, \bm{x}_{t-\tau}\, \text{d}  t + B \, \text{d}  \bm{w}_t,
\end{equation}
for all $t \geq 0$, where the initial function $\bm{x}_t=\phi(t)$ is deterministically given for $t \in [-\tau, 0]$ and $B = b I_n$. The underlying coupling structure of the consensus network \eqref{eq:dyn} is a connected graph $\mathcal{G}$ with Laplacian matrix $L$. It is considered that the communication graph $\mathcal{G}$ is time-invariant such that the network of agents aim to reach the consensus on a rendezvous time before they perform motion planning to get to the meeting location. Upon reaching consensus, a properly designed internal feedback control mechanism steers each agent toward the rendezvous location.

\begin{assumption}  \label{asp:stable}
    The time-delay satisfies $\tau < \frac{\pi}{2 \lambda_n}$.
\end{assumption}

When there is no noise, i.e., $b = 0$, it is known \cite{olfati2004consensus} that under the Assumption \ref{asp:stable} and graph being connected, states of all agents converge to the average of all initial states $\frac{1}{n} \bm{1}_n^\top \bm{x}_0$; whereas in the presence of input noise, state variables fluctuate around the network average $\frac{1}{n} \bm{1}_n^\top \bm{x}_t$. In order to quantify the quality of rendezvous and its fragility features, we consider the vector of observables 
\begin{equation} \label{eq:observables}
     \bm{y}_t = M_n \, \bm{x}_t,
\end{equation}
in which $M_n = I_n - \frac{1}{n} \bm{1}_n \bm{1}_n^\top $ is the centering matrix and the observable $\bm{y}_t = [y^{(1)}_t, ..., y^{(n)}_t]^\top$ measures the agents' deviations from the current network average. The assumption of connected graph implies that one of the modes of network \eqref{eq:dyn} is marginally stable. The marginally stable mode, which corresponds to the zero eigenvalue of $L$, is unobservable from the output \eqref{eq:observables}, which keeps $\bm{y}_t$ bounded in the steady-state. When noise is absent, we have $\bm{y}_t \rightarrow 0$ as $t \rightarrow \infty$. Consequently, the exogenous noise excites the observable modes of the network and the output fluctuates around zero. This implies that agents will not agree upon an exact rendezvous time and a practical resolution is to allow a tolerance interval for agents to concur.

\begin{definition}
    For a given $c \in \R_+$, the network \eqref{eq:dyn} reaches the $c$-consensus if, in steady state where $\bm{y}_t \Rightarrow \bm{\bar y}$ as $t \to \infty$,
    \begin{equation} \label{eq:event}
        |\bm{\bar y}| \leq c \bm{1}_n
    \end{equation}
    holds with a high probability\footnote{The high probability means a probability larger than a predefined cut-off number close to one.}.
\end{definition}

The notion of $c$-consensus means that all agents have agreement on all points in $\{\bm{x} \in \R^n \,\big|\,  |M_n \bm{x}| \leq c \bm{1}_n\}$. Suppose that event \eqref{eq:event} holds, the network of agents will achieve a $c$-consensus of the rendezvous time in the following sense. In steady-state, the $i$'th agent is assured that by $x^{(i)}_t \pm \,c$ units of time, all other agents will arrive and meet each other in that time interval with high probability. At the same time, some undesirable situations may also occur that we refer to as failures.

\begin{definition}
    For a given $c \in \R_+$, an agent whose motion is governed by \eqref{eq:dyn} with steady-state observable $\bar{y}_i$ defined in \eqref{eq:observables} is said to be prone to failure if
    \begin{equation} \label{eq:failure}
        \Pro\big(|\bar{y}_i| > c\big) > 0.
    \end{equation}
\end{definition}

In the rendezvous problem, a failure event \eqref{eq:failure} with probability exceeding $\varepsilon > 0$ indicates that one or more agents deviate significantly from the consensus, potentially preventing the entire network from achieving $c$-consensus within the intended rendezvous interval and may trigger cascading failures among the remaining agents.

The {\it problem} is to quantify the risk of such cascading failures, i.e., large deviations conditioned on the failure of other agents, as a function of the graph Laplacian, time-delay, and noise statistics. To this end, we develop a systemic risk framework based on the steady-state behavior of the closed-loop stochastic system.

The remainder of the paper is organized as follows. Section~\ref{sec:prelims} reviews the steady-state behavior of time-delay consensus networks. Section~\ref{sec:risk} formulates cascading-failure risk via closed-form conditional tail metrics. Section~\ref{sec:special-graphs} specializes these results to canonical topologies and highlights topology-dependent behaviors. Section~\ref{sec:update} presents a scalable single-step update law for propagating risk of cascading failures as new failures are observed. Section~\ref{specialgraph} derives time-delay–induced fundamental limits and best-achievable lower bounds on risk. Section~\ref{sec:case-study} validates the theory through simulations on representative networks and discusses design implications.

\section{Preliminary Results}   \label{sec:prelims}

We begin by characterizing the steady-state statistics of the network observables and introducing risk metrics that quantify the severity of large deviations.

\subsection{Steady-State Statistics of Observables}
Under Assumption~\ref{asp:stable} and a connected communication graph, the steady-state observables \eqref{eq:observables} converge in distribution to a multivariate normal \cite{Somarakis2019g,liu2022risk}, $\bm{\bar{y}} \sim \mathcal{N}(0, \Sigma)$. The closed-form expression for the covariance matrix $\Sigma$ is provided below.
\begin{lemma}     \label{lem:y_steady}
    The steady-state covariance matrix of $\bm{\bar{y}}$, $\Sigma = [\sigma_{ij}]$, is given element-wise by
    \begin{equation} \label{eq:sigma_y}
    \begin{aligned}
        \sigma_{ij} = 
        \frac{1}{2} b^2 \sum_{k=2}^{n} \frac{\cos (\lambda_k \tau)}{\lambda_k (1 - \sin (\lambda_k \tau))} (\bm{m}_i^\top \bm{q}_k)(\bm{m}_j^\top \bm{q}_k),
    \end{aligned}
    \end{equation}
    where $\bm{m}_i$ denotes the $i$'th column of the centering matrix $M_n$ for all $i,j = 1,...,n$. For simplicity, we write $\sigma_{ii}$ as $\sigi^2$.
\end{lemma}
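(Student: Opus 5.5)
The plan is to diagonalize the delayed SDE \eqref{eq:dyn} in the eigenbasis of $L$, compute the stationary variance of each scalar delayed Ornstein--Uhlenbeck mode, and then map back through $M_n$.

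\textbf{Step 1: modal decomposition.} Set $\bm{z}_t = Q^\top \bm{x}_t$. Since $B = bI_n$ and $Q$ is orthogonal, $\tilde{\bm{w}}_t = Q^\top \bm{w}_t$ is again a standard $n$-dimensional Brownian motion. The dynamics then decouple as
\[
\mathrm{d} z^{(k)}_t = -\lambda_k z^{(k)}_{t-\tau}\,\mathrm{d}t + b\,\mathrm{d}\tilde w^{(k)}_t ,
\]
with independent scalar equations for different $k$. Because $M_n \bm{q}_1 = 0$ and $M_n \bm{q}_k = \bm{q}_k$ for $k\ge 2$, we have
\[
\bm{y}_t = M_n Q \bm{z}_t = \sum_{k\ge2} (M_n\bm{q}_k)\, z^{(k)}_t .
\]
Only the modes $k\ge 2$ enter, which also explains why the marginally stable mode is harmless. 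Using $\bm{m}_i^\top \bm{q}_k = \bm{e}_i^\top M_n \bm{q}_k$, independence of the modes gives
\[
\sigma_{ij} = \sum_{k\ge 2} (\bm{m}_i^\top\bm{q}_k)(\bm{m}_j^\top\bm{q}_k)\,\lim_{t\to\infty}\mathrm{Var}\big(z^{(k)}_t\big).
\]
The deterministic initial function contributes only a mean term. That term decays to zero for $k\ge2$, since $\lambda_k\tau<\pi/2$ by Assumption~\ref{asp:stable} is exactly the stability condition for $\dot z=-\lambda z(t-\tau)$.

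\textbf{Step 2: the scalar stationary variance.} This is the main obstacle: computing $\lim_t \mathrm{Var}(z_t) = \frac{b^2\cos(\lambda\tau)}{2\lambda(1-\sin(\lambda\tau))}$. I would write the solution via the fundamental solution $g$ of $\dot g(t) = -\lambda g(t-\tau)$, with $g(0)=1$ and $g\equiv0$ on $[-\tau,0)$. Then $z_t = (\text{mean}) + b\int_0^t g(t-s)\,\mathrm{d}\tilde w_s$. By the Itô isometry the limiting variance is $b^2\int_0^\infty g(s)^2\,\mathrm{d}s$, which is finite because $g$ decays exponentially under the stability condition.

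To evaluate this integral I would use Parseval. The Laplace transform is $\hat g(s) = 1/(s+\lambda e^{-s\tau})$, so
\[
\int_0^\infty g^2\,\mathrm{d}s = \frac{1}{2\pi}\int_{\mathbb{R}} \frac{\mathrm{d}\omega}{|i\omega + \lambda e^{-i\omega\tau}|^2}.
\]
This integral is awkward directly. As an alternative route, I would use the autocovariance $K(h)=\int_0^\infty g(s)g(s+h)\,\mathrm{d}s$. It satisfies $K'(h) = -\lambda K(h-\tau)$ for $h>0$, is even, and obeys the jump condition $K'(0^+)-K'(0^-)=-1$; combining the last two gives $K'(0^+)=-\tfrac12$. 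On $[0,\tau]$ the delay equation becomes a coupled ODE: differentiating, $K''(h) = -\lambda K'(h-\tau) = -\lambda^2 K(\tau-h)$, using evenness. So $K$ solves a linear second-order system whose solutions are combinations of $\cos(\lambda h)$ and $\sin(\lambda h)$.

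Concretely, I would seek $K(h)=A\cos(\lambda h)+C\sin(\lambda h)$ on $[0,\tau]$ and impose two conditions. The first is $K'(h) = -\lambda K(\tau-h)$ at $h=0$, which gives $\lambda C = -\lambda(A\cos\lambda\tau + C\sin\lambda\tau)$. The second is $K'(0^+)=-\tfrac12$, i.e.\ $\lambda C = -\tfrac12$. Solving yields
\[
A = \frac{1+\sin\lambda\tau}{2\lambda\cos\lambda\tau} = \frac{\cos\lambda\tau}{2\lambda(1-\sin\lambda\tau)}.
\]
I would double-check this against the $\tau\to0$ limit, where it should reduce to $1/(2\lambda)$.

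\textbf{Step 3: assemble.} Substituting $\mathrm{Var}(z^{(k)}_\infty)=b^2 A(\lambda_k)$ into the sum from Step~1 gives \eqref{eq:sigma_y}. Convergence in distribution to the Gaussian law then follows because $\bm{y}_t$ is Gaussian with converging mean and covariance.
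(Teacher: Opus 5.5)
Your proof is correct, and it is more self-contained than the paper's. The paper's proof is a single appeal to \cite{Somarakis2019g}: it imports the stationary covariance of the delayed consensus dynamics from that work and only remarks that the observable is formed with the centering matrix $M_n$.

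Your Step~1 makes that remark explicit. You diagonalize in the eigenbasis of $L$ and use $M_n\bm q_1=0$ and $M_n\bm q_k=\bm q_k$ for $k\ge2$, so the marginally stable mode drops out. Your Step~2 derives the scalar stationary variance $\frac{b^2\cos\lambda\tau}{2\lambda(1-\sin\lambda\tau)}$, which the paper takes as known, through the autocovariance function $K$ of the fundamental solution (a delay-Lyapunov-type construction). The citation buys brevity. Your route shows exactly where Assumption~\ref{asp:stable} enters: exponential decay of each mode $k\ge2$, so that $\int_0^\infty g^2<\infty$ and the mean term vanishes, and $\cos\lambda_k\tau\neq0$ when you solve for $A$.

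One slip in Step~2 should be corrected: on $(0,\tau)$ the second-order equation is $K''(h)=-\lambda^2K(h)$, not $-\lambda^2K(\tau-h)$. Since $K'$ is odd, $K'(h-\tau)=-K'(\tau-h)=\lambda K(-h)=\lambda K(h)$, so $K''(h)=-\lambda K'(h-\tau)=-\lambda^2K(h)$. It is this equation that justifies the ansatz $A\cos\lambda h+C\sin\lambda h$; the equation you wrote also admits exponential solutions, so it does not.

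With that ansatz, the functional equation $K'(h)=-\lambda K(\tau-h)$ reduces to the single condition you impose at $h=0$. The $\sin\lambda h$-coefficient condition is equivalent to it because $\cos^2\lambda\tau=(1-\sin\lambda\tau)(1+\sin\lambda\tau)$. The jump condition can be checked directly: $K'(0^+)=\int_0^\infty g\,\dot g\,\mathrm{d}s=-\tfrac12 g(0)^2=-\tfrac12$. So your value of $A$, and hence \eqref{eq:sigma_y}, stands.
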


The quantities $\sigma_i^2$, $\sigma_j^2$, and $\sigma_{ij}$ denote the steady-state variances of agents $i$ and $j$ and their covariance under the delayed stochastic consensus dynamics \eqref{eq:dyn}. Through \eqref{eq:sigma_y}, they are fully determined by the Laplacian spectrum, the time-delay $\tau$, and the noise intensity $b$. The variance $\sigma_i^2$ characterizes how strongly disturbances excite disagreement at node $i$, while the covariance $\sigma_{ij}$ captures how fluctuations at agents $i$ and $j$ are coupled by the network. The correlation coefficient $\rho_{ij} = \sigma_{ij}/(\sigma_i\sigma_j)$ therefore quantifies the statistical channel through which failures propagate, and directly governs the magnitude of cascading risk.

\subsection{Risk Measures}      \label{sec:risk_measure}
To quantify the severity of undesirable fluctuations in network observables, we employ Value-at-Risk ($\VAR$) and Average Value-at-Risk ($\AVAR$) \cite{Follmer2016,rockafellar2002conditional,sarykalin2008value}. Let $y: \Omega \to \R$ be a random variable in the probability space $(\Omega, \mathcal{F}, \mathbb{P})$, and define an unsafe set $C \subset \R$ representing critical deviations, e.g., fail to reach consensus. The event $\{ y(\omega) \in C \}$ captures the occurrence of such undesirable states.

To characterize external neighborhoods of $C$, we consider a family of nested level sets $\{ C_\delta \}_{\delta \in [0, \infty]}$ satisfying, for any sequence $\{\delta_n\}_{n=1}^\infty$ with $\lim_{n \rightarrow \infty} \delta_n \rightarrow \infty$,
\begin{equation} \label{eq:level_set_prop}
    \begin{aligned}
        \text{(i)}~ C_{\delta_1} \subset C_{\delta_2} \quad \text{for } \delta_1 > \delta_2, ~~
        &\text{(ii)}~ \bigcap_{n=1}^\infty C_{\delta_n} = \lim_{n \to \infty} C_{\delta_n} = C.
    \end{aligned}
\end{equation}

We define the right-tail $\VAR_\varepsilon$ at confidence level $\varepsilon \in (0,1)$ as:
\begin{equation*}
    \mathfrak{R}_\varepsilon := \inf \left\{ z \in \R \mid \mathbb{P}(y > z) < \varepsilon \right\},
\end{equation*}
and the corresponding $\AVAR_\varepsilon$ as the expected value conditional on this upper tail:
\begin{equation*}
    \mathfrak{A}_\varepsilon := \mathbb{E} \left[ y \mid y > \VAR_\varepsilon \right].
\end{equation*}

To relate these metrics to the level sets, we define the following representation of $\AVAR_\varepsilon$ in terms of the parameter $\delta$:
\begin{equation*}
    \mathcal{A}_\varepsilon := \sup \left\{ \delta \geq 0 \mid \AVAR_\varepsilon \in C_\delta \right\},
\end{equation*}
which quantifies how deeply the tail distribution penetrates the alarm zone. A higher $\mathcal{A}_\varepsilon$ indicates greater severity of risk. The case $\mathcal{A}_\varepsilon = 0$ implies the tail remains outside $C_0$, while $\mathcal{A}_\varepsilon = \infty$ implies $\AVAR_\varepsilon \in C$. Note that while $\AVAR_\varepsilon$ is a coherent risk measure \cite{rockafellar2002conditional}, the index $\mathcal{A}_\varepsilon$ only satisfies monotonicity and subadditivity.

\section{Risk of Cascading Large Fluctuations}  \label{sec:risk}

We introduce a framework to quantify the risk of {cascading large fluctuations} in a network of multiple agents. Specifically, we assess the likelihood that an agent deviates significantly from consensus, conditioned on uncertain or partial observations of others.

Let agents be indexed by $\{1, \dots, n\}$, and define a large deviation event for agent $i$ as $\{|\bar{y}_i| > c\}$ for a threshold $c > 0$ \eqref{eq:failure}. To generalize this notion, we define a family of nested level sets $\{U_\delta\}_{\delta \in [0, \infty]}$:
\begin{equation}
    U_\delta := \left(h(\delta), \infty\right),
\end{equation}
where \( h: [0, \infty] \to [c, \infty) \) is a monotonic function satisfying the properties in \eqref{eq:level_set_prop}. These sets define alarm zones of increasing proximity to failure.
We adopt the parametric form from \cite{somarakis2023risk}:
\begin{equation}
    U_\delta := \Big(c \, \frac{\delta+1}{\delta+\alpha}, \infty\Big), \quad \alpha > 1, \label{eq:level_set}
\end{equation}
where $\alpha$ controls the rate of convergence to the unsafe region, and larger $\delta$ implies closer proximity to failure. A visual illustration of $U_\delta$, along with the associated $\VAR_\varepsilon$ and $\AVAR_\varepsilon$, is provided in Fig.~\ref{fig:level_set}. For any agent $j$ and any information set $\mathcal{O}$ describing partial or exact observations of other agents, we define the tail risk of $|\bar y_j|$ relative to $\mathcal{O}$. The $\VAR_\varepsilon$ and $\AVAR_\varepsilon$ are
\begin{align}
    \mathfrak{R}_{\varepsilon}^{\mathcal{O},j}
    &:= \inf \left\{ z \in \mathbb{R} \,\big|\, 
        \mathbb{P}\big(|\bar y_j| > z \mid \mathcal{O}\big) < \varepsilon
    \right\}, \label{eq:var_def_cond}\\[2mm]
    \mathfrak{A}_{\varepsilon}^{\mathcal{O},j}
    &:= \mathbb{E}\!\left[
        |\bar y_j| \,\big|\, |\bar y_j| > \mathfrak{R}_{\varepsilon}^{\mathcal{O},j}
    \right]. \label{eq:avar_def_cond}
\end{align}
The risk level associated with $\mathcal{O}$ is
\begin{equation}
    \mathcal{A}_{\varepsilon}^{\mathcal{O},j}
    := \sup\Big\{\, \delta \ge 0 ~\big|~
        \mathfrak{A}_{\varepsilon}^{\mathcal{O},j}
         > c\, \frac{\delta+1}{\delta+\alpha}
    \Big\}. \label{eq:risk_def_cond}
\end{equation}



\begin{figure}[t]
    \centering
    \includegraphics[width=\linewidth,height = 4.5cm]{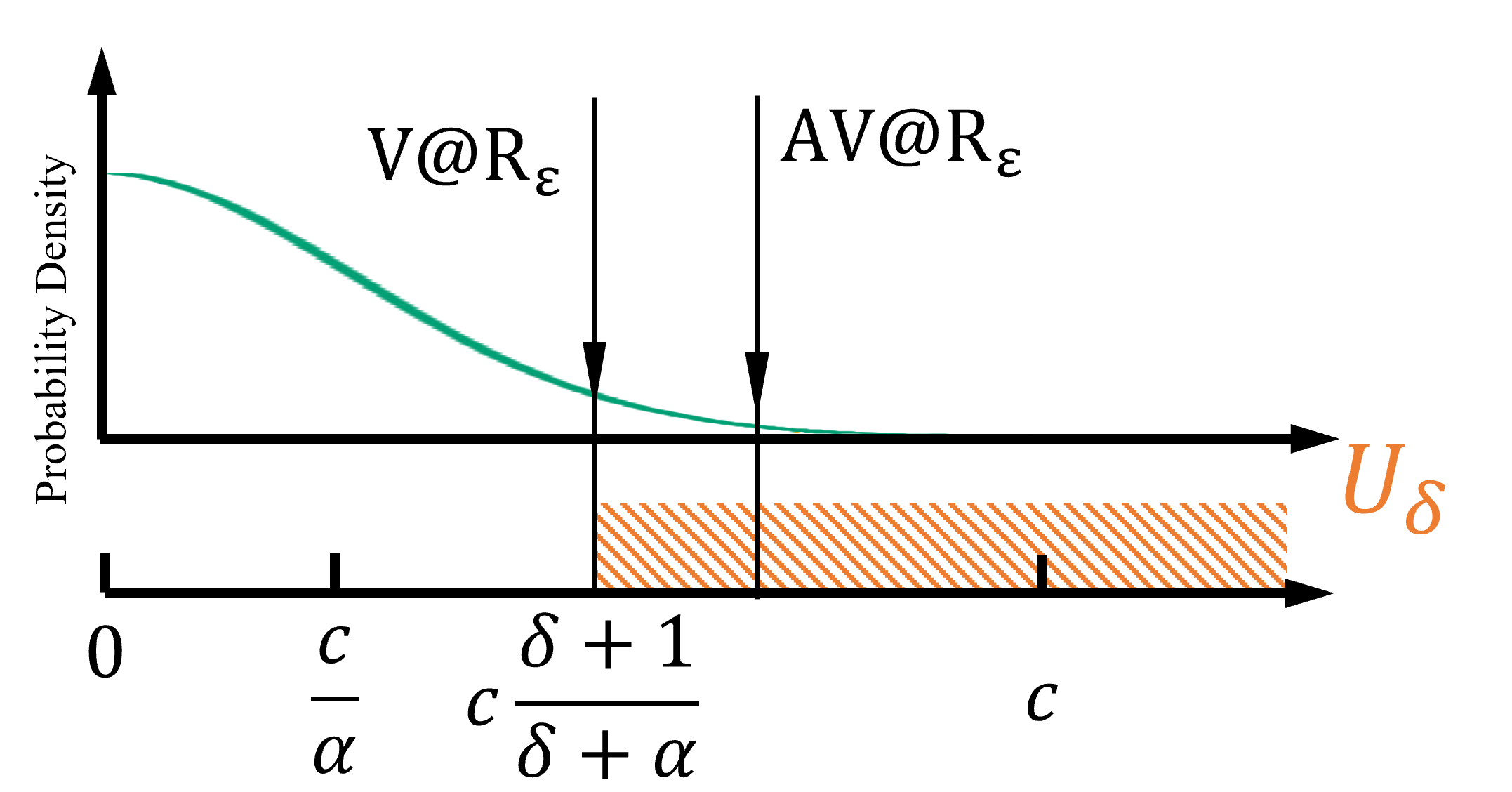}
    \caption{The concept of the risk set $U_{\delta}$, $\VAR_\varepsilon$, and $\AVAR_\varepsilon$.}
    \label{fig:level_set}
\end{figure}

\subsection{Failures Under Range-Bounded Information}

Consider the case where only partial information is available about agent $i$'s deviation from consensus, i.e., $|\bar{y}_i| \in U_{\delta^*}$, with $U_{\delta^*} = \left(c \, \frac{\delta^* + 1}{\delta^* + \alpha}, \infty\right)$. This models situations where an agent is known to be near the failure threshold $c$ but cannot be measured precisely due to sensing limitations. In such scenarios, the risk of cascading large fluctuation at the $j$'th agent can be computed using \eqref{eq:risk_def_cond} with $\mathcal{O} := \{|\bar y_i| \in U_{\delta^*}\}$, and the pair $(\bar y_i,\bar y_j)$ is jointly Gaussian at the steady-state. Let $\rho_{ij}$ denote their correlation and let $\sigma_i$, $\sigma_j$ be their standard deviations. The conditional tail probability 
$\mathbb{P}(|\bar y_j|>z \mid \mathcal{O})$
admits the representation
\begin{equation}
    \mathbb{P}(|\bar y_j|>z \mid \mathcal{O})
    = \Theta_{-}(z) + \Theta_{+}(z),
\end{equation}
where $\Theta_{\pm}$ is given by
\begin{equation}\label{eq:Theta_pm}
\scalebox{0.9}{$
\Theta_\pm (z) = \int_{\mp \infty}^{\mp z} \pm \exp\!\left(-\frac{1}{2}\left(\frac{x}{\sigma_j}\right)^2\right) \left(1 - \frac{1}{2}\Psi_{-}(x) + \frac{1}{2}\Psi_{+}(x)\right) \, \mathrm{d}x.
$}
\end{equation}
and
\begin{equation}\label{eq:Psi_pm}
\scalebox{1}{$
\Psi_{\pm}(x)=\textup{erf}\!\left(\frac{1}{\sqrt{2(1-\rho_{ij}^2)}}\left(\mp\frac{c(\delta^*+1)}{\sigi(\delta^*+\alpha)}-\frac{\rho_{ij}x}{\sigma_j}\right)\right).
$}
\end{equation}
The mapping $z \mapsto \Pro(|\bar y_j|>z \mid |\bar y_i|\in U_{\delta^*}) = \Theta_{-}(z)+\Theta_{+}(z)$ is continuous and strictly decreasing on $[0,\infty)$, with value $1$ at $z=0$ and limit $0$ as $z\to\infty$. For any $\varepsilon\in(0,1)$, the equation $\Theta_{-}(z)+\Theta_{+}(z)=\varepsilon$ admits a unique solution, denoted $\mathfrak{R}^{i,j}_\varepsilon$, which depends parametrically on $\sigma_i,\sigma_j,$ and $\rho_{ij}$. 
The corresponding conditional $\AVAR_\varepsilon$ \eqref{eq:avar_def_cond} is
\begin{equation} \label{eq:range_bounded_avar}
        \mathfrak{A}^{i,j}_{\varepsilon}
        =
        \frac{\int_{|\bar{y}_j| > \mathfrak{R}^{i,j}_{\varepsilon}} \int_{|\bar{y}_i| \in U_{\delta^*}} |\bar{y}_j|\, h(\bar{y}_{i},\bar{y}_{j}) \textup{d} \bar{y}_{i} \textup{d} \bar{y}_{j}}{2\pi\sigma_i \sigma_j \sqrt{1-\rho_{ij}^2}\,\varepsilon \left[1-\erf\!\left(\frac{c(\delta^*+1)}{\sqrt{2}\sigma_i(\delta^*+\alpha)}\right)\right]},
\end{equation}
where
\begin{align*}
    \scalebox{0.9}{$
        h(\bar{y}_{i},\bar{y}_{j}) =\exp\left(-\frac{1}{2(1-\rho^2_{ij})} \left[(1-\rho_{ij}^2) \left(\frac{\bar{y}_j}{\sigj} \right)^2 + \left(\frac{\bar{y}_i}{\sigi} - \rho_{ij}\frac{\bar{y}_j}{\sigj} \right)^2 \right] \right).
    $}
\end{align*}

\begin{theorem}\label{thm:gen_cas_ren_risk}
    Suppose that the consensus network \eqref{eq:dyn} reaches the steady-state and the $i$'th agent is close to fail the $c$-consensus with $|\bar{y}_i| \in U_{\delta^*}$. Then, the risk of cascading large fluctuation at the $j$'th agent is 
    \[
        \cass := \begin{cases}
                0 &\text{if }~ \mathfrak{A}^{i,j}_{\varepsilon} \leq \frac{c}{\alpha}\\
                \frac{\alpha \, \mathfrak{A}^{i,j}_{\varepsilon} - c}{c - \mathfrak{A}^{i,j}_{\varepsilon}} &\text{if }~ \mathfrak{A}^{i,j}_{\varepsilon} \in  \left(\frac{c}{\alpha},c \right)\\
                \infty &\text{if }~ \mathfrak{A}^{i,j}_{\varepsilon} \geq c
                \end{cases},
    \]
    where $\mathfrak{A}_{\varepsilon}^{i,j}$ is computed as in \eqref{eq:range_bounded_avar}.
\end{theorem}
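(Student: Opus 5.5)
The theorem has two logically separate parts. The first is the computation of the conditional $\AVAR_\varepsilon$ value $\mathfrak{A}^{i,j}_\varepsilon$, which the statement takes as given by \eqref{eq:range_bounded_avar}. The second is the inversion of the level-set map in \eqref{eq:risk_def_cond}, which is what produces the piecewise formula. I will establish the first part just far enough to justify \eqref{eq:range_bounded_avar}, and then do the inversion carefully.

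\emph{Justifying \eqref{eq:range_bounded_avar}.} By Lemma~\ref{lem:y_steady}, the pair $(\bar y_i,\bar y_j)$ is jointly Gaussian with zero mean, standard deviations $\sigma_i,\sigma_j$, and correlation $\rho_{ij}$.

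First, the normalizing constant. The conditioning event $\mathcal{O}=\{|\bar y_i|\in U_{\delta^*}\}$ has probability $1-\erf\big(c(\delta^*+1)/(\sqrt2\sigma_i(\delta^*+\alpha))\big)$, which is positive because $U_{\delta^*}$ is a nondegenerate half-line.

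Second, the tail probability. Write the joint density as a marginal in $\bar y_j$ times the Gaussian conditional density of $\bar y_i$ given $\bar y_j$. Integrating $\bar y_i$ over $\{|\bar y_i|>c(\delta^*+1)/(\delta^*+\alpha)\}$ gives the $\Psi_\pm$ terms, and hence the representation $\Theta_-+\Theta_+$ of $\Pro(|\bar y_j|>z\mid\mathcal{O})$.

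Third, the $\VAR$ threshold. The integrand is positive, so this tail map is continuous and strictly decreasing, equal to $1$ at $z=0$ and tending to $0$ as $z\to\infty$. The infimum in \eqref{eq:var_def_cond} is therefore attained at the unique root $\mathfrak{R}^{i,j}_\varepsilon$ of $\Theta_-+\Theta_+=\varepsilon$. In particular $\Pro(|\bar y_j|>\mathfrak{R}^{i,j}_\varepsilon\mid\mathcal{O})=\varepsilon$.

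Fourth, the $\AVAR$ itself. By the definition of conditional expectation, $\mathfrak{A}^{i,j}_\varepsilon=\E[|\bar y_j|\mathbf{1}_{\{|\bar y_j|>\mathfrak{R}\}}\mathbf{1}_{\mathcal{O}}]/\big(\varepsilon\,\Pro(\mathcal{O})\big)$. Writing the numerator with the bivariate density, completed as a square to give $h$, and using the prefactor $1/(2\pi\sigma_i\sigma_j\sqrt{1-\rho_{ij}^2})$, yields exactly \eqref{eq:range_bounded_avar}.

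\emph{Inverting the level-set map.} Put $g(\delta)=c(\delta+1)/(\delta+\alpha)$ on $[0,\infty)$. Its derivative is $g'(\delta)=c(\alpha-1)/(\delta+\alpha)^2>0$, since $\alpha>1$. So $g$ is strictly increasing and continuous, with $g(0)=c/\alpha$ and $g(\delta)\uparrow c$ as $\delta\to\infty$. The set $\{\delta\ge0: \mathfrak{A}>g(\delta)\}$ is therefore an interval starting at $0$ (possibly empty). Three cases follow.

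\begin{enumerate}
\item If $\mathfrak{A}\le c/\alpha$, the set is empty. Adopting the convention $\sup\emptyset=0$, which the paper uses when it says ``the tail remains outside $C_0$,'' gives risk $0$.
\item If $\mathfrak{A}\ge c$, then $\mathfrak{A}>g(\delta)$ for every $\delta$, so the supremum is $\infty$.
\item If $c/\alpha<\mathfrak{A}<c$, the set is $[0,\delta_0)$, where $\delta_0$ is the unique solution of $g(\delta_0)=\mathfrak{A}$. The supremum is $\delta_0$. Solving $c(\delta+1)=\mathfrak{A}(\delta+\alpha)$ gives $\delta_0=(\alpha\mathfrak{A}-c)/(c-\mathfrak{A})$, which is positive in this range.
\end{enumerate}

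This is the stated formula.

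The main obstacle is not the inversion, which is elementary. It is making the first part rigorous, specifically the strict monotonicity and continuity of $z\mapsto\Theta_-(z)+\Theta_+(z)$, so that $\mathfrak{R}^{i,j}_\varepsilon$ is well defined and the tail mass is exactly $\varepsilon$. This is the step that ensures the denominator in \eqref{eq:range_bounded_avar} is correct. I would handle it by noting that the integrand is a Gaussian density times a conditional probability lying strictly in $(0,1)$ whenever $|\rho_{ij}|<1$. This requires excluding the degenerate case $|\rho_{ij}|=1$, which I would treat as a separate limiting case.
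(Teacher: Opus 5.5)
Your proposal is correct and follows the paper's proof essentially step for step. Both arguments normalize by $\Pro(|\bar y_i|\in U_{\delta^*})$, complete the square and integrate out $\bar y_i$ to get the $\Psi_\pm$ terms, use continuity and strict monotonicity of the conditional tail to obtain a unique $\mathfrak{R}^{i,j}_\varepsilon$ with conditional tail mass exactly $\varepsilon$, and write the $\AVAR$ as a ratio with denominator $\varepsilon\,\Pro(|\bar y_i|\in U_{\delta^*})$. Your explicit inversion of $\delta\mapsto c(\delta+1)/(\delta+\alpha)$, with the $\sup\emptyset=0$ convention and the boundary cases, fills in what the paper compresses into ``follows by using \eqref{eq:risk_def_cond}''; note only that $\Theta_-+\Theta_+$ as literally defined in \eqref{eq:Theta_pm} omits the factor $1/\big(\sqrt{2\pi}\,\sigma_j\,\Pro(|\bar y_i|\in U_{\delta^*})\big)$, so your monotonicity and root argument should be applied to the normalized conditional tail, as you effectively do.
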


The above result provides a closed form expression for evaluating the risk of cascading large fluctuations when an agent is dangerously close to failure, i.e., $|\bar{y}_i| \in U_{\delta^*}$. This result can be used to update the existing risk evaluation framework \cite{Somarakis2019g,liu2022risk} when the measurement of the system is vague. 
While the result is in closed form, the presence of nested integrals complicates its practical evaluation. In the remainder of the paper, we focus on more structured cases that enable further analytical insights.

\subsection{Cascading Risk with Partial Network Snapshots}

We refine the analysis of cascading failures by incorporating partial snapshots of agent states. These auxiliary observations offer additional context to assess how specific agent deviations influence the failure risk of others. Unlike prior formulations that assume observed agents have already failed to reach the $c$-consensus \cite{liu2022risk,liu2023cascading}, i.e., $|\bar{y}_i| > c$, our framework generalizes to arbitrary observations, including non-failure states.

To this end, we consider the risk of a cascading large fluctuation at agent $j \notin \mathcal{I}_m$, given exact observations of a subset of agents indexed by $\mathcal{I}_m = \{i_1, \cdots, i_m\} \subset \{1, \dots, n\}$ with $m < n$. Let the observed values be $\bm{y}_f = [y_{f_1}, \dots, y_{f_m}]^\top \in \mathbb{R}^m$, corresponding to the vector of random variables $\bm{\bar{y}}_{\mathcal{I}_m} = [\bar{y}_{i_1}, \dots, \bar{y}_{i_m}]^\top$. We aim to quantify the risk that agent $j$ fails to reach the $c$-consensus, conditioned on this partial snapshot. Then, the event of the cascading large fluctuation is defined as 
\begin{equation*}
    \mathcal{O} := \left\{ |\bar{y}_{j}| \in U_{\delta} \, \big| \, \bar{\bm{y}}_{\mathcal{I}
    _m}= \bm{y}_f \right\},
\end{equation*}
where $\bm{y}_f$ denotes the observed steady-state realization. The corresponding conditional $\VAR_\varepsilon$ \eqref{eq:var_def_cond} is:
\begin{equation} \label{eq:mul_var}
    \mathfrak{R}^{\mathcal{I}_m,j}_\varepsilon := \inf \left\{ z \, \Big| \, \mathbb{P}\{|\bar{y}_j| > z \,\big|\, \bar{\bm{y}}_{\mathcal{I}
    _m}= \bm{y}_f \} < \varepsilon \right\},
\end{equation}
and the conditional $\AVAR_\varepsilon$ \eqref{eq:avar_def_cond} is:
\begin{align}\label{eq:mul_avar}
    \casa = \E \left[ |\bar{y}_{j}| \, \Big |\, |\bar{y}_{j}| > \mathfrak{R}^{\mathcal{I}_m,j}_\varepsilon \wedge \bar{\bm{y}}_{\mathcal{I}
    _m}= \bm{y}_f \right].
\end{align}
The risk of cascading failures $\casm$ is then evaluated by applying these quantities to the level-set definition in \eqref{eq:risk_def_cond}.

To evaluate the terms in \eqref{eq:mul_var}, we consider the conditional distribution of $\bar{y}_j$ given partial observations $\bar{\bm{y}}_{\mathcal{I}_m} = \bm{y}_f$. Let us define the $(m+1) \times (m+1)$ block covariance matrix
\begin{equation}    \label{eq:block_cov}
    \tilde{\Sigma} = \begin{bmatrix}\,
       \tilde{\Sigma}_{11} & \tilde{\Sigma}_{12} \\
       \tilde{\Sigma}_{21} & \tilde{\Sigma}_{22} \,
    \end{bmatrix},
\end{equation}
where $\tilde{\Sigma}_{11} = \sigma_{j}^2$, $\tilde{\Sigma}_{12} = \tilde{\Sigma}_{21}^\top = [\sigma_{j i_1}, \dots, \sigma_{j i_m}]$, and $\tilde{\Sigma}_{22} = [\sigma_{k_1 k_2}]_{k_1,k_2 \in \mathcal{I}_m} \in \mathbb{R}^{m \times m}$. The terms $\sigma_{ij}$ are computed using \eqref{eq:sigma_y}. This structure enables analytical characterization of the conditional statistics of $\bar{y}_j$ given the observed values $\bm{y}_f$.

\begin{lemma}   \label{lem:multi_conditional_prob}
    Suppose the system \eqref{eq:dyn} reaches steady-state. Then, the conditional distribution of $\bar{y}_j$ given $\bm{\bar{y}}_{\mathcal{I}_m} = \bm{y}_f$ is Gaussian, i.e.,
    $
    \bar{y}_j \mid \bm{\bar{y}}_{\mathcal{I}_m} = \bm{y}_f \sim \mathcal{N}(\tilde{\mu}, \tilde{\sigma}^2),
    $
    where
    \begin{equation} \label{eq:lem_mul_cond}
       \tilde{\mu} = \tilde{\Sigma}_{12} \tilde{\Sigma}_{22}^{-1} \bm{y}_f, 
       \qquad
       \tilde{\sigma}^2 = \tilde{\Sigma}_{11} - \tilde{\Sigma}_{12} \tilde{\Sigma}_{22}^{-1} \tilde{\Sigma}_{21},
    \end{equation}
    and the sub-blocks $\tilde{\Sigma}_{11}, \tilde{\Sigma}_{12}, \tilde{\Sigma}_{21}, \tilde{\Sigma}_{22}$ are defined in \eqref{eq:block_cov}.
\end{lemma}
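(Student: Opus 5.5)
The plan is to reduce the claim to the standard Gaussian conditioning formula, applied to the marginal of $\bm{\bar y}$ on the coordinates $\{j\}\cup\mathcal{I}_m$. The argument has three steps: identify the joint law of the relevant sub-vector, justify invertibility of $\tilde\Sigma_{22}$, and then perform a Schur-complement factorization of the joint density.

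\emph{Step 1: joint Gaussian law.} By the discussion preceding Lemma~\ref{lem:y_steady}, the steady-state observable satisfies $\bm{\bar y}\sim\mathcal{N}(0,\Sigma)$, with $\Sigma$ given by \eqref{eq:sigma_y}. Let $P$ be the $(m+1)\times n$ selection matrix with rows $\bm{e}_j^\top,\bm{e}_{i_1}^\top,\dots,\bm{e}_{i_m}^\top$. Then
\[
\bm z := P\bm{\bar y}=[\bar y_j,\bar y_{i_1},\dots,\bar y_{i_m}]^\top \sim \mathcal{N}(0,P\Sigma P^\top).
\]
By construction, $P\Sigma P^\top$ is exactly the block matrix $\tilde\Sigma$ in \eqref{eq:block_cov}, and its mean is zero.

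\emph{Step 2: invertibility of $\tilde\Sigma_{22}$.} This is the only genuinely nontrivial point, because $\Sigma$ itself is singular ($\Sigma\bm 1_n=0$, since $M_n\bm 1_n=0$). Writing $D=\mathrm{diag}(d_k)$ with $d_k=\frac{b^2\cos(\lambda_k\tau)}{2\lambda_k(1-\sin(\lambda_k\tau))}$ for $k\ge2$, we have
\[
\Sigma = M_n Q_{2:n} D Q_{2:n}^\top M_n = Q_{2:n} D Q_{2:n}^\top .
\]
Under Assumption~\ref{asp:stable}, $\lambda_k\tau<\pi/2$, so every $d_k>0$. Hence $\Sigma$ is positive definite on $\bm 1_n^\perp$, and its kernel is exactly $\mathrm{span}(\bm 1_n)$.

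Now take any $\bm v\in\R^m\setminus\{0\}$ and let $\bm u\in\R^n$ be its zero-padded embedding on the index set $\mathcal{I}_m$. Then $\bm v^\top\tilde\Sigma_{22}\bm v=\bm u^\top\Sigma\bm u$. This vanishes only if $\bm u\propto\bm 1_n$, which is impossible for a nonzero $\bm u$ because $\bm u$ has zero entries whenever $m<n$. So $\tilde\Sigma_{22}\succ0$. The same argument with $m+1<n$ shows $\tilde\Sigma\succ0$, and therefore $\tilde\sigma^2>0$. In the edge case $m+1=n$, $\bar y_j$ is a deterministic function of the observations and the conditional law is degenerate, with $\tilde\sigma^2=0$. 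I will note this case, since the formula still holds as a degenerate Gaussian.

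\emph{Step 3: conditioning via the Schur complement.} With $\tilde\Sigma_{22}$ invertible, define the residual
\[
\eta:=\bar y_j-\tilde\Sigma_{12}\tilde\Sigma_{22}^{-1}\bm{\bar y}_{\mathcal I_m}.
\]
The pair $(\eta,\bm{\bar y}_{\mathcal I_m})$ is jointly Gaussian as a linear transform of $\bm z$. Their cross-covariance is $\tilde\Sigma_{12}-\tilde\Sigma_{12}\tilde\Sigma_{22}^{-1}\tilde\Sigma_{22}=0$, so $\eta$ is independent of $\bm{\bar y}_{\mathcal I_m}$. Moreover $\eta\sim\mathcal N(0,\tilde\Sigma_{11}-\tilde\Sigma_{12}\tilde\Sigma_{22}^{-1}\tilde\Sigma_{21})$.

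Conditioning on $\bm{\bar y}_{\mathcal I_m}=\bm y_f$ then gives
\[
\bar y_j=\tilde\Sigma_{12}\tilde\Sigma_{22}^{-1}\bm y_f+\eta .
\]
This yields the Gaussian law with the $\tilde\mu$ and $\tilde\sigma^2$ stated in \eqref{eq:lem_mul_cond}.

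An equivalent route is to factor the joint density using the block-inverse formula for $\tilde\Sigma$ and complete the square in $\bar y_j$.
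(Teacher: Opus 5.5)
Your proposal is correct and follows the same route as the paper. The paper's proof simply cites Lemma~\ref{lem:y_steady} for joint Gaussianity and then invokes the textbook conditioning formula for multivariate normals. Your Step~2 adds something the paper uses without justification: it shows $\ker\Sigma=\mathrm{span}(\bm 1_n)$ under Assumption~\ref{asp:stable}, hence $\tilde\Sigma_{22}\succ 0$ whenever $m<n$, and it correctly flags the degenerate case $m=n-1$ where $\tilde\sigma^2=0$.
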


The above lemma enables closed-form computation of the conditional distribution of agent $j$ given partial observations from agents indexed by $\mathcal{I}_m$. Using this, we can now derive the corresponding risk of cascading large fluctuations.

\begin{theorem} \label{thm:mul_cas_ren_risk}
    Suppose the system \eqref{eq:dyn} reaches steady state, and agents indexed by $\mathcal{I}_m$ are observed at $\bar{\bm{y}}_{\mathcal{I}_m} = \bm{y}_f$. Then, the risk of cascading large fluctuation for agent $j \notin \mathcal{I}_m$ is given by:
    \begin{equation} \label{eq:casa}
        \casm := \begin{cases}
                0 &\text{if }~ \casa \leq \frac{c}{\alpha}\\
                \frac{\alpha \, \casa - c}{c - \casa} &\text{if }~ \casa \in  \left(\frac{c}{\alpha},c \right)\\
                \infty &\text{if }~ \casa \geq c
                \end{cases},
    \end{equation}
    where 
    \begin{equation*}
        \scalebox{0.85}{$
            \casa = \frac{\tilde{\sigma}}{\sqrt{2\pi}\varepsilon} \left[e^{-\frac{(\gamma + \tilde{\mu})^2}{2\tilde{\sigma}^2}} + e^{-\frac{(\gamma - \tilde{\mu})^2}{2\tilde{\sigma}^2}} + \frac{\sqrt{\pi}\tilde{\mu}}{\sqrt{2}\tilde{\sigma}}\left( \textup{erf}\left(\frac{\gamma + \tilde{\mu}}{\sqrt{2}\tilde{\sigma}}\right) - \textup{erf}\left(\frac{\gamma - \tilde{\mu}}{\sqrt{2}\tilde{\sigma}}\right)\right)\right],
        $}
    \end{equation*}
    and $\gamma$ is the unique solution of $\erf\left(\frac{\gamma-\tilde{\mu}}{\sqrt{2}\tilde{\sigma}}\right) + \erf\left(\frac{\gamma+\tilde{\mu}}{\sqrt{2}\tilde{\sigma}}\right) = 2(1-\varepsilon)$. The terms $\tilde{\mu}$ and $\tilde{\sigma}^2$ are as defined in Lemma \ref{lem:multi_conditional_prob}.
\end{theorem}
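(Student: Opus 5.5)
By Lemma~\ref{lem:multi_conditional_prob}, conditionally on $\bar{\bm{y}}_{\mathcal{I}_m}=\bm{y}_f$ we have $\bar y_j\sim\mathcal{N}(\tm,\ts^2)$. So the whole argument reduces to a one-dimensional computation for the folded normal variable $|\bar y_j|$. The plan has three steps: compute the conditional $\VAR_\varepsilon$, compute the conditional $\AVAR_\varepsilon$ as a truncated first moment, and invert the level-set definition \eqref{eq:risk_def_cond}. Here $\ts>0$ whenever $\tilde\Sigma$ is nonsingular, which we assume so that the Schur complement is positive.

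\textbf{Step 1: the $\VAR_\varepsilon$.} For $z\ge0$,
\[
\Pro(|\bar y_j|\le z\mid\cdot)=\tfrac12\Big[\erf\Big(\tfrac{z-\tm}{\sqrt2\ts}\Big)+\erf\Big(\tfrac{z+\tm}{\sqrt2\ts}\Big)\Big].
\]
Call this $F(z)$. It is continuous and strictly increasing, with $F(0)=0$ and $F(z)\to1$ as $z\to\infty$. Hence $\Pro(|\bar y_j|>z\mid\cdot)=1-F(z)$ is continuous and strictly decreasing from $1$ to $0$. The infimum in \eqref{eq:mul_var} is therefore attained at the unique root $\gamma$ of $1-F(\gamma)=\varepsilon$, which is exactly the stated equation $\erf(\cdot)+\erf(\cdot)=2(1-\varepsilon)$. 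The same continuity gives $\Pro(|\bar y_j|>\gamma\mid\cdot)=\varepsilon$ exactly.

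\textbf{Step 2: the $\AVAR_\varepsilon$.} Using Step 1, write
\[
\casa=\frac1\varepsilon\Big(\int_\gamma^\infty x\,\phi(x)\,dx+\int_{-\infty}^{-\gamma}(-x)\,\phi(x)\,dx\Big),
\]
where $\phi$ is the $\mathcal{N}(\tm,\ts^2)$ density. In each integral substitute $x=\tm+\ts u$ (respectively $x=-(\ts u-\tm)$ for the left tail). Each integral then splits into two parts:
\begin{itemize}
\item a term $\ts\int u\,\varphi(u)\,du$, which evaluates to $\frac{\ts}{\sqrt{2\pi}}e^{-(\gamma\mp\tm)^2/(2\ts^2)}$;
\item a term $\pm\tm$ times a Gaussian tail mass, equal to $\frac{\tm}{2}\big(1-\erf(\tfrac{\gamma-\tm}{\sqrt2\ts})\big)$ for the right tail and $-\frac{\tm}{2}\big(1-\erf(\tfrac{\gamma+\tm}{\sqrt2\ts})\big)$ for the left tail.
\end{itemize}
When the two tails are added, the constant $\pm\tm/2$ terms cancel. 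What remains is $\frac{\tm}{2}\big(\erf(\tfrac{\gamma+\tm}{\sqrt2\ts})-\erf(\tfrac{\gamma-\tm}{\sqrt2\ts})\big)$. Factoring out $\frac{\ts}{\sqrt{2\pi}\varepsilon}$ turns the coefficient $\frac{\tm}{2}$ into $\frac{\sqrt\pi\,\tm}{\sqrt2\,\ts}$, which reproduces the displayed formula. The main obstacle is bookkeeping: keeping the signs consistent in the left-tail substitution and in this cancellation.

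\textbf{Step 3: the risk level.} By \eqref{eq:risk_def_cond}, $\casm=\sup\{\delta\ge0: \casa>c\frac{\delta+1}{\delta+\alpha}\}$. Since $\alpha>1$, the map $g(\delta)=c\frac{\delta+1}{\delta+\alpha}$ is strictly increasing, with $g(0)=c/\alpha$ and $g(\delta)\to c$ as $\delta\to\infty$. This gives three cases:
\begin{itemize}
\item If $\casa\le c/\alpha$, the set is empty. We adopt the convention that the supremum is $0$ in this case.
\item If $\casa\ge c$, every $\delta$ qualifies, so the supremum is $\infty$.
\item Otherwise, solving $g(\delta)=\casa$ gives $\delta=\frac{\alpha\casa-c}{c-\casa}$.
\end{itemize}
This is the same argument as the case analysis in Theorem~\ref{thm:gen_cas_ren_risk}.
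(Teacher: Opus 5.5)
Your proposal is correct and follows essentially the same route as the paper: the folded-normal CDF of $\bar y_j \mid \bar{\bm y}_{\mathcal I_m}=\bm y_f$ fixes the $\VAR_\varepsilon$ as the root $\gamma$, the $\AVAR_\varepsilon$ is the truncated first moment of $|\bar y_j|$ over $\{|\bar y_j|>\gamma\}$ divided by $\varepsilon$, and the three cases come from inverting the increasing map $\delta\mapsto c\frac{\delta+1}{\delta+\alpha}$. The differences are cosmetic: you evaluate the truncated-normal integrals by direct substitution where the paper cites Greene's Theorem 22.2, and you state the nondegeneracy $\ts>0$ explicitly, which the paper leaves implicit (it holds automatically for $m\le n-2$ because proper principal submatrices of $\Sigma$ are positive definite, and fails only when $m=n-1$).
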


The three cases in \eqref{eq:casa} represent qualitatively distinct risk profiles. The case $\casm = 0$ indicates the scenario in which the $\AVAR_\varepsilon$ of the $j$'th agent failing to reach $c-$consensus is always less than $\frac{c}{\alpha}$, which commonly corresponds to a low confidence level or the conditional distribution of $\bar{y}_j$ concentrated away from the alarm zone $U_0$. The case of $\casm = \infty$ indicates that the $\AVAR_\varepsilon$ of the agent $j$ to be found inside the unsafe set $U_{\infty} = U$. In last case, the risk of cascading large fluctuation obtains a positive real value, and a higher value of $\casm$ indicates a higher chance that the cascading failure will occur in the system \eqref{eq:dyn}. The network-wide risk of cascading failure profile can be compactly expressed as a vector:
\begin{equation*}
    \bm{\mathcal{A}}^{\mathcal{I}_m}_{\varepsilon} = 
    \big[\mathcal{A}^{\mathcal{I}_m,1}_{\varepsilon}, \dots, \mathcal{A}^{\mathcal{I}_m,n}_{\varepsilon} \big]^\top,
\end{equation*}
where $\mathcal{A}^{\mathcal{I}_m,j}_{\varepsilon} = 0$ if $j \in \mathcal{I}_m$. When $m = 1$, this result reduces to the case analyzed in \cite{liu2022risk}.

\section{Risk of Cascading Large Fluctuations under Special Graph Topologies}
\label{sec:special-graphs}

The communication graph topology plays a key role in how time-delays and disturbances propagate through the network. This section analyzes cascading failure risk under several canonical topologies, highlighting how structural features impact the network's vulnerability to large deviations.

\subsection{The Complete Graph}
Consider a network with an unweighted complete communication graph. The Laplacian matrix has eigenvalues $\lambda_1 = 0$ and $\lambda_j = n$ for $j = 2,\dots,n$. The steady-state statistics of the network observables $\bar{\bm{y}}$ admit a closed-form expression as follows.

\begin{lemma} \label{lem:sig_complete}
    For a network \eqref{eq:dyn} with complete graph topology at steady-state, the observable satisfies $\bm{\bar{y}} \sim \N(0, \Sigma)$, where the covariance matrix $\Sigma$ has entries
    \[
        \sigma_{ij} = 
        \begin{cases}
            \frac{n-1}{2n^2} \frac{\cos (n \tau) \, b^2}{1 - \sin (n\tau)}, &\text{if } i = j\\[5pt]
            -\frac{1}{2n^2} \frac{\cos (n \tau) \, b^2}{1 - \sin (n\tau)}, &\text{if } i \neq j
        \end{cases}
    \]
    for all $i, j = 1,\dots,n$.
\end{lemma}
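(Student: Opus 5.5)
The proof is a direct specialization of Lemma~\ref{lem:y_steady}. For the unweighted complete graph, $L = nI_n - \bm{1}_n\bm{1}_n^\top = n M_n$. Hence $\lambda_k = n$ for all $k = 2,\dots,n$, and $\{\bm{q}_2,\dots,\bm{q}_n\}$ can be any orthonormal basis of $\bm{1}_n^\perp$. Since every nonzero eigenvalue equals $n$, the spectral factor in \eqref{eq:sigma_y} is the same for every $k\ge 2$. It therefore pulls out of the sum:
\[
\sigma_{ij} = \frac{b^2}{2}\,\frac{\cos(n\tau)}{n(1-\sin(n\tau))}\sum_{k=2}^{n}(\bm{m}_i^\top\bm{q}_k)(\bm{m}_j^\top\bm{q}_k).
\]
Assumption~\ref{asp:stable} with $\lambda_n = n$ gives $n\tau<\pi/2$, so $1-\sin(n\tau)>0$ and the expression is well defined.

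The remaining task is the sum $\sum_{k\ge 2}(\bm{m}_i^\top\bm{q}_k)(\bm{m}_j^\top\bm{q}_k) = \bm{m}_i^\top\big(\sum_{k\ge2}\bm{q}_k\bm{q}_k^\top\big)\bm{m}_j$. Orthonormality of $Q$ with $\bm{q}_1 = \bm{1}_n/\sqrt n$ gives $\sum_{k\ge 2}\bm{q}_k\bm{q}_k^\top = I_n - \frac1n\bm{1}_n\bm{1}_n^\top = M_n$. The sum thus equals $\bm{e}_i^\top M_n M_n M_n\bm{e}_j = (M_n)_{ij}$, because $M_n$ is a symmetric idempotent projection. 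This entry is $1-\frac1n = \frac{n-1}{n}$ if $i=j$ and $-\frac1n$ otherwise. Substituting gives the coefficients $\frac{n-1}{2n^2}$ and $-\frac{1}{2n^2}$ as claimed. The result is independent of the choice of eigenbasis, which settles the ambiguity from the repeated eigenvalue.

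There is no real obstacle. The only point needing care is noting that the eigenvector basis is not unique under the repeated eigenvalue $n$, and that the projector identity makes the answer basis-independent. Zero mean and Gaussianity of $\bm{\bar y}$ are inherited directly from the general steady-state statement preceding Lemma~\ref{lem:y_steady}.
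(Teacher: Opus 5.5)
Your proposal is correct and follows essentially the same route as the paper. Both arguments specialize Lemma~\ref{lem:y_steady} to $\lambda_k = n$, pull the common spectral factor out of the sum, and evaluate $\sum_{k\ge 2}(\bm{m}_i^\top\bm{q}_k)(\bm{m}_j^\top\bm{q}_k)$ using orthonormality of $Q$ with $\bm{q}_1 = \bm{1}_n/\sqrt{n}$. Your projector identity $\sum_{k\ge 2}\bm{q}_k\bm{q}_k^\top = M_n$ is a cleaner form of the paper's step, which writes the sum loosely as $(QQ^\top)_{ij} - \bm{q}_1^\top\bm{q}_1$ when $(\bm{q}_1\bm{q}_1^\top)_{ij} = 1/n$ is what is meant. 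Your remarks on basis independence and on $n\tau < \pi/2$ are also useful additions.
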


Given observations $\bar{\bm{y}}_{\mathcal{I}_m} = \bm{y}_f$ from agents indexed by $\mathcal{I}_m$, the conditional distribution of $\bar{y}_j$ is characterized below.

\begin{lemma} \label{lem:complete_mu_sig_cond}
The conditional distribution of $\bar{y}_j \, | \, \bm{\bar{y}}_{\mathcal{I}_m} = \bm{y}_{f}$ follows $\N(\tilde{\mu},\tilde {\sigma}^2)$, in which
    \begin{equation}    \label{eq:cond_complete}
        \tilde{\mu} = \frac{- \mathbf{1}_{m}^\top \bm{y}_f}{n-m}~ \text{, and}~\tilde{\sigma}^2 = \sigj^2 \left({1 - \frac{m}{(n-1)(n-m)}} \right).   
    \end{equation}
\end{lemma}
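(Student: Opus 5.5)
The plan is to specialize Lemma~\ref{lem:multi_conditional_prob} to the covariance structure of Lemma~\ref{lem:sig_complete}. Write $s := \frac{1}{2n^2}\frac{\cos(n\tau)\,b^2}{1-\sin(n\tau)}$, so that $\sigma_j^2=(n-1)s$ and $\sigma_{k_1k_2}=-s$ for $k_1\neq k_2$. Then the blocks in \eqref{eq:block_cov} take the explicit form
\[
\tilde\Sigma_{11}=(n-1)s,\qquad \tilde\Sigma_{12}=-s\,\bm 1_m^\top,\qquad \tilde\Sigma_{22}=s\big(nI_m-\bm 1_m\bm 1_m^\top\big).
\]
Everything therefore reduces to inverting a matrix of the form ``identity plus rank one''. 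Assumption~\ref{asp:stable} gives $n\tau<\pi/2$ (since $\lambda_n=n$), so $s>0$.

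First, I will invert $\tilde\Sigma_{22}$ by the Sherman--Morrison formula:
\[
\big(nI_m-\bm 1_m\bm 1_m^\top\big)^{-1}=\frac1n\Big(I_m+\frac{\bm 1_m\bm 1_m^\top}{n-m}\Big).
\]
This is valid because $m<n$, which makes the denominator $1-m/n$ nonzero; the same condition is what makes $\tilde\Sigma_{22}$ invertible at all. Equivalently, one can check directly that $\bm 1_m$ is an eigenvector with eigenvalue $n-m$, while the orthogonal complement has eigenvalue $n$.

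Next I compute $\tilde\Sigma_{12}\tilde\Sigma_{22}^{-1}$. The factors of $s$ cancel, and using $\bm 1_m^\top\bm 1_m=m$,
\[
-\bm 1_m^\top\cdot\frac1n\Big(I_m+\frac{\bm 1_m\bm 1_m^\top}{n-m}\Big)
=-\frac1n\Big(1+\frac{m}{n-m}\Big)\bm 1_m^\top
=-\frac{1}{n-m}\bm 1_m^\top .
\]
Applying this to $\bm y_f$ gives $\tilde\mu=-\bm 1_m^\top\bm y_f/(n-m)$.

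For the variance, I multiply on the right by $\tilde\Sigma_{21}=-s\bm 1_m$, which gives $\tilde\Sigma_{12}\tilde\Sigma_{22}^{-1}\tilde\Sigma_{21}=sm/(n-m)$. Hence
\[
\tilde\sigma^2=(n-1)s-\frac{sm}{n-m}=\sigma_j^2\Big(1-\frac{m}{(n-1)(n-m)}\Big),
\]
after factoring out $\sigma_j^2=(n-1)s$.

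No step is a real obstacle; the only point needing care is the closed-form inverse of $\tilde\Sigma_{22}$ and the bookkeeping of the scalar $s$. I would also remark that $\tilde\sigma^2>0$ is guaranteed because $(n-1)(n-m)>m$ for $1\le m\le n-1$ except when $m=n-1$. In that case $\tilde\sigma^2=0$, which is consistent with the constraint $\bm 1^\top\bar{\bm y}=0$ forcing $\bar y_j=-\bm 1^\top\bm y_f$ exactly.
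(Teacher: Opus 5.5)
Your proof is correct and follows the paper's route. Like the paper, you write $\tilde\Sigma_{22}$ as a scaled identity-plus-rank-one matrix, invert it by Sherman--Morrison to get $\frac{1}{sn}\big(I_m+\frac{\bm 1_m\bm 1_m^\top}{n-m}\big)$, and substitute into Lemma~\ref{lem:multi_conditional_prob}; your explicit bookkeeping of $s$ also avoids the paper's slip of writing the prefactor of $\tilde\Sigma_{22}$ with $\cos(n\tau)$ and $1-\sin(n\tau)$ inverted, and your remark that $\tilde\sigma^2=0$ exactly when $m=n-1$ is right.
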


Applying Lemma \ref{lem:complete_mu_sig_cond} with Theorem \ref{thm:mul_cas_ren_risk} yields a closed-form expression for the risk of cascading failure $\casm$ under complete graph topology. Notably, the conditional statistics and resulting risk $\casm$ are invariant to the location of failed agents in $\mathcal{I}_m$, as confirmed by the numerical results in Fig.~\ref{fig:risk_num}.

\subsection{The Star Graph}
Consider a network with a star communication topology, where agent $n$ is the central node and agents $1,\dots,n-1$ lie on the periphery. The Laplacian eigenvalues are $\lambda_1 = 0$, $\lambda_j = 1$ for $j = 2,\dots,n-1$, and $\lambda_n = n$. While the star graph has the same sparsity as a path or 1-cycle graph, it remains connected even if some peripheral agents are disconnected. For notational convenience, define
\begin{equation} \label{eq:g}
    g(x) = \frac{\cos{x}}{1 - \sin{x}}.
\end{equation}
The steady-state covariance of the observables $\bar{\bm{y}}$ is given below.

\begin{lemma} \label{lem:sig_star}
    For a network \eqref{eq:dyn} with star topology at steady-state, the observable satisfies $\bm{\bar{y}} \sim \N(0, \Sigma)$, where the covariance matrix $\Sigma$ has the following structure:
    \[
        \sigma_{ij} = 
        \begin{cases}
            \frac{b^2}{2n(n-1)}\left(n(n-2)g(\tau) + \frac{1}{n}g(n\tau)\right), & \text{if } i = j \\[5pt]
            \frac{b^2}{2n(n-1)}\left(-n g(\tau) + \frac{1}{n}g(n\tau)\right), & \text{if } i \neq j
        \end{cases}
    \]
    for $i,j = 1,\dots,n-1$, and
    \[
        \sigma_{in} =
        \begin{cases}
            \frac{b^2(n-1)}{2n^2} g(n\tau), & \text{if } i = n \\[3pt]
            -\frac{b^2}{2n^2} g(n\tau), & \text{if } i \neq n.
        \end{cases}
    \]
\end{lemma}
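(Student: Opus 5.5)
The plan is to specialize Lemma~\ref{lem:y_steady} to the star spectrum, exploiting the fact that the eigenvalue $\lambda=1$ has multiplicity $n-2$. Then we never need an explicit basis of that eigenspace, only its orthogonal projector.

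First, simplify the summand of \eqref{eq:sigma_y}. For $k\ge 2$, $\bm{q}_k\perp \bm{1}_n$, so $\bm{m}_i^\top \bm{q}_k = \bm{e}_i^\top M_n \bm{q}_k = \bm{e}_i^\top \bm{q}_k$. Hence, with $g$ as in \eqref{eq:g},
\[
\Sigma = \frac{b^2}{2}\sum_{k=2}^{n} \frac{g(\lambda_k\tau)}{\lambda_k}\,\bm{q}_k\bm{q}_k^\top .
\]
Next, identify the eigenvector for $\lambda_n=n$. With the hub labelled $n$, it is
\[
\bm{q}_n = \frac{1}{\sqrt{n(n-1)}}\,[1,\dots,1,-(n-1)]^\top .
\]
One checks directly that $L\bm{q}_n = n\bm{q}_n$: leaves have Laplacian row $\bm{e}_i-\bm{e}_n$, and the hub has row $(n-1)\bm{e}_n-\sum_{i<n}\bm{e}_i$. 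The normalization follows from $(n-1)+(n-1)^2=n(n-1)$.

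The remaining eigenvectors $\bm{q}_2,\dots,\bm{q}_{n-1}$ all have $\lambda_k=1$ and $g(\lambda_k\tau)=g(\tau)$. Their contribution is therefore $g(\tau)P$, where, by orthonormality of $Q$,
\[
P=\sum_{k=2}^{n-1}\bm{q}_k\bm{q}_k^\top = I_n - \bm{q}_1\bm{q}_1^\top - \bm{q}_n\bm{q}_n^\top .
\]
This avoids choosing any particular basis of the degenerate eigenspace, which is the only potential obstacle. We then obtain
\[
\Sigma = \frac{b^2}{2}\Big(g(\tau)\,P + \frac{g(n\tau)}{n}\,\bm{q}_n\bm{q}_n^\top\Big).
\]

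The last step is entrywise arithmetic in the two cases of the statement.

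\emph{Leaf entries.} For $i,j<n$, $P_{ij}=\delta_{ij}-\frac1n-\frac{1}{n(n-1)}=\delta_{ij}-\frac{1}{n-1}$ and $(\bm{q}_n\bm{q}_n^\top)_{ij}=\frac{1}{n(n-1)}$. This gives
\[
\sigma_{ii}=\frac{b^2}{2}\Big(\frac{n-2}{n-1}g(\tau)+\frac{g(n\tau)}{n^2(n-1)}\Big),
\qquad
\sigma_{ij}=\frac{b^2}{2}\Big(-\frac{1}{n-1}g(\tau)+\frac{g(n\tau)}{n^2(n-1)}\Big) \quad (i\neq j).
\]
Factoring out $\frac{b^2}{2n(n-1)}$ yields the claimed leaf expressions.

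\emph{Hub entries.} For the hub we get $P_{nn}=1-\frac1n-\frac{n-1}{n}=0$ and, for $i<n$, $P_{in}=-\frac1n+\frac1n=0$. The hub therefore only sees the $\lambda_n$ mode, with $(\bm{q}_n\bm{q}_n^\top)_{nn}=\frac{n-1}{n}$ and $(\bm{q}_n\bm{q}_n^\top)_{in}=-\frac1n$. These give $\sigma_{nn}=\frac{b^2(n-1)}{2n^2}g(n\tau)$ and $\sigma_{in}=-\frac{b^2}{2n^2}g(n\tau)$.

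Assumption~\ref{asp:stable} ensures $n\tau<\pi/2$, so $g(\tau)$ and $g(n\tau)$ are finite and positive and all expressions are well defined.
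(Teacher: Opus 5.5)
Your proof is correct and takes the same route as the paper's: you specialize Lemma~\ref{lem:y_steady} to the star spectrum, replace the degenerate $\lambda=1$ modes by the projector $I_n-\bm{q}_1\bm{q}_1^\top-\bm{q}_n\bm{q}_n^\top$, and read off the entries. You also write out what the paper leaves to ``simplification'' (the explicit hub eigenvector $\bm{q}_n$, the entries $P_{nn}=P_{in}=0$, and the leaf arithmetic), and all of it checks out.
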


The conditional distribution of $\bar{y}_j$ given partial observations $\bar{\bm{y}}_{\mathcal{I}_m} = \bm{y}_f$ depends on the location of the failed agents as detailed below.

\begin{lemma} \label{lem:star_mu_sig}
    The conditional distribution $\bar{y}_j \,|\, \bar{\bm{y}}_{\mathcal{I}_m} = \bm{y}_f$ follows a normal distribution $\N(\tilde{\mu}, \tilde{\sigma}^2)$, with the following cases:
    
    \noindent \underline{Case (i):} All $m$ failures are on the periphery:
    \[
        \tilde{\mu} =
        \begin{cases}
            \frac{-(n - 1)g(n\tau)}{\Delta} \mathbf{1}_m^\top \bm{y}_f, & \text{if } j = n \\[3pt]
            \frac{-n^2 g(\tau) + g(n\tau)}{\Delta} \mathbf{1}_m^\top \bm{y}_f, & \text{if } j \neq n
        \end{cases}, 
    \]
    and
    \[
        \tilde{\sigma}^2 =
        \begin{cases}
            \sigma_n^2 \frac{n^2(n - m - 1)g(\tau)}{\Delta}, & \text{if } j = n \\[5pt]
            \frac{b^2 g(\tau)}{2} \left(1 + \frac{g(n\tau)}{\Delta}\right), & \text{if } j \neq n
        \end{cases},
    \]
    where $\Delta = n^2(n - m - 1)g(\tau) + m g(n\tau)$ and $\mathbf{1}_m \in \R^m$ is the all-ones vector.

    \noindent \underline{Case (ii):} One failure is at the center ($i = n$), and $m-1$ are on the periphery:
    \[
        \tilde{\mu} = \frac{- \mathbf{1}_m^\top \bm{y}_f}{n - m}, \qquad
        \tilde{\sigma}^2 = \frac{1}{2} b^2 g(\tau) \left(1 - \frac{1}{n - m}\right),
    \]
    where $\bm{y}_f(m)$, the last element of $\bm{y}_f$, corresponds to the central agent's observable.
\end{lemma}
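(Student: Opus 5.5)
The plan is to obtain both cases directly from Lemma~\ref{lem:multi_conditional_prob}, substituting the explicit covariance entries of Lemma~\ref{lem:sig_star} into $\tilde{\mu} = \tilde{\Sigma}_{12}\tilde{\Sigma}_{22}^{-1}\bm{y}_f$ and $\tilde{\sigma}^2 = \tilde{\Sigma}_{11}-\tilde{\Sigma}_{12}\tilde{\Sigma}_{22}^{-1}\tilde{\Sigma}_{21}$. The key observation is that the star covariance is a low-rank perturbation of a scaled identity, so every inverse can be computed with Sherman--Morrison or a Schur complement instead of directly.

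\textbf{Step 1 (rewrite the peripheral block).} Write $\kappa := \frac{b^2}{2n(n-1)}$. For peripheral indices, the diagonal entry minus the off-diagonal entry is $\kappa\, n(n-1) g(\tau) = \frac{b^2 g(\tau)}{2}$. Hence the covariance among any set of peripheral agents has the form $s I + e\,\mathbf{1}\mathbf{1}^\top$, with
\[
s = \frac{b^2 g(\tau)}{2}, \qquad e = \kappa\left(-n g(\tau) + \tfrac{1}{n} g(n\tau)\right).
\]
The covariance between a peripheral agent and the center is the constant $-\frac{b^2}{2n^2} g(n\tau)$.

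\textbf{Step 2 (Case (i)).} Here $\tilde{\Sigma}_{22} = s I_m + e\,\mathbf{1}_m\mathbf{1}_m^\top$. Sherman--Morrison gives
\[
\tilde{\Sigma}_{22}^{-1} = \frac{1}{s}\Big(I_m - \frac{e}{s+me}\,\mathbf{1}_m\mathbf{1}_m^\top\Big).
\]
\begin{itemize}
\item For $j = n$, $\tilde{\Sigma}_{12} = -\frac{b^2 g(n\tau)}{2n^2}\mathbf{1}_m^\top$.
\item For a peripheral $j \notin \mathcal{I}_m$, $\tilde{\Sigma}_{12} = e\,\mathbf{1}_m^\top$.
\end{itemize}
In both cases $\mathbf{1}_m^\top\tilde{\Sigma}_{22}^{-1} = \frac{1}{s+me}\mathbf{1}_m^\top$. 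Therefore $\tilde{\mu}$ is a scalar multiple of $\mathbf{1}_m^\top\bm{y}_f$, and $\tilde{\sigma}^2$ equals the prior variance minus $m(\cdot)^2/(s+me)$.

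Clearing the factor $2n^2(n-1)/b^2$, I expect $s + me$ to be proportional to
\[
\Delta = n^2(n-m-1)g(\tau) + m g(n\tau).
\]
This yields the stated coefficients $-(n-1)g(n\tau)/\Delta$ and $(-n^2 g(\tau)+g(n\tau))/\Delta$. The variance formulas should follow from the same simplification: for $j=n$ the $g(n\tau)^2$ terms cancel, leaving the factor $n^2(n-m-1)g(\tau)/\Delta$ multiplying $\sigma_n^2$. For peripheral $j$ the variance should reduce to $s\,(1 + g(n\tau)/\Delta)$.

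\textbf{Step 3 (Case (ii)).} I would order the observed agents as the $m-1$ peripheral ones followed by the center, and write
\[
\tilde{\Sigma}_{22} = \begin{bmatrix} s I_{m-1} + e\mathbf{1}\mathbf{1}^\top & \beta\mathbf{1} \\ \beta\mathbf{1}^\top & \sigma_n^2 \end{bmatrix}, \qquad \beta = -\frac{b^2}{2n^2}g(n\tau).
\]
I would invert this with the block (Schur-complement) formula, using Step~2's inverse for the upper-left block. The target agent $j$ is peripheral, so $\tilde{\Sigma}_{12} = [e\,\mathbf{1}_{m-1}^\top,\ \beta]$.

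A cleaner route, which I would try first, is to look for a vector $\bm{v}$ with $\tilde{\Sigma}_{22}\bm{v} = \tilde{\Sigma}_{21}$ and the ansatz $\bm{v} = -\frac{1}{n-m}\mathbf{1}_m$. If the ansatz solves the system, it gives $\tilde{\mu} = -\mathbf{1}_m^\top\bm{y}_f/(n-m)$ immediately. This guess is motivated by the constraint $\mathbf{1}_n^\top\bar{\bm{y}} = 0$ and the exchangeability of the $n-m$ unobserved peripheral agents once the center is known.

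To check the ansatz, I need to verify two identities componentwise:
\[
s + (m-1)e + \beta = -(n-m)e, \qquad (m-1)\beta + \sigma_n^2 = -(n-m)\beta.
\]
Both reduce to the relations $s + (n-1)e + \beta = 0$ and $(n-1)\beta + \sigma_n^2 = 0$, which are exactly the zero row sums of $\Sigma$ (since $\Sigma\mathbf{1}_n = 0$). The variance then follows as
\[
\tilde{\sigma}^2 = \sigma_j^2 - \tilde{\Sigma}_{12}\bm{v} = \sigma_j^2 + \frac{(m-1)e+\beta}{n-m}.
\]
This should simplify to $s\big(1-\frac{1}{n-m}\big)$, because $\sigma_j^2 = s + e$ and $\beta = -s-(n-1)e$.

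The same row-sum trick may also shortcut Step~2. I expect the main obstacle to be the algebraic bookkeeping in Case~(i): matching $s+me$ and the variance reductions exactly to the stated $\Delta$ form, where the $g(n\tau)/n$ terms must combine correctly.
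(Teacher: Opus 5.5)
The step that fails is your expectation, in Case (i) with peripheral $j$, that the variance ``should reduce to $s\,(1+g(n\tau)/\Delta)$.'' No bookkeeping will get you there, because that entry of the statement is wrong. Your own Sherman--Morrison computation gives
\[
\tilde\sigma^2=(s+e)-\frac{m e^2}{s+me}=s\Big(1+\frac{e}{s+me}\Big)=\frac{b^2 g(\tau)}{2}\Big(1+\frac{g(n\tau)-n^2 g(\tau)}{\Delta}\Big).
\]
Equivalently, $\tilde\sigma^2=\frac{b^2 g(\tau)}{2}\cdot\frac{n^2(n-m-2)g(\tau)+(m+1)g(n\tau)}{\Delta}$. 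Note that $e/(s+me)$ is exactly the mean coefficient you already derived. This is smaller than the stated value by $\frac{b^2 g(\tau)}{2}\cdot\frac{n^2 g(\tau)}{\Delta}>0$.

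A quick check confirms it. For $m=1$ the conditional variance is simply $\sigma_j^2-\sigma_{ij}^2/\sigma_i^2=(s+e)-e^2/(s+e)$, which is your expression. Now take $n=3$, $m=1$ and small $\tau$, so $g(\tau),g(3\tau)\approx 1$, $\Delta\approx 10$ and $e\approx-\frac{4}{9}s$. The stated formula gives about $1.1\,s$. That exceeds even the unconditional variance $\sigma_j^2=s+e\approx\frac{5}{9}s$, which no Gaussian conditional variance can do; your expression gives the consistent value $\approx 0.2\,s$. The paper's proof is only a one-line pointer to Schur-complement inversion, so it contains no computation producing the stated form. The correct conclusion of your argument is the corrected peripheral variance above.

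Everything else in your plan checks out:
\begin{itemize}
\item Since $s+me=\frac{b^2}{2n^2(n-1)}\Delta$, both Case (i) means come out as stated.
\item The Case (i) variance for $j=n$ also matches, because the subtracted term is $m\beta^2/(s+me)=\sigma_n^2\,m\,g(n\tau)/\Delta$.
\item Your Case (ii) argument is correct and tidier than the paper's route. The paper only points to inverting the bordered block $\tilde\Sigma_{22}$ via a Schur complement. Your ansatz $\bm{v}=-\frac{1}{n-m}\mathbf{1}_m$ instead reduces everything to the zero row sums $\Sigma\mathbf{1}_n=0$. It yields $\tilde\mu=-\mathbf{1}_m^\top\bm{y}_f/(n-m)$ and $\tilde\sigma^2=s\big(1-\frac{1}{n-m}\big)$ with no inversion.
\item Uniqueness of that solution holds because $\ker\Sigma=\mathrm{span}\{\mathbf{1}_n\}$ and $m<n$, so $\tilde\Sigma_{22}$ is nonsingular.
\end{itemize}
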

These results can be used in conjunction with Theorem \ref{thm:mul_cas_ren_risk} to compute the closed-form expression for $\casm$ under the star topology.

While other classical graphs such as paths or cycles admit explicit Laplacian spectra, the delay-modified spectral sums arising in the risk expressions do not simplify analytically, preventing explicit scaling characterizations.

\section{Efficient Single-Step Update Law for Calculating Risk of Cascading Failures }  \label{sec:update}

We consider the scenario where $m$ agents indexed by $\mathcal{I}_m$ are already in failure states, and we aim to update the conditional distribution $\bar{y}_{j} \,|\, \bar{\bm{y}}_{\mathcal{I}_m} 
= \bm{y}_f$ when an additional failure is detected at agent $k \notin \mathcal{I}_m$, $k \neq j$. Instead of recomputing the full conditional distribution via Lemma \ref{lem:multi_conditional_prob}, which involves inverting an $(m+1)\times(m+1)$ covariance submatrix, we derive an efficient update law that incrementally adjusts the conditional statistics.

To this end, define:
\begin{equation}    \label{eq:j_and_k}
    \begin{aligned}
        &\tilde{\mu}_j = \tilde{\Sigma}_{12}(j)\tilde{\Sigma}_{22}^{-1}\bm{y}_f,
        ~~ \tilde{\sigma}^2_j = \sigj^2 -  \tilde{\Sigma}_{12}(j)\tilde{\Sigma}_{22}^{-1}\tilde{\Sigma}_{21}(j),\\
        &\tilde{\mu}_k = \tilde{\Sigma}_{12}(k)\tilde{\Sigma}_{22}^{-1}\bm{y}_f,
        ~~ \tilde{\sigma}^2_k = \sigma_k^2 -  \tilde{\Sigma}_{12}(k)\tilde{\Sigma}_{22}^{-1}\tilde{\Sigma}_{21}(k),
    \end{aligned}
\end{equation}
where $\tilde{\Sigma}_{12}(k) = \tilde{\Sigma}_{21}^\top(k)$ is the cross-covariance between agent $k$ and the observed failures $\mathcal{I}_m$; similar notation holds for agent $j$. The matrix $\tilde{\Sigma}_{22}^{-1}$ corresponds to the inverse covariance of the observed agents and is reused across updates.

\begin{theorem} \label{thm:conditional_prob_update}
Suppose $\bar{y}_{j} \,|\, \bar{\bm{y}}_{\mathcal{I}_m} = \bm{y}_f \sim \N(\tilde{\mu}_j, \tilde{\sigma}_j^2)$, where $\mathcal{I}_m$ indexes the current $m$ failures. When a new failure is observed at agent $k \notin \mathcal{I}_m$, $k \neq j$, with measurement $\bar{y}_k = y_{f_k}$ satisfying $|y_{f_k}| > c$, the updated conditional distribution is given by $\N(\hat{\mu}, \hat{\sigma}^2)$, where
\begin{align} \label{eq:update}
    \hat{\mu} = \tilde{\mu}_j - \frac{\tilde{\sigma}_{jk}}{\tilde{\sigma}_k^2}(\tilde{\mu}_k - y_{f_k}), ~ \text{and} ~~
    \hat{\sigma}^2 = \tilde{\sigma}_j^2 - \frac{\tilde{\sigma}_{jk}^2}{\tilde{\sigma}_k^2}, \nonumber
\end{align}
and the cross-covariance term $\tilde{\sigma}_{jk}$ is given by
\[
    \tilde{\sigma}_{jk} = \sigma_{jk} - \tilde{\Sigma}_{12}(j)\tilde{\Sigma}_{22}^{-1}\tilde{\Sigma}_{21}(k).
\]
All terms are computed using \eqref{eq:j_and_k} and the precomputed $\tilde{\Sigma}_{22}^{-1}$ from the current failure set.
\end{theorem}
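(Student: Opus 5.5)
Our approach is the standard sequential-conditioning property of Gaussian vectors. Conditioning on $\bar{\bm{y}}_{\mathcal{I}_m}$ and then on $\bar y_k$ is the same as conditioning on $\bar{\bm{y}}_{\mathcal{I}_m\cup\{k\}}$ all at once. We first condition the pair $(\bar y_j,\bar y_k)$ on $\bar{\bm{y}}_{\mathcal{I}_m}=\bm{y}_f$. We then apply the scalar Gaussian conditioning formula to this bivariate conditional law, observing $\bar y_k=y_{f_k}$.

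\textbf{Step 1 (joint conditional law).} Apply Lemma~\ref{lem:multi_conditional_prob} in its vector form, with the two-dimensional target $(\bar y_j,\bar y_k)$ in place of the scalar $\bar y_j$. The Schur-complement argument in that lemma does not depend on the target being scalar. With the $2\times(m+2)$ block covariance built from $\sigma_j^2,\sigma_k^2,\sigma_{jk}$, the cross-covariances $\tilde\Sigma_{12}(j)$, $\tilde\Sigma_{12}(k)$, and $\tilde\Sigma_{22}$, it gives
\[
\begin{bmatrix}\bar y_j\\ \bar y_k\end{bmatrix}\Big|\,\bar{\bm{y}}_{\mathcal{I}_m}=\bm{y}_f \sim \N\!\left(\begin{bmatrix}\tilde\mu_j\\ \tilde\mu_k\end{bmatrix},\begin{bmatrix}\tilde\sigma_j^2 & \tilde\sigma_{jk}\\ \tilde\sigma_{jk} & \tilde\sigma_k^2\end{bmatrix}\right).
\]
The diagonal entries and the means are exactly those in \eqref{eq:j_and_k}. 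The off-diagonal entry is $\sigma_{jk}-\tilde\Sigma_{12}(j)\tilde\Sigma_{22}^{-1}\tilde\Sigma_{21}(k)$, which is the claimed $\tilde\sigma_{jk}$.

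\textbf{Step 2 (second conditioning).} Conditional on $\bar{\bm{y}}_{\mathcal{I}_m}=\bm{y}_f$, the pair $(\bar y_j,\bar y_k)$ is bivariate Gaussian. Applying Lemma~\ref{lem:multi_conditional_prob} with $m=1$ to this conditional law, observing $\bar y_k=y_{f_k}$, gives mean $\tilde\mu_j+\frac{\tilde\sigma_{jk}}{\tilde\sigma_k^2}(y_{f_k}-\tilde\mu_k)$ and variance $\tilde\sigma_j^2-\tilde\sigma_{jk}^2/\tilde\sigma_k^2$. These are the expressions $\hat\mu$ and $\hat\sigma^2$ in \eqref{eq:update}.

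\textbf{Step 3 (identification with full conditioning).} Conditioning in stages gives the same law as conditioning jointly on $\bar{\bm{y}}_{\mathcal{I}_m\cup\{k\}}$. We would establish this either from the tower property of conditional densities, or algebraically via the block-inverse (Schur complement) formula for the $(m+1)\times(m+1)$ matrix obtained by bordering $\tilde\Sigma_{22}$ with row and column $k$. In the algebraic route, the Schur complement of $\tilde\Sigma_{22}$ is exactly $\tilde\sigma_k^2$. Expanding $\tilde\Sigma_{12}^{\text{new}}(\tilde\Sigma_{22}^{\text{new}})^{-1}[\bm{y}_f;y_{f_k}]$ then reproduces $\hat\mu$, and the analogous expansion reproduces $\hat\sigma^2$.

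\textbf{Main obstacle.} We must check that $\tilde\sigma_k^2>0$, so that the division is legitimate. The covariance $\Sigma$ in Lemma~\ref{lem:y_steady} is singular, since $\bm 1_n^\top\bar{\bm y}=0$. So we need the enlarged index set $\mathcal{I}_m\cup\{k\}$, together with $j$, to have size at most $n-1$. Any proper subset of the observables has a positive definite covariance: the kernel of $\Sigma$ is spanned by $\bm 1_n$ under Assumption~\ref{asp:stable}, where $g(\lambda_k\tau)>0$. Positive definiteness of the bordered submatrix then gives $\tilde\sigma_k^2>0$. The condition $|y_{f_k}|>c$ plays no role in the algebra; it only labels $k$ as a failure.
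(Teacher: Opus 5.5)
Your proof is correct. Its main line is organized differently from the paper's, although the two are algebraically equivalent.

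\textbf{The paper's route.} The paper conditions on $\mathcal I_m\cup\{k\}$ in one shot. It borders $\tilde\Sigma_{22}$ with the row and column of agent $k$ and writes the block inverse of the bordered matrix $\tilde\Sigma_{22}'$ using the Schur complement $S=\tilde\sigma_k^2$. It then invokes Lemma~\ref{lem:multi_conditional_prob} for the $(m+1)$-observation problem. It leaves to the reader the expansion of $\tilde\Sigma_{12}'\tilde\Sigma_{22}'^{-1}[\bm y_f;\,y_{f_k}]$ and of the corresponding variance. This is exactly the ``algebraic route'' you list as one option in Step~3.

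\textbf{Your route.} Your primary argument conditions in two stages. First you obtain the bivariate law of $(\bar y_j,\bar y_k)$ given $\bar{\bm y}_{\mathcal I_m}=\bm y_f$, which exhibits $\tilde\sigma_{jk}$ as a conditional covariance rather than as an entry produced by a block inverse. Then you apply scalar Gaussian conditioning. Strictly, this is the standard formula with a nonzero prior mean, not Lemma~\ref{lem:multi_conditional_prob} verbatim, since that lemma is stated for zero-mean vectors.

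\textbf{What each buys.} Your route is more transparent. It shows why $\hat\mu$ and $\hat\sigma^2$ depend only on the $m$-failure quantities, and hence why $\tilde\Sigma_{22}^{-1}$ can be reused. The paper's route instead checks directly that the update coincides with the from-scratch formula of Lemma~\ref{lem:multi_conditional_prob}, but it omits the bookkeeping that makes this check explicit.

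\textbf{Well-posedness.} You also address the invertibility question, which the paper passes over. However, the condition you impose, $|\mathcal I_m\cup\{j,k\}|\le n-1$, is stronger than necessary. Because $j$, $k$, and the indices in $\mathcal I_m$ are distinct, $|\mathcal I_m\cup\{k\}|\le n-1$ always holds. Hence the bordered matrix is positive definite by your own kernel argument, and $\tilde\sigma_k^2>0$ automatically. The boundary case $m=n-2$ only forces $\hat\sigma^2=0$, because $\bar y_j$ is then determined by $\bm 1_n^\top\bar{\bm y}=0$. The update formulas remain valid there, read as a degenerate Gaussian.
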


\begin{figure}
    \centering
    \includegraphics[width=1\linewidth]{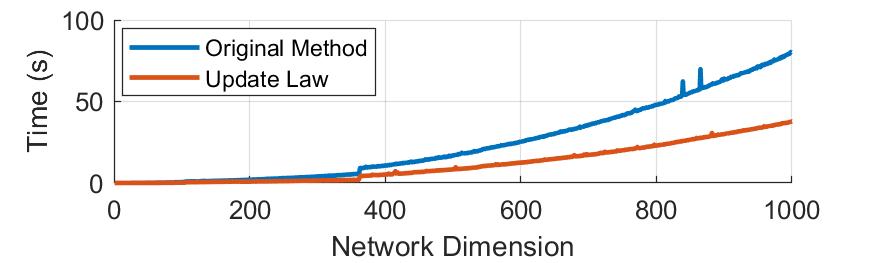}
    \caption{Computation time with various dimensions of the network.}
    \label{fig:computation_time_compare}
\end{figure}

This update rule provides a fast and scalable mechanism to propagate cascading failure risk as new agent failures are detected. By avoiding reconstruction of the full conditional distribution, the computational cost is significantly reduced (see Fig.~\ref{fig:computation_time_compare}).

\section{Time-Delay Induced Fundamental Limits on Cascading Risk} \label{specialgraph}

In many engineering systems, communication delays and external disturbances are intrinsic and not directly controllable. As a result, mitigating the risk of cascading failures must rely on modifying the network topology, specifically, by adjusting the feedback gains on communication links. This section characterizes fundamental performance limits on the risk of large deviations induced by time-delay, under general communication graph topologies.

\subsection{Fundamental Limits and the Lower Bound of the Best Achievable Risk}

In the presence of the communication time-delay, there exists a time-delay-induced fundamental limits on the elements of the covariance of the steady-state observable \eqref{eq:sigma_y}. To reveal this, the following limits on the $f$ function, which appears in \eqref{eq:sigma_y}, is introduced in order to develop the limits on the steady-state covariance.

\begin{lemma}   \label{lem:f_lower_bound}
    The function $f (x) = \frac {1}{2x} \frac {\cos(x)}{1 - \sin(x)}$ obtains a local minimum $\underline{f}$ for $x \in (0, \frac{\pi}{2})$, where 
    $$
    \underline{f}:= \inf_{x\in (0,\frac{\pi}{2})} f(x) = \inf_{x\in (0,\frac{\pi}{2})} \frac {1}{2x} \frac {\cos(x)}{1 - \sin(x)}  \approx 1.5319.
    $$ 
\end{lemma}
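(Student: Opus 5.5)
The plan is to show that $f$ tends to $+\infty$ at both endpoints of $(0,\frac{\pi}{2})$, and then to locate its unique critical point exactly through a logarithmic derivative. The infimum is then attained there, which makes it a genuine (global, hence local) minimum on the interval.

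First, I would rewrite the function using $\frac{\cos x}{1-\sin x}=\frac{1+\sin x}{\cos x}$, which follows from $(1-\sin x)(1+\sin x)=\cos^2 x$. This gives
\[
f(x)=\frac{1+\sin x}{2x\cos x}, \qquad x\in\left(0,\tfrac{\pi}{2}\right).
\]
In this form $f$ is smooth and strictly positive on the open interval. As $x\to 0^+$, the numerator tends to $1$ and the denominator tends to $0^+$, so $f\to+\infty$. As $x\to\frac{\pi}{2}^-$, the numerator tends to $2$ while $\cos x\to 0^+$, so again $f\to+\infty$. By continuity, $f$ therefore attains its infimum at some interior point, and that point must be a critical point.

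Next, I would differentiate $\ln f(x)=\ln(1+\sin x)-\ln x-\ln\cos x-\ln 2$. This gives
\[
\frac{f'(x)}{f(x)}=\frac{\cos x}{1+\sin x}-\frac{1}{x}+\tan x=\frac{1-\sin x}{\cos x}+\frac{\sin x}{\cos x}-\frac{1}{x}=\frac{1}{\cos x}-\frac{1}{x}.
\]
Hence $f'(x)$ has the same sign as $x-\cos x$. The map $x\mapsto x-\cos x$ has derivative $1+\sin x>0$ on the interval, so it is strictly increasing. It equals $-1$ at $x=0$ and $\frac{\pi}{2}$ at $x=\frac{\pi}{2}$, so it has exactly one zero $x^\star\in(0,\frac{\pi}{2})$, namely the fixed point $x^\star=\cos x^\star\approx 0.739085$. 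It follows that $f$ is strictly decreasing on $(0,x^\star)$ and strictly increasing on $(x^\star,\frac{\pi}{2})$, so $x^\star$ is the unique minimizer.

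Finally, I would evaluate the minimum value using $\cos x^\star=x^\star$ and $\sin x^\star=\sqrt{1-(x^\star)^2}$:
\[
\underline{f}=f(x^\star)=\frac{1+\sqrt{1-(x^\star)^2}}{2(x^\star)^2}.
\]
Numerically, $(x^\star)^2\approx 0.54625$ and $\sqrt{1-(x^\star)^2}\approx 0.67361$, so $\underline{f}\approx 1.67361/1.09249\approx 1.5319$. The only step needing care is the simplification of the logarithmic derivative to $\frac{1}{\cos x}-\frac{1}{x}$, since that is what reduces the problem to the transcendental equation $x=\cos x$. Once that is in hand, the remainder is a monotonicity argument and a numerical evaluation of the fixed point.
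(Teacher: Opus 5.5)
Your proposal is correct and follows the same route as the paper's proof: $f$ blows up at both endpoints, so it has an interior minimizer, and that minimizer is the unique critical point, which is then evaluated numerically. Your logarithmic-derivative step reduces $f'(x)=0$ to $x=\cos x$, giving $x^\star\approx 0.739085$ and $\underline{f}=\bigl(1+\sqrt{1-(x^\star)^2}\bigr)/\bigl(2(x^\star)^2\bigr)\approx 1.5319$, and this supplies the uniqueness argument that the paper only asserts.
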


This property of $f(x)$ yields the following bounds on the diagonal and off-diagonal elements of the covariance matrix $\Sigma$.

\begin{theorem}     \label{thm:sigma_bound}
Suppose the network \eqref{eq:dyn} reaches steady-state. Then, the entries of the covariance matrix $\Sigma$ of $\bar{\bm{y}}$ satisfy:
\begin{equation*}
    \begin{cases}
        \frac{(n-1) b^2 \tau}{n} \underline{f} \leq \sigi^2 \leq \frac{(n-1) b^2 \tau}{n} \bar{f} & \text{for all } i, \\[4pt]
        \frac{(n-2)b^2 \tau}{2n} \underline{f} - \frac{b^2 \tau}{2} \bar{f} \leq \sigma_{ij} \leq \frac{(n-2)b^2 \tau}{2n} \bar{f} - \frac{b^2 \tau}{2} \underline{f} & \text{for all } i \neq j,
    \end{cases}
\end{equation*}
where 
\[
\bar{f} := \max\left\{ f(\lambda_2 \tau), f(\lambda_n \tau) \right\}.
\]
\end{theorem}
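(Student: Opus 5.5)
The plan is to rewrite each entry of $\Sigma$ from Lemma~\ref{lem:y_steady} as a nonnegatively weighted sum of values of $f$. Then we bound $f$ below by $\underline{f}$ (Lemma~\ref{lem:f_lower_bound}) and above by $\bar f$, and compute the total weights exactly from orthonormality of $Q$.

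\textbf{Step 1: rewrite $\sigma_{ij}$.} Since
\[
\frac{1}{2}\frac{\cos(\lambda_k\tau)}{\lambda_k(1-\sin(\lambda_k\tau))}=\tau f(\lambda_k\tau),
\]
Lemma~\ref{lem:y_steady} gives $\sigma_{ij}=b^2\tau\sum_{k=2}^n f(\lambda_k\tau)(\bm m_i^\top\bm q_k)(\bm m_j^\top\bm q_k)$. For $k\ge 2$ we have $\bm q_k\perp\bm 1_n$, so $M_n\bm q_k=\bm q_k$ and $\bm m_i^\top\bm q_k=q_{k,i}$, the $i$-th entry of $\bm q_k$. Orthonormality of $Q$ means $QQ^\top=I_n$, so $\sum_{k=1}^n q_{k,i}q_{k,j}=\delta_{ij}$. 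Removing the $k=1$ term $q_{1,i}q_{1,j}=1/n$ gives
\[
\sum_{k\ge2}q_{k,i}^2=\frac{n-1}{n},\qquad \sum_{k\ge2}q_{k,i}q_{k,j}=-\frac1n\quad (i\ne j).
\]

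\textbf{Step 2: range of $f$ on the spectrum.} By Assumption~\ref{asp:stable}, every $\lambda_k\tau$ with $k\ge2$ lies in $[\lambda_2\tau,\lambda_n\tau]\subset(0,\pi/2)$. The lower bound $f(\lambda_k\tau)\ge\underline f$ is Lemma~\ref{lem:f_lower_bound}.

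For the upper bound, show that $f$ is decreasing and then increasing on $(0,\pi/2)$. Write $f(x)=\frac{1+\sin x}{2x\cos x}$. Its logarithmic derivative is
\[
\frac{\cos x}{1+\sin x}+\frac{\sin x}{\cos x}-\frac1x=\sec x-\frac1x,
\]
which is strictly increasing and changes sign exactly once. So $f$ is unimodal with a single minimum, and its maximum over any subinterval is attained at an endpoint. Hence $f(\lambda_k\tau)\le\bar f$ for all $k\ge2$. This quasi-convexity is the only non-bookkeeping step and the one I expect to be the main obstacle; the log-derivative identity handles it cleanly.

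\textbf{Step 3: diagonal entries.} The weights $q_{k,i}^2\ge0$ sum to $(n-1)/n$. Therefore $\frac{(n-1)b^2\tau}{n}\underline f\le\sigma_i^2\le\frac{(n-1)b^2\tau}{n}\bar f$.

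\textbf{Step 4: off-diagonal entries.} The weights $q_{k,i}q_{k,j}$ have indefinite sign, so split them by polarization:
\[
q_{k,i}q_{k,j}=\tfrac14(q_{k,i}+q_{k,j})^2-\tfrac14(q_{k,i}-q_{k,j})^2 .
\]
Using Step 1, the weights sum as follows:
\[
\sum_{k\ge2}\tfrac14(q_{k,i}+q_{k,j})^2=\tfrac14\Big(2\cdot\tfrac{n-1}{n}-\tfrac2n\Big)=\frac{n-2}{2n},
\qquad
\sum_{k\ge2}\tfrac14(q_{k,i}-q_{k,j})^2=\tfrac14\cdot2=\tfrac12 .
\]
In the second sum the $k=1$ term vanishes because $q_{1,i}=q_{1,j}$.

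Now bound $f$ by $\underline f$ or $\bar f$ in the positive part and the opposite way in the negative part. This yields
\[
\frac{(n-2)b^2\tau}{2n}\underline f-\frac{b^2\tau}{2}\bar f\le\sigma_{ij}\le\frac{(n-2)b^2\tau}{2n}\bar f-\frac{b^2\tau}{2}\underline f,
\]
which is the claimed off-diagonal bound.
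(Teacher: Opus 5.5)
Your proof is correct, and it reaches the same constants as the paper by a more elementary route.

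\textbf{The paper's route.}
\begin{itemize}
\item It writes $\sigma_{ij}=b^2\tau\,\Tr(FE_{ij})$, where $F=\mathrm{diag}\{\underline f,f(\lambda_2\tau),\dots,f(\lambda_n\tau)\}$ (the unused $(1,1)$ slot is artificially padded with $\underline f$) and $E_{ij}=Q^\top\bm m_i\bm m_j^\top Q$.
\item It invokes an external trace inequality for normal matrices (Theorem 2.10 of the Liu--Zhang--Liu reference). This bounds $\Tr(E_{ij}F)$ by ordered pairings of the eigenvalues of $F$ with those of the symmetric part of $E_{ij}$.
\item It computes the spectrum of $\tfrac12(\bm m_i\bm m_j^\top+\bm m_j\bm m_i^\top)$ by solving a $2\times 2$ eigen-system. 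The result is $\{\tfrac{n-2}{2n},0,\dots,0,-\tfrac12\}$ off the diagonal and $\{\tfrac{n-1}{n},0,\dots,0\}$ on it.
\item That $f$ is maximized over the spectrum at $\lambda_2\tau$ or $\lambda_n\tau$ is taken from convexity of $f$, cited from earlier work.
\end{itemize}

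\textbf{How your argument relates.} Your polarization identity is exactly the spectral decomposition of that rank-two matrix. The vectors $\bm m_i+\bm m_j$ and $\bm m_i-\bm m_j$ are orthogonal because $\|\bm m_i\|=\|\bm m_j\|$. Your total masses $\tfrac{n-2}{2n}$ and $\tfrac12$ are its two nonzero eigenvalues, up to sign.

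Because $F$ is already diagonal in the eigenbasis $Q$, the trace inequality collapses to your termwise bound on nonnegatively weighted sums. So the external lemma, the eigenvalue-ordering bookkeeping, and the padded $F_{11}$ entry are all unnecessary.

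Your identity $(\ln f)'(x)=\sec x-1/x$ also makes the endpoint property self-contained. Since this derivative is increasing, it in fact shows $f$ is log-convex, hence convex, which recovers the cited fact. It even identifies the minimizer in Lemma~\ref{lem:f_lower_bound} as the solution of $\cos x=x$.

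\textbf{What each route buys.} The paper's framing is more general-purpose: it would still apply if the weighting matrix were not diagonal in the same basis as $F$. Yours is shorter and fully verifiable. It also exposes the structure directly: $\sigma_{ij}$ is a difference of two nonnegative weighted sums of values of $f$, each with a fixed total mass.
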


We note that the bounds in \Cref{thm:sigma_bound} are worst-case envelopes obtained via spectral extremization of $f(\lambda_i\tau)$. Their conservativeness is evaluated numerically in \Cref{sec:case-study}, where the analytical limits are compared against exact steady-state covariance values across representative graph topologies. The above result holds for any communication graph satisfying the stability condition in Assumption~\ref{asp:stable}. 

To obtain a uniform and practically meaningful bound, we restrict attention to the compact interval $\bar{S} := [10^{-3}, \pi/2 - 10^{-3}]$, over which $f(x)$ is continuous and therefore attains its maximum. This interval removes an arbitrarily small neighborhood of the stability boundary $\lambda_i \tau = \pi/2$, where $f(x)$ diverges, thereby enforcing a finite stability margin consistent with practical implementations. Over this domain, the maximum value of $f(x)$ is approximated as
\[
\bar{f} := \sup_{x \in \bar{S}} \frac{1}{2x} \cdot \frac{\cos(x)}{1 - \sin(x)} \approx 6.3666 \times 10^3.
\]

Substituting this uniform bound into \Cref{thm:sigma_bound} yields topology-independent covariance envelopes for networks whose spectra satisfy $\lambda_i \tau \in \bar{S}$. These uniform bounds will be used below to derive a fundamental lower limit on the best achievable cascading risk.

The above covariance bounds can be leveraged to derive a fundamental lower bound on the best achievable risk of cascading failures in the network. In what follows, we focus on the case of a single initial failure, i.e., $m = 1$.

\begin{theorem} \label{thm:risk_bound}
Suppose the network \eqref{eq:dyn} satisfies $\lambda_i \tau \in \bar{S}$ for all $i = 2,\dots,n$.
Define $\sigma_{\min}:=\sqrt{\frac{n-1}{n}\,b^2\tau\,\underline{f}}$, 
$\kappa_{\varepsilon}:=\big(\sqrt{2\pi}\,\varepsilon\,e^{\iota_\varepsilon^2}\big)^{-1}$, and 
$\iota_\varepsilon:=\erf^{-1}(2\varepsilon-1)$. Then, the best achievable risk of cascading failure $\mathcal{A}_{\varepsilon}^{ij}$ is lower-bounded as follows:

\noindent \textbf{Case 1:} If $\sigma_{ij} > 0$, then $\mathcal{A}_{\varepsilon}^{ij} \geq \mathcal{A}_{+}$, where
$\mathfrak{A}_{+}:=\min\{\kappa_{\varepsilon}\sigma_{\min},\sqrt{\underline{f}/\bar{f}}\,y_f\}$ and
\[
\mathcal{A}_{+} := \begin{cases}
0 &\text{if }~ \mathfrak{A}_{+} \leq \frac{c}{\alpha} \\[4pt]
\frac{\alpha \mathfrak{A}_{+} - c}{c - \mathfrak{A}_{+}} &\text{if }~ \mathfrak{A}_{+} \in \left(\frac{c}{\alpha}, c \right) \\[4pt]
\infty &\text{if }~ \mathfrak{A}_{+} \geq c
\end{cases}.
\]

\noindent \textbf{Case 2:} If $\sigma_{ij} < 0$, then $\mathcal{A}_{\varepsilon}^{ij} \geq \mathcal{A}_{-}$, where
\[
\mathcal{A}_{-} := 0.
\]

\noindent \textbf{Case 3:} If $\sigma_{ij} = 0$, \emph{apply Case~1} with
\[
\mathfrak{A}_{+}\ \mapsto\ \mathfrak{A}_{0}:=\kappa_{\varepsilon/2}\,\sigma_{\min},\]
i.e., restrict to the endpoint $s=0$ and replace $\kappa_{\varepsilon}$ by $\kappa_{\varepsilon/2}$.

\end{theorem}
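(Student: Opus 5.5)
Since $\mathcal{A}^{ij}_\varepsilon$ in \Cref{thm:mul_cas_ren_risk} (with $m=1$, $\mathcal{I}_1=\{i\}$, $y_f>0$ without loss of generality) is a nondecreasing function of $\mathfrak{A}^{i,j}_\varepsilon$ through the piecewise map $a\mapsto(\alpha a-c)/(c-a)$, it suffices to lower bound the conditional $\AVAR_\varepsilon$. By \Cref{lem:multi_conditional_prob}, $\bar y_j\mid\bar y_i=y_f\sim\N(\tilde\mu,\tilde\sigma^2)$ with $\tilde\mu=\rho_{ij}\frac{\sigma_j}{\sigma_i}y_f$ and $\tilde\sigma^2=\sigma_j^2(1-\rho_{ij}^2)$. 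The plan is to show that the map $(\tilde\mu,\tilde\sigma)\mapsto\mathfrak{A}$ of \Cref{thm:mul_cas_ren_risk} is nondecreasing in $|\tilde\mu|$ and in $\tilde\sigma$ (the law of $|\bar y_j|$ is stochastically increasing in both). The lower bound then follows from lower estimates on $\tilde\sigma$ and $|\tilde\mu|$ using \Cref{thm:sigma_bound} with $\underline f$ and the uniform $\bar f$ on $\bar S$.

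\textbf{Key step:} obtain two elementary tail bounds for the folded normal.

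(a) Dropping the mean gives $\mathfrak{A}\ge \E[X\mid X>\VAR_\varepsilon]$ for $X\sim\N(0,\tilde\sigma^2)$, one-sided, which equals $\tilde\sigma\,\phi(z_\varepsilon)/\varepsilon$. Writing the upper $\varepsilon$-quantile via $\erf^{-1}$ gives $\phi(z_\varepsilon)=e^{-\iota_\varepsilon^2}/\sqrt{2\pi}$, hence $\mathfrak{A}\ge\kappa_\varepsilon\tilde\sigma$.

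(b) Since the conditional $\AVAR_\varepsilon$ dominates the conditional mean of $|\bar y_j|$, we get $\mathfrak{A}\ge|\tilde\mu|$.

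When the mean vanishes, the folded (two-sided) tail has mass $\varepsilon/2$ on each side, which explains the substitution $\kappa_\varepsilon\mapsto\kappa_{\varepsilon/2}$ in Case 3. So $\mathfrak{A}\ge\max\{\kappa\tilde\sigma,|\tilde\mu|\}\ge\min\{\cdot,\cdot\}$ of the respective lower bounds.

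\textbf{Case 1} ($\sigma_{ij}>0$). Interpolate in the parameter $s=\rho_{ij}\in[0,1)$. At the endpoint $s=0$, $\tilde\sigma=\sigma_j\ge\sigma_{\min}$ by \Cref{thm:sigma_bound}, giving $\kappa_\varepsilon\sigma_{\min}$. As $s$ grows, variance is transferred into the mean: $|\tilde\mu|=s(\sigma_j/\sigma_i)y_f\ge s\sqrt{\underline f/\bar f}\,y_f$, since $\sigma_j^2/\sigma_i^2\ge\underline f/\bar f$ by the diagonal bounds. Along the curve $(\tilde\mu,\tilde\sigma)=(s a,\sqrt{1-s^2}\,b)$ the lower bound $\max\{\kappa b\sqrt{1-s^2},\,s a\}$ is minimized over $s$ at a value at least $\min\{\kappa b,a\}$ up to the crossover. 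I would argue directly that $\mathfrak{A}$ along this path is bounded below by its endpoint values, using quasi-concavity or a convexity-type argument in $s$. This yields $\mathfrak{A}\ge\mathfrak{A}_+$.

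\textbf{Case 2} ($\sigma_{ij}<0$). The bound $\mathcal{A}_-=0$ is trivial because $\mathcal{A}\ge0$ by definition \eqref{eq:risk_def_cond}. \textbf{Case 3} ($\sigma_{ij}=0$). Here $\tilde\mu=0$ and $\tilde\sigma=\sigma_j\ge\sigma_{\min}$, so the folded-normal calculation gives exactly $\kappa_{\varepsilon/2}\sigma_j\ge\mathfrak{A}_0$.

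\textbf{Main obstacle:} the Case 1 interpolation step. The bound $\max\{\kappa\tilde\sigma,|\tilde\mu|\}$ can dip below $\min\{\kappa b,a\}$ at intermediate $s$ (the dip is about $\kappa b a/\sqrt{a^2+\kappa^2b^2}$). Closing this gap requires a sharper joint estimate of the folded-normal $\AVAR$, for instance $\mathfrak{A}\ge\sqrt{\tilde\mu^2+\kappa^2\tilde\sigma^2}$-type bounds via the explicit formula of \Cref{thm:mul_cas_ren_risk} and monotonicity of $\gamma$. I would attempt that first. Failing it, I would interpret ``best achievable'' as extremizing over the endpoints $s\in\{0,\rho_{ij}\}$, as the statement's ``restrict to the endpoint $s=0$'' suggests.
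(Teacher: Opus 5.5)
There is a genuine gap, and it is the one you flag yourself. In Case~1, your steps (a) and (b) only give $\mathfrak{A}\ge\max\{\kappa_\varepsilon\tilde\sigma,|\tilde\mu|\}$. That bound falls strictly below $\mathfrak{A}_+$ at intermediate $\rho_{ij}$. Neither of your proposed repairs is carried out. Quasi-concavity of the exact folded $\AVAR_\varepsilon$ in $s$ is asserted, not shown. Reinterpreting ``best achievable'' as an endpoint extremization is not a proof, because the theorem claims the bound for every $\rho_{ij}\in(0,1)$.

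The missing idea is that step (a) discards exactly the information you need when it drops the mean. For $X\sim\N(\tilde\mu,\tilde\sigma^2)$, translation equivariance gives the upper-tail $\AVAR_\varepsilon$ of $X$ as $\tilde\mu+\kappa_\varepsilon\tilde\sigma$. Since $|X|\ge X$ and $|X|\ge -X$ pointwise, monotonicity of $\AVAR_\varepsilon$ then gives the \emph{additive} bound
\[
\mathfrak{A}\;\ge\;|\tilde\mu|+\kappa_\varepsilon\tilde\sigma .
\]
Set $a:=(\sigma_j/\sigma_i)\,y_f$. Along your path the right-hand side is $h(s)=s\,a+\kappa_\varepsilon\sigma_j\sqrt{1-s^2}$. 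This is concave on $[0,1]$, since $h''(s)=-\kappa_\varepsilon\sigma_j(1-s^2)^{-3/2}<0$. Hence $h(s)\ge\min\{h(0),h(1)\}=\min\{\kappa_\varepsilon\sigma_j,\,a\}$. Finally, $\sigma_j\ge\sigma_{\min}$ and $\sigma_j/\sigma_i\ge\sigma_{\min}/\sigma_{\max}=\sqrt{\underline f/\bar f}$ give $\mathfrak{A}\ge\mathfrak{A}_+$.

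This is precisely the paper's argument. Its ``unfolded surrogate'' $\frac{\sigma_{ij}}{\sigma_i^2}y_f+\kappa_\varepsilon\sqrt{\sigma_j^2-\sigma_{ij}^2/\sigma_i^2}$ is exactly $\tilde\mu+\kappa_\varepsilon\tilde\sigma$. The paper shows it is concave in $\sigma_{ij}$ and evaluates it at $\sigma_{ij}\in\{0,\sigma_i\sigma_j\}$.

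Your candidate $\sqrt{\tilde\mu^2+\kappa_\varepsilon^2\tilde\sigma^2}$ would also close the gap. It equals $\sqrt{\kappa_\varepsilon^2\sigma_j^2+s^2(a^2-\kappa_\varepsilon^2\sigma_j^2)}$, so it is monotone in $s^2$, and it is implied by the additive bound. As written, though, you leave it unproved.

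A smaller point: your claim that the law of $|\bar y_j|$ is stochastically increasing in $\tilde\sigma$ is false when $\tilde\mu\neq0$. For $0<t<|\tilde\mu|$, $\Pro(|X|>t)\to1$ as $\tilde\sigma\downarrow0$, but it is $<1$ for every $\tilde\sigma>0$. Monotonicity of $\AVAR_\varepsilon$ in $\tilde\sigma$ does hold, via the increasing convex order. It cannot close Case~1 on its own, however, because $|\tilde\mu|$ and $\tilde\sigma$ move in opposite directions as $\rho_{ij}$ varies. Your Cases~2 and~3 are fine and agree with the paper.
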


The sign and magnitude of the covariance $\sigma_{ij}$ depend on the communication graph topology, which directly affects the lower bound on the best achievable risk of cascading failures. These bounds serve as fundamental performance limits and can be used to assess whether a desired safety specification is achievable through network design.

\subsection{Best Achievable Risk with Complete Graph}

When the communication graph is specified, for example as a complete graph, the previously derived bounds on the covariance and the best achievable risk of cascading failure can be made tighter and less conservative.

\begin{corollary} \label{cor:best_risk_complete}
    Consider the network \eqref{eq:dyn} with $n$ agents communicating over an unweighted complete graph. Then the fundamental lower bound on the best achievable risk of cascading large fluctuations is attained by choosing
    \begin{equation*}
        \tilde{\mu} = \frac{- y_f}{n-1}, \quad \text{and} \quad
        \tilde{\sigma} = \sqrt{\frac{(n-2) b^2\tau}{n-1} \, \underline{f}},
    \end{equation*}
    as in \eqref{eq:casa}, with the corresponding case-specific risk branches applied.
\end{corollary}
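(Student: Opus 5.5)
For the complete graph, the plan is to specialize the single-failure ($m=1$) conditional statistics of Lemma~\ref{lem:complete_mu_sig_cond} and then minimize over the only remaining free quantity, the delay-dependent factor in the covariance. With $m=1$ and $\mathbf{1}_1^\top \bm{y}_f = y_f$, Lemma~\ref{lem:complete_mu_sig_cond} gives directly
\[
\tilde{\mu} = \frac{-y_f}{n-1}, \qquad \tilde{\sigma}^2 = \sigma_j^2\Big(1-\frac{1}{(n-1)^2}\Big) = \sigma_j^2\,\frac{n(n-2)}{(n-1)^2}.
\]
The conditional mean therefore does not depend on $\tau$ or $b$. All design and delay dependence enters only through $\sigma_j^2$.

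By Lemma~\ref{lem:sig_complete}, $\sigma_j^2 = \frac{n-1}{2n^2}\, b^2 g(n\tau)$, which equals $\frac{(n-1)b^2\tau}{n} f(n\tau)$ with $f$ as in Lemma~\ref{lem:f_lower_bound}. Indeed, $g(x)/(2x)=f(x)$, so $\frac{b^2}{2n^2}(n-1)g(n\tau) = \frac{(n-1)b^2\tau}{n}\, f(n\tau)$. Since the complete graph has the single nonzero eigenvalue $\lambda=n$, the bounds of Theorem~\ref{thm:sigma_bound} collapse to this one value. Lemma~\ref{lem:f_lower_bound} then gives $\sigma_j^2 \ge \frac{(n-1)b^2\tau}{n}\underline{f}$, with equality when $n\tau$ equals the minimizer of $f$. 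Substituting into $\tilde\sigma^2$ yields
\[
\tilde\sigma^2 \ge \frac{(n-1)b^2\tau}{n}\underline{f}\cdot\frac{n(n-2)}{(n-1)^2} = \frac{(n-2)b^2\tau}{n-1}\underline{f},
\]
which is the stated $\tilde\sigma$.

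The remaining step, and the main obstacle, is to show that the minimal $\tilde\sigma$ yields the minimal risk when $\tilde\mu$ is held fixed. I would argue this through the location–scale structure. Write $\bar y_j = \tilde\mu + \tilde\sigma Z$. The conditional tail $\Pro(|\tilde\mu+\tilde\sigma Z|>z)$ is increasing in $\tilde\sigma$ for each fixed $z>|\tilde\mu|$, because $|\tilde\mu+\tilde\sigma Z|$ is stochastically increasing in $\tilde\sigma$ (folded-normal monotonicity). Hence $\VAR_\varepsilon$ and $\AVAR_\varepsilon$, the latter being the integral of the quantile function over the upper $\varepsilon$-tail, are nondecreasing in $\tilde\sigma$. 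One could alternatively differentiate the closed form of $\casa$ in Theorem~\ref{thm:mul_cas_ren_risk} in $\tilde\sigma$, using the implicit equation for $\gamma$. I expect the stochastic-ordering route to be cleaner, though some care is needed when $\varepsilon$ is large and the tail reaches below $|\tilde\mu|$.

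Finally, the map $\mathfrak{A}\mapsto \frac{\alpha\mathfrak{A}-c}{c-\mathfrak{A}}$ in \eqref{eq:casa} is nondecreasing across all three branches. The minimal $\casa$ therefore gives the minimal $\casm$, so the lower bound is attained by the stated $(\tilde\mu,\tilde\sigma)$ inserted into \eqref{eq:casa}.
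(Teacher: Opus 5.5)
Your proposal follows the paper's proof: specialize Lemma~\ref{lem:complete_mu_sig_cond} to $m=1$, observe that $\tilde\mu=-y_f/(n-1)$ does not depend on the delay factor, set $\sigma_j^2$ to its minimum $\frac{(n-1)b^2\tau}{n}\underline{f}$ via Lemma~\ref{lem:f_lower_bound}, and insert the result into Theorem~\ref{thm:mul_cas_ren_risk}, and your algebra giving $\tilde\sigma^2=\frac{(n-2)b^2\tau}{n-1}\underline{f}$ is correct. The paper only asserts that the smallest $\sigma_j$ gives the smallest risk, so your monotonicity step adds rigor, with one correction: the tail $\Pro(|\tilde\mu+\tilde\sigma Z|>z)$ can decrease in $\tilde\sigma$ when $z<|\tilde\mu|$, so the folded normal is not stochastically increasing. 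Your quantile argument is still complete for $\varepsilon\le 1/2$, because $\Pro(|\tilde\mu+\tilde\sigma Z|>|\tilde\mu|)>1/2$. For arbitrary $\varepsilon$, use instead that $\sigma\mapsto$ (tail average of $|\tilde\mu+\sigma Z|$) is convex, being a monotone convex risk measure applied to a family that is pointwise convex in $\sigma$; it is also even in $\sigma$, hence nondecreasing on $[0,\infty)$.
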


In contrast to Case~(2) in Theorem~\ref{thm:risk_bound}, where the best achievable risk can be trivially zero due to negative correlations between agents, the complete graph topology enforces symmetric and positive interactions among all nodes. As a result, it yields a nontrivial and informative lower bound on the achievable risk of cascading failure. This is particularly important from a design perspective, as trivial bounds (e.g., zero risk) offer limited utility in evaluating whether a given network can satisfy safety requirements under realistic time-delay and disturbance conditions.

\section{Case Studies} \label{sec:case-study}

\begin{figure*}[t]
    \centering
    \includegraphics[width=\linewidth]{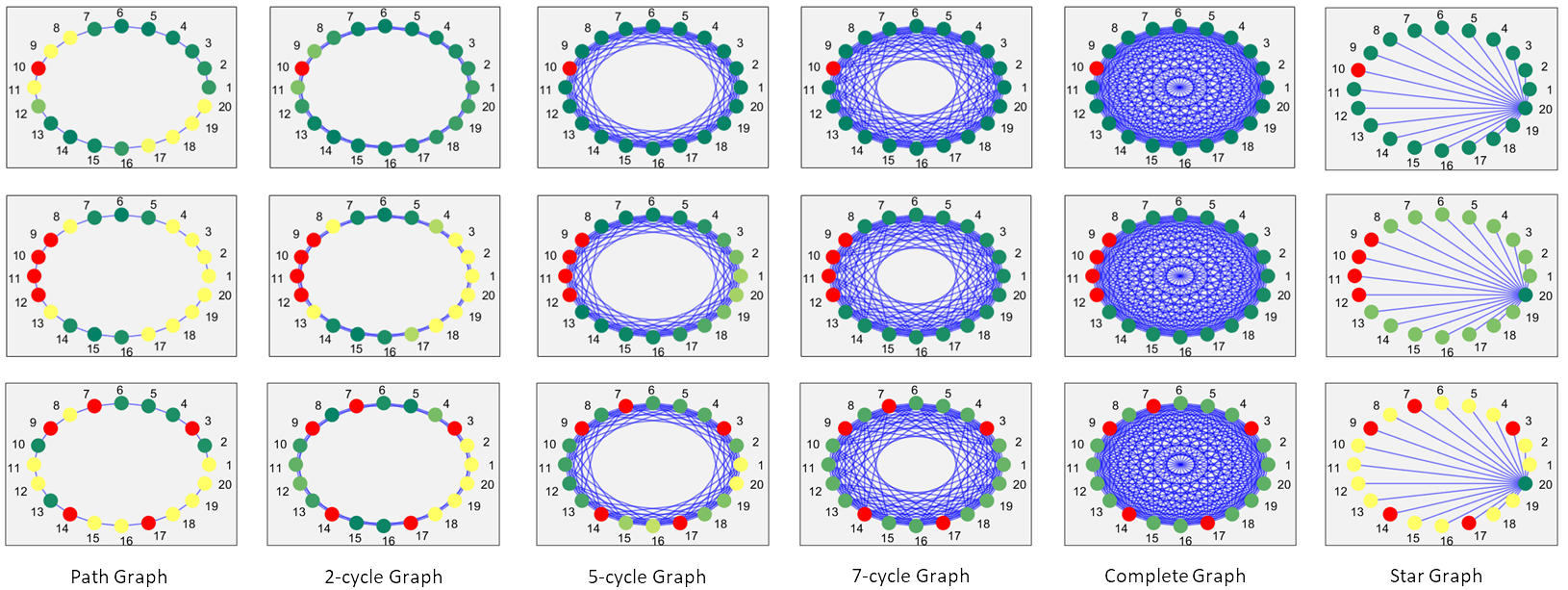}
    \caption{Network-wide risk of cascading failure profile $\rpro$ across different communication topologies. Lighter colors indicate higher risk of cascading failure $\casm$, while red nodes denote existing failures. Communication links are shown in blue.}
    \label{fig:cas_risk}
\end{figure*}

We examine the rendezvous problem governed by the stochastic consensus dynamics in \eqref{eq:dyn} under several canonical communication topologies, including the complete, path, and $p$-cycle graphs \cite{van2010graph}. In each case study, the agents indexed by $\mathcal{I}_m$ are assumed to have failed to achieve the $c$-consensus and exhibit large fluctuations characterized by $\bm{y}_f = y_f\bm{1}_m$. Unless otherwise stated, the simulation parameters are chosen as $n = 20$, $c = 4$, $\alpha = 1000$, $y_f = 4$, $b = 0.01$, $\tau = 0.05$, and $\varepsilon = 0.1$.

\subsection{Risk of Cascading Large Fluctuation}

The network-wide risk of cascading failure profile $\rpro$ is evaluated using the closed-form expressions derived in Theorem~\ref{thm:mul_cas_ren_risk} across several unweighted communication topologies. The resulting risk distributions are illustrated in Fig.~\ref{fig:cas_risk}.

\noindent \underline{Path Graph:}
Agents are arranged in a linear topology, each communicating only with its immediate neighbors. The risk of cascading failure $\casm$ is highly localized around the initially failed node and decays rapidly with increasing graph distance. When the failure occurs near the network boundary, the risk diminishes faster toward the edge and more gradually toward the interior, resulting in an asymmetric risk profile. This spatial attenuation reflects the limited diffusion of disturbances in sparsely connected graphs and matches the theoretical dependence of $\casm$ on Laplacian eigenmodes in Theorem~\ref{thm:mul_cas_ren_risk}.

\begin{figure}[t]
    \centering
    \includegraphics[width=\linewidth]{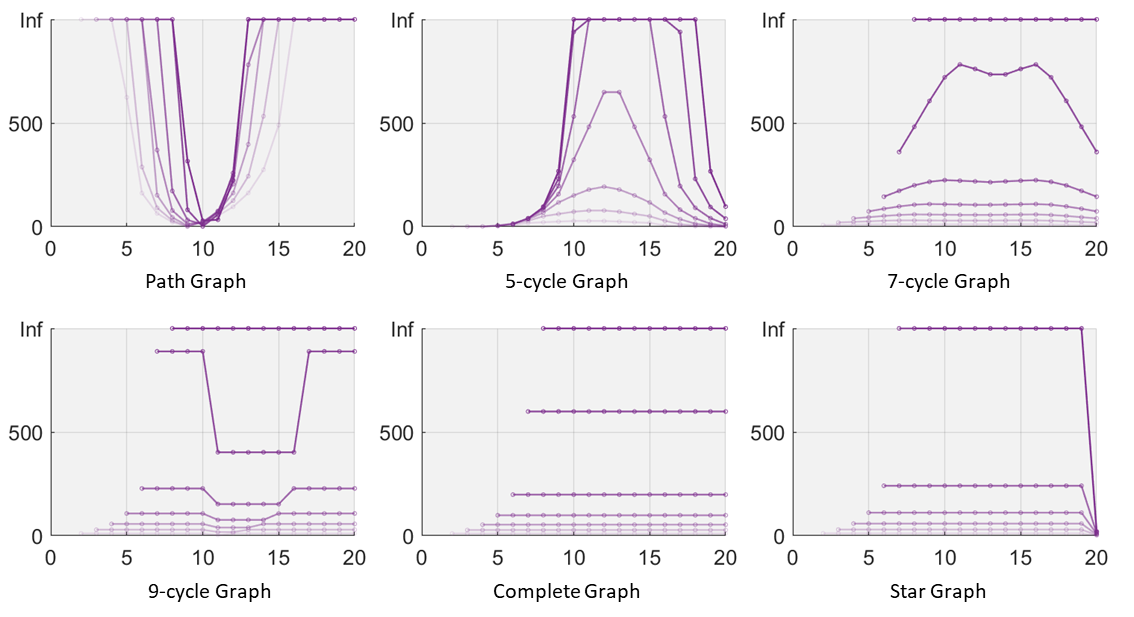}
    \caption{The risk profile with a different number of failures occurs at agent $1$ to $7$. The $y$ denotes the risk value $\casm$, the $x$ axis represents the agent's label, and a darker color denotes more existing failures.}
    \label{fig:risk_num}
\end{figure}

\vspace{0.5mm}
\noindent \underline{$p$-Cycle Graph:}
Agents are connected in a cyclic topology where each node communicates with up to $p$ nearest neighbors on each side. The resulting risk of cascading failure $\casm$ exhibits a localized peak around the failed node and decays symmetrically along the cycle. As $p$ increases, information exchange becomes denser, and the risk distribution gradually transitions toward that of a complete graph, where spatial variation vanishes. This behavior highlights the trade-off between connectivity and risk localization predicted by the spectral structure of the Laplacian.

\vspace{0.5mm}
\noindent \underline{Complete Graph:}
In the complete topology, every agent communicates with all others. Consequently, all nodes experience identical risk of cascading failure $\casm$, independent of their position in the network. The uniform risk distribution arises from the perfect symmetry of the complete graph and confirms the results in Lemmas~\ref{lem:sig_complete} and~\ref{lem:complete_mu_sig_cond}. This case serves as a limiting benchmark where topological homogeneity eliminates spatial dependence in risk propagation.

\vspace{0.5mm}
\noindent \underline{Star Graph:}
In the star topology, a single central node connects to all peripheral nodes, while peripheral agents communicate only through the center. Simulations show that the central agent experiences the highest risk of cascading failure due to its direct exposure to all disturbances, whereas the peripheral nodes share identical but lower $\casm$ values. This asymmetric pattern aligns with the theoretical characterization in Lemma~\ref{lem:star_mu_sig} and underscores the vulnerability of hub nodes in hierarchically structured networks.

\begin{figure}[t]
    \centering
    \includegraphics[width=\linewidth]{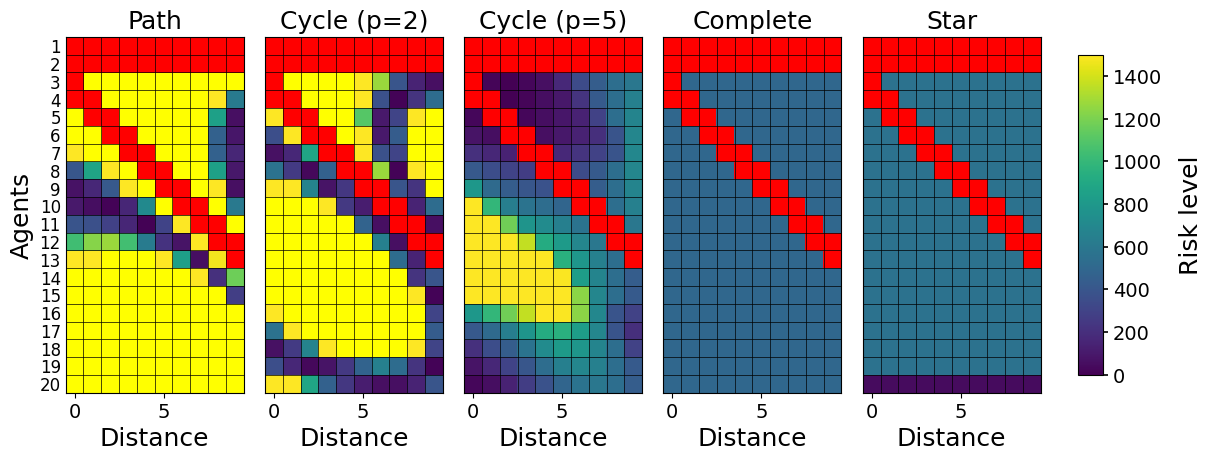}
    \caption{Cascading risk profile $\rpro$ for a fixed number of existing failures $m$ placed at different locations in the graph. Yellow indicates $ \casm = \infty$; red nodes denote existing failures.}
    \label{fig:risk_loc}
\end{figure}

\subsection{Characteristics of Existing Failures}
When multiple agents fail to maintain the $c$-consensus, both the number and spatial distribution of these failures significantly influence the overall risk landscape. In this section, we analyze two distinct characteristics of the existing failure set $\mathcal{I}_m$: (\emph{i}) the number of failed agents $m$, and (\emph{ii}) their spatial distribution in the communication graph.

\begin{figure*}[t]
    \centering
    \includegraphics[width=\linewidth]{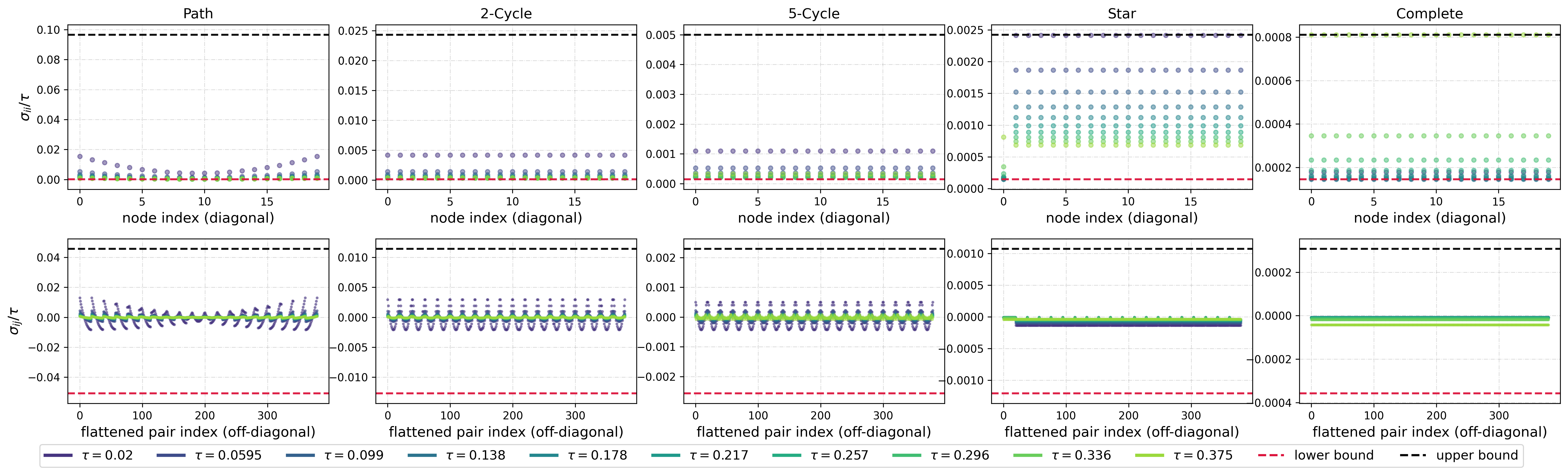}
    \caption{Evaluation of covariance bounds on selected network topologies. The top row shows diagonal pairs $(i,i)$ and the bottom row shows off-diagonal pairs $(i,j)$, each evaluated over varying time-delays $\tau$.}
    \label{fig:sigma_bound_all}
\end{figure*}

\begin{figure*}[t]
    \centering
    \includegraphics[width=\linewidth]{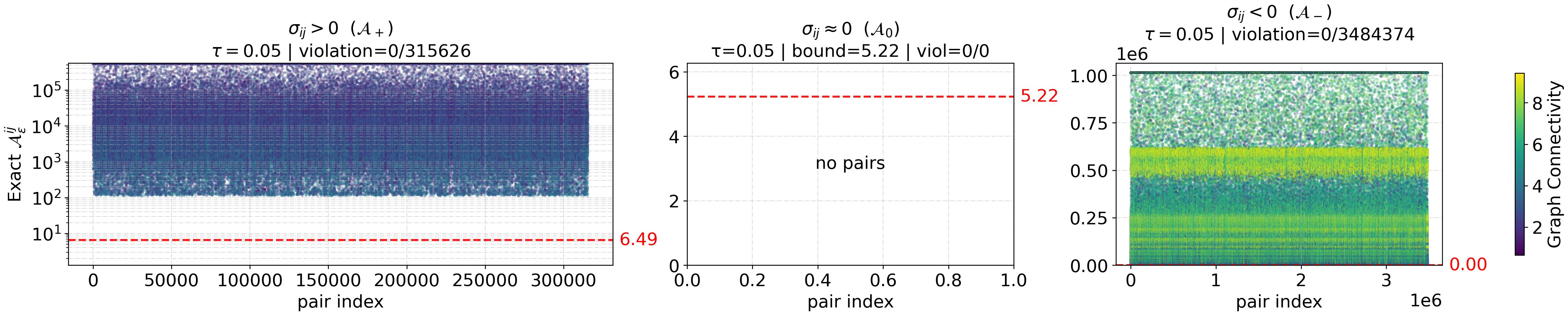}
    \caption{Empirical validation of the best-achievable risk of cascading failures $\cass$ in Theorem~\ref{thm:risk_bound} using $10^4$ randomly generated connected graphs satisfying Assumption~\ref{asp:stable}. 
    Each point represents an ordered node pair $(i,j)$, totaling $3.8\times10^6$ pairs across all graphs. The pair indices are flattened, with no cases of $\sigma_{ij}=0$ and no violations of the theoretical bound observed.}
    \label{fig:risk_bound}
\end{figure*}

\begin{figure}[t]
    \centering
    \includegraphics[width=\linewidth]{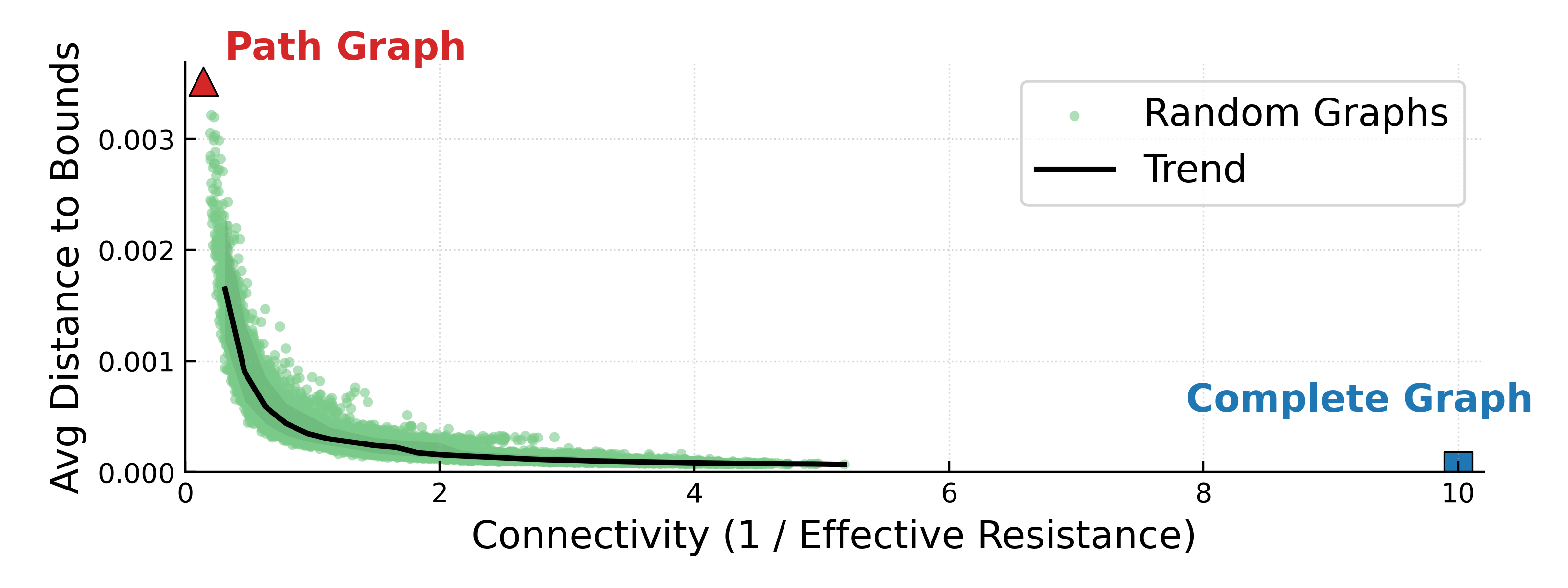}
    \caption{Average deviation of $\sigma_{ij}$ from the theoretical bounds versus graph connectivity for $n=20$. 
    Each point represents one connected graph, with the path and complete graphs shown as the extrema corresponding to the lowest and highest connectivity, respectively.}
    \label{fig:sigma_tightness_vs_connectivity}
\end{figure}

\vspace{1mm}
\noindent \underline{Number of Existing Failures:}
Figure~\ref{fig:risk_num} illustrates how the risk of cascading failure $\casm$ evolves as the number of failed agents increases. The results show clear topology-dependent patterns.  
For complete graphs, $\casm$ remains uniform across all agents regardless of $m$, confirming that perfect symmetry yields identical risk levels. In contrast, the path and $p$-cycle graphs exhibit localized and asymmetric growth of risk: as $m$ increases, the region of elevated $\casm$ broadens outward from the failure cluster, with the highest peaks forming near the boundaries or adjacent to existing failures.  
The $5$-cycle case shows a smoother and more uniform risk distribution than the path or $2$-cycle, reflecting the stronger coupling and reduced spatial localization at higher connectivity.  
A counter-intuitive observation from Fig.~\ref{fig:risk_num} is that greater connectivity does not always mitigate the risk. In the presence of time delay, tighter coupling can amplify correlations among agents, causing $\casm$ to increase near the failed nodes—consistent with the delay-induced trade-off discussed in~\cite{Somarakis2019g}.

\vspace{1mm}
\noindent \underline{Location Distribution:}
Figure~\ref{fig:risk_loc} fixes the number of existing failures $m$ and varies their spatial placement. In the path and $p$-cycle graphs, clustered failures merge their influence zones and form a single ridge of high risk of cascading failure $\casm$, whereas spatially separated failures produce multiple localized peaks whose magnitudes decay with graph distance. Increasing $p$ smooths the profile and broadens the affected region, while boundary failures in the path yield asymmetric spreading. In contrast, the complete and star graphs exhibit location-invariant behavior: for a fixed $m$, the overall $\casm$ remains unchanged regardless of where the failures occur, provided they are not at the central node in the star topology. This invariance agrees with Lemmas~\ref{lem:complete_mu_sig_cond} and~\ref{lem:star_mu_sig}, where the conditional statistics depend solely on the number of failures and topological symmetry rather than their spatial indices.

\subsection{Fundamental Limits on the Risk of Cascading Failures}

We evaluate the empirical tightness of the covariance bounds derived in Theorem~\ref{thm:sigma_bound} by computing the pairwise risk of cascading failures in several representative network topologies, as illustrated in Fig.~\ref{fig:sigma_bound_all}. The results show that the analytical covariance bounds are effective, and the gap between the empirical and theoretical values narrows as the graph becomes more connected.

Fig.~\ref{fig:sigma_tightness_vs_connectivity} quantifies this behavior by plotting the average deviation between empirical $\sigma_{ij}$ values and their theoretical limits as a function of graph connectivity, measured by the effective resistance
\begin{equation}
\label{eq:ER}
R_{\mathrm{eff}} = \frac{1}{n-1} \sum_{k=2}^{n} \frac{1}{\lambda_k},
\end{equation}
where $\lambda_2,\ldots,\lambda_n$ are the nonzero Laplacian eigenvalues. Smaller $R_{\mathrm{eff}}$ corresponds to stronger global connectivity, and the results show that the deviation decreases monotonically as $R_{\mathrm{eff}}$ decreases. Dense or expander-like networks thus exhibit tight covariance envelopes due to their concentrated spectra, whereas sparse topologies such as paths and small $p$-cycles show larger variations arising from wide spectral gaps. These findings confirm that $\bar{S} := [10^{-3}, \pi/2 - 10^{-3}]$ serves as an asymptotically sharp envelope for the feasible range of $\lambda_i\tau$ across all connected graphs satisfying Assumption~\ref{asp:stable}.

We next validate the empirical correctness of the lower bound in Theorem~\ref{thm:risk_bound}. A total of $10{,}000$ connected graphs with $n=20$ are randomly generated under Assumption~\ref{asp:stable} following the Erd\H{o}s--R\'enyi model~\cite{Erdos1959}, with edge probabilities uniformly sampled from a prescribed interval to ensure connectivity.  
The parameters are kept consistent with Sec.~\ref{sec:case-study} except that $\alpha=10{,}000$ and $c=2$. For each generated graph, we compute both the theoretical best achievable risk and the empirical risk of cascading failure $\cass$ across all node pairs. The comparison is shown in Fig.~\ref{fig:risk_bound}, where the red dashed line denotes the best achievable risk, which remains identical across all generated graphs since it depends solely on the global parameters $(\alpha,c,\epsilon)$ rather than the specific network topology.

All samples satisfy the analytical best achievable risk, confirming its universal validity across connected topologies. As graph connectivity increases, the points concentrate near the diagonal, indicating that the bound becomes tight for dense or expander-like graphs. In contrast, sparse graphs display a larger spread due to mixed signs of $\sigma_{ij}$, consistent with the three-case structure in Theorem~\ref{thm:risk_bound}.  
Beyond theoretical interest, this result provides a practical advantage for network design: the derived best achievable risk serves as a feasibility certificate, allowing one to assess whether a desired cascading-risk target can be achieved \emph{without} exhaustively enumerating all possible graph configurations. Hence, the best achievable risk acts as a universal and computationally efficient benchmark for evaluating the attainable cascading-risk level in any connected consensus network satisfying the stability condition.

\section{Conclusion}

This work presented a unified framework for quantifying cascading failures in time-delay consensus networks through the lens of the Average Value-at-Risk ($\AVAR$) measure. Building upon the stochastic consensus model for temporal rendezvous, we characterized how existing failures reshape the steady-state distribution of agent deviations and derived closed-form expressions for the resulting risk of cascading failures. The formulation captures both the marginal variances and pairwise correlations of the network observables, thereby linking the risk of secondary failures directly to the Laplacian spectrum, the communication time-delay, and the noise intensity.

Theoretical analysis established explicit lower bounds on the best-achievable risk of cascading failures that hold for any connected topology satisfying the delay stability condition. These bounds expose fundamental performance limits imposed by time-delay and graph connectivity, and they act as fast feasibility certificates for design targets without requiring exhaustive simulation across candidate graphs. Numerical studies on canonical graphs revealed distinct topological signatures of risk of cascading failure, including localization and asymmetry on paths, spatial uniformity on complete graphs, and hub dominance on stars. Large-scale experiments with $10^4$ randomly generated connected graphs confirmed that all realizations respect the analytical lower bound, which becomes tight as connectivity increases.

Overall, the proposed framework provides a systematic foundation for assessing the system's vulnerability and quantifying how existing failures amplify risk propagation in delayed multi-agent networks. Beyond analysis, a single-step incremental update rule enables efficient re-evaluation of conditional risk as new failures are observed, which substantially reduces computation time compared with recomputing from scratch. Future directions include extending the analysis to distributionally robust formulations \cite{liu2024data, pandey2023quantification, pandey2025distributionally} that capture uncertainty in noise statistics, developing adaptive control strategies to mitigate risk of cascading failure in real time, and exploring extensions to nonlinear or switching network dynamics.

\printbibliography

@book{van2010graph,
    title = {{Graph spectra for complex networks}},
    year = {2010},
    author = {Van Mieghem, Piet},
    publisher = {Cambridge University Press}
}

@book{Follmer2016,
    title = {{Stochastic Finance}},
    year = {2016},
    booktitle = {Stochastic Finance},
    author = {F{\"{o}}llmer, Hans and Schied, Alexander},
    month = {7},
    publisher = {De Gruyter}
}

@book{tong2012multivariate,
    title = {{The multivariate normal distribution}},
    year = {2012},
    author = {Tong, Yung Liang},
    publisher = {Springer Science {\&} Business Media}
}

@article{gray2006toeplitz,
    title = {{Toeplitz and circulant matrices: A review}},
    year = {2006},
    author = {Gray, Robert M},
    publisher = {now publishers inc}
}

@ARTICLE{7438924,
  author={M. {Rahnamay-Naeini} and M. M. {Hayat}},
  journal={IEEE Transactions on Smart Grid}, 
  title={Cascading Failures in Interdependent Infrastructures: An Interdependent Markov-Chain Approach}, 
  year={2016},
  volume={7},
  number={4},
  pages={1997-2006},
  }

@article{zhang2019robustness,
  title={Robustness of interdependent cyber-physical systems against cascading failures},
  author={Zhang, Y. and Ya{\u{g}}an, O.},
  journal={IEEE Transactions on Automatic Control},
  volume={65},
  number={2},
  pages={711--726},
  year={2019},
  publisher={IEEE}
}

@article{zhang2018cascading,
  title={Cascading failures in interdependent systems under a flow redistribution model},
  author={Zhang, Y. and Arenas, A. and Ya{\u{g}}an, O.},
  journal={Physical Review E},
  volume={97},
  number={2},
  pages={022307},
  year={2018},
  publisher={APS}
}

@inproceedings{ghaedsharaf2016interplay,
  title={Interplay between performance and communication delay in noisy linear consensus networks},
  author={Ghaedsharaf, Y. and Siami, M. and Somarakis, C. and Motee, N.},
  booktitle={2016 European Control Conference (ECC)},
  pages={1703--1708},
  year={2016},
  organization={IEEE}
}

@inproceedings{Somarakis2017a,
    title = {{Aggregate fluctuations in time-delay linear consensus networks: A systemic risk perspective}},
    year = {2017},
    booktitle = {Proceedings of the American Control Conference},
    author = {Somarakis, C. and Ghaedsharaf, Y. and Motee, N.}
}

@inproceedings{Somarakis2016g,
    title = {{Interplays Between Systemic Risk and Network Topology in Consensus Networks}},
    year = {2016},
    booktitle = {IFAC-PapersOnLine},
    author = {Somarakis, C. and Siami, M. and Motee, N.},
    number = {22},
    volume = {49}
}

@article{rockafellar2000optimization,
    title = {{Optimization of Conditional Value-at-Risk}},
    year = {1999},
    journal = {Portfolio The Magazine Of The Fine Arts},
    author = {Rockafellar, R. T. and Uryasev, S.},
    pages = {1--26},
    volume = {2}
}

@article{Somarakis2019g,
    title = {{Time-delay origins of fundamental tradeoffs between risk of large fluctuations and network connectivity}},
    year = {2019},
    journal = {IEEE Transactions on Automatic Control},
    author = {Somarakis, C. and Ghaedsharaf, Y. and Motee, N.},
    number = {9},
    volume = {64}
}

@INPROCEEDINGS{9683468,
  author={Liu, Guangyi and Somarakis, Christoforos and Motee, Nader},
  booktitle={2021 60th IEEE Conference on Decision and Control (CDC)}, 
  title={Risk of Cascading Failures in Time-Delayed Vehicle Platooning}, 
  year={2021},
  volume={},
  number={},
  pages={4841-4846}}

@INPROCEEDINGS{liu2022risk,
  author={Liu, Guangyi and Pandey, Vivek and Somarakis, Christoforos and Motee, Nader},
  booktitle={2022 American Control Conference (ACC)}, 
  title={Risk of Cascading Failures in Multi-agent Rendezvous with Communication Time Delay}, 
  year={2022},
  volume={},
  number={},
  pages={2172-2177}}

@incollection{sarykalin2008value,
  title={Value-at-risk vs. conditional value-at-risk in risk management and optimization},
  author={Sarykalin, Sergey and Serraino, Gaia and Uryasev, Stan},
  booktitle={State-of-the-art decision-making tools in the information-intensive age},
  pages={270--294},
  year={2008},
  publisher={Informs}
}

@article{somarakis2023risk,
  title={Risk of Phase Incoherence in Wide Area Control of Synchronous Power Networks with Time-Delayed and Corrupted Measurements},
  author={Somarakis, Christoforos and Liu, Guangyi and Motee, Nader},
  journal={IEEE Transactions on Automatic Control},
  year={2023},
  publisher={IEEE}
}

@INPROCEEDINGS{liu2023cascading,
  author={Liu, Guangyi and Pandey, Vivek and Somarakis, Christoforos and Motee, Nader},
  booktitle={2023 American Control Conference (ACC)}, 
  title={Cascading Waves of Fluctuation in Time-delay Multi-agent Rendezvous}, 
  year={2023},
  volume={},
  number={},
  pages={4110-4115}}

@book{horn2012matrix,
  title={Matrix analysis},
  author={Horn, Roger A and Johnson, Charles R},
  year={2012},
  publisher={Cambridge university press}
}

@article{liu2009trace,
  title={Trace inequalities for matrix products and trace bounds for the solution of the algebraic Riccati equations},
  author={Liu, Jianzhou and Zhang, Juan and Liu, Yu},
  journal={Journal of Inequalities and Applications},
  volume={2009},
  pages={1--17},
  year={2009},
  publisher={Springer}
}

@article{ren2007information,
  title={Information consensus in multivehicle cooperative control},
  author={Ren, W. and Beard, R. W. and Atkins, E. M.},
  journal={IEEE Control systems magazine},
  volume={27},
  number={2},
  pages={71--82},
  year={2007},
  publisher={IEEE}
}

@article{olfati2007consensus,
  title={Consensus and cooperation in networked multi-agent systems},
  author={Olfati-Saber, R. and Fax, J. A. and Murray, R. M.},
  journal={Proceedings of the IEEE},
  volume={95},
  number={1},
  pages={215--233},
  year={2007},
  publisher={IEEE}
}

@article{olfati2004consensus,
  title={Consensus problems in networks of agents with switching topology and time-delays},
  author={Olfati-Saber, R. and Murray, R. M.},
  journal={IEEE Transactions on automatic control},
  volume={49},
  number={9},
  pages={1520--1533},
  year={2004},
  publisher={IEEE}
}

@article{krueger1998perception,
  title={On the perception of social consensus},
  author={Krueger, J.},
  journal={Advances in experimental social psychology},
  volume={30},
  pages={163--240},
  year={1998},
  publisher={Elsevier}
}

@article{xie2011social,
  title={Social consensus through the influence of committed minorities},
  author={Xie, J. and Sreenivasan, Sameet and Korniss, Gyorgy and Zhang, Weituo and Lim, Chjan and Szymanski, Boleslaw K},
  journal={Physical Review E},
  volume={84},
  number={1},
  pages={011130},
  year={2011},
  publisher={APS}
}

@inproceedings{fagiolini2008consensus,
  title={Consensus-based distributed intrusion detection for multi-robot systems},
  author={Fagiolini, A. and Pellinacci, Marco and Valenti, Gianni and Dini, Gianluca and Bicchi, Antonio},
  booktitle={2008 IEEE International Conference on Robotics and Automation},
  pages={120--127},
  year={2008},
  organization={IEEE}
}

@article{rockafellar2002conditional,
    title = {{Conditional value-at-risk for general loss distributions}},
    year = {2002},
    journal = {Journal of Banking and Finance},
    author = {Rockafellar, R. Tyrrell and Uryasev, Stanislav},
    number = {7},
    pages = {1443--1471},
    volume = {26}
}

@article{Batson2014twice,
  title={Twice - Ramanujan Sparsifiers},
  author={J. Batson, D.A. Spielman and N. Srivastava},
  journal={SIAM Review},
  volume={56},
  number={2},
  pages={315--334},
  year={2014}
}

@article{diane1981schur,
  title={Schur Complements and Statistics},
  author={Diane Valerie Ouellette},
  journal={Linear Algebra and Its Applications},
  volume={36},
  number={9},
  pages={187--295},
  year={1981},
  publisher={Elsevier}
}

@inproceedings{liu2022emergence,
  title={Emergence of Cascading Risk and Role of Spatial Locations of Collisions in Time-Delayed Platoon of Vehicles},
  author={Liu, Guangyi and Somarakis, Christoforos and Motee, Nader},
  booktitle={2022 IEEE 61st Conference on Decision and Control (CDC)},
  pages={6460--6465},
  year={2022},
  organization={IEEE}
}

@inproceedings{saldana2018modquad,
  title={Modquad: The flying modular structure that self-assembles in midair},
  author={Saldana, David and Gabrich, Bruno and Li, Guanrui and Yim, Mark and Kumar, Vijay},
  booktitle={2018 IEEE International Conference on Robotics and Automation (ICRA)},
  pages={691--698},
  year={2018},
  organization={IEEE}
}

@book{greene2003econometric,
  title={Econometric analysis},
  author={Greene, William H},
  year={2003},
  publisher={Pearson Education India}
}

@article{Erdos1959,
  author    = {Paul Erd{\H{o}}s and Alfr{\'e}d R{\'e}nyi},
  title     = {On Random Graphs {I}},
  journal   = {Publicationes Mathematicae (Debrecen)},
  volume    = {6},
  pages     = {290--297},
  year      = {1959}
}

@article{liu2025risk,
  title={Risk of Cascading Collisions in Network of Vehicles with Delayed Communication},
  author={Liu, Guangyi and Somarakis, Christoforos and Motee, Nader},
  journal={IEEE Transactions on Automatic Control},
  year={2025},
  publisher={IEEE}
}

@inproceedings{liu2024data,
  title={Data-driven distributionally robust mitigation of risk of cascading failures},
  author={Liu, Guangyi and Amini, Arash and Pandey, Vivek and Motee, Nader},
  booktitle={2024 American Control Conference (ACC)},
  pages={3264--3269},
  year={2024},
  organization={IEEE}
}

@inproceedings{pandey2023quantification,
  title={Quantification of Distributionally Robust Risk of Cascade of Failures in Platoon of Vehicles},
  author={Pandey, Vivek and Liu, Guangyi and Amini, Arash and Motee, Nader},
  booktitle={2023 62nd IEEE Conference on Decision and Control (CDC)},
  pages={7401--7406},
  year={2023},
  organization={IEEE}
}

@article{pandey2025distributionally,
  title={Distributionally Robust Cascading Risk Quantification in Multi-Agent Rendezvous: Effects of Time Delay and Network Connectivity},
  author={Pandey, Vivek and Motee, Nader},
  journal={arXiv preprint arXiv:2507.23489},
  year={2025}
}

\appendix
\noindent \underline{\bf{Proof of Lemma \ref{lem:y_steady}:}}
The result is a immediate extension of the steady-state statistics of the observables in \cite{Somarakis2019g} by considering a centering matrix $M_n$.
\hfill$\square$

\vspace{1mm}

\noindent \underline{\bf{Proof of Theorem \ref{thm:gen_cas_ren_risk}:}}
Considering the fact that $|\bar{y}_i| \in U_{\delta^*}$, the conditional probability of $|\bar{y}_{j}| > z$ with $z \geq 0$ given that $|\bar{y}_i| \in U_{\delta^*}$ is:
\[
    \mathbb{P} \left\{|\bar{y}_{j}| > z \,\big|\, |\bar{y}_i| \in U_{\delta^*} \right\} = \frac{\mathbb{P} \{ |\bar{y}_{j}| > z \bigwedge |\bar{y}_i| \in U_{\delta^*}\}}{{\mathbb{P}\{|\bar{y}_i| \in U_{\delta^*}\}}},
\]
where
\begin{equation*}
\scalebox{0.9}{$
    \mathbb{P}\left(|\bar{y}_i| \in U_{\delta^*}\right)
    = \frac{2}{\sqrt{2\pi}\sigma_i}\int_{c\frac{\delta^*+1}{\delta^*+\alpha}}^\infty e^{-\frac{y^2}{2\sigma_i^2}}\,\mathrm{d}y
    = 1 - \textup{erf}\left(\frac{c(\delta^*+1)}{\sqrt{2}\sigma_i(\delta^*+\alpha)}\right).
$}
\end{equation*}

Using the bi-variate normal distribution probability density function, one has
\begin{equation} \label{eq:joint_prob_int}
    \scalebox{0.9}{$\begin{aligned}
        &\mathbb{P} \{ |\bar{y}_{j}| > z \wedge |\bar{y}_i| \in U_{\delta^*}\} = \\
        &\hspace{0.5cm} \frac{1}{2\pi\sigma_i \sigma_j \sqrt{1-\rho_{ij}^2}} \int_{|\bar{y}_j| > z} \int_{|\bar{y}_i| \in U_{\delta^*}} h(\bar{y}_{i},\bar{y}_{j}) \textrm{d} \bar{y}_{i} \textrm{d} \bar{y}_{j},
    \end{aligned}$}
\end{equation}
where 
\begin{equation*}
    \scalebox{0.8}{$
    \begin{aligned}
        h(\bar{y}_{i},\bar{y}_{j}) &= \exp\left(-\frac{1}{2(1-\rho^2_{ij})}\left[(\frac{\bar{y}_i}{\sigi})^2 + (\frac{\bar{y}_j}{\sigj})^2 - 2 \rho_{ij}\frac{\bar{y}_i \bar{y}_j}{\sigi \sigj} \right] \right)\\
        &=\exp\left(-\frac{1}{2(1-\rho^2_{ij})} \left[(1-\rho_{ij}^2) \left(\frac{\bar{y}_j}{\sigj} \right)^2 + \left(\frac{\bar{y}_i}{\sigi} - \rho_{ij}\frac{\bar{y}_j}{\sigj} \right)^2 \right] \right).
    \end{aligned}
    $}
\end{equation*}
Then, the integral inside \eqref{eq:joint_prob_int} can be simplified as 
\begin{equation}    \label{eq:joint_prob_inside}
    \begin{aligned}
        &\scalebox{0.85}{$
        \int_{|\bar{y}_j| > z} \exp\left(-\frac{1}{2} \left(\frac{\bar{y}_j}{\sigj} \right)^2 \right) \int_{|\bar{y}_i| \in U_{\delta^*}} \exp \left( -\frac{1}{2} \left(\frac{\bar{y}_i - \frac{\sigi\rho_{ij} \bar{y}_j}{\sigj}}{\sigi \sqrt{1-\rho_{ij}^2}}\right)^2\right) \textrm{d} \bar{y}_{i} \textrm{d} \bar{y}_{j}$}\\
        &\scalebox{0.9}{$
        = \int_{|\bar{y}_j| > z} \exp\left(-\frac{1}{2} \left(\frac{\bar{y}_j}{\sigj} \right)^2 \right) \left(1 - \frac{1}{2}\Psi_{-} (\bar{y}_{j}) + \frac{1}{2}\Psi_{+} (\bar{y}_{j})  \right) \textrm{d} \bar{y}_{j}
        $},
    \end{aligned}
\end{equation}
where $\Psi_{\pm}(\cdot)$ are defined explicitly in \eqref{eq:Psi_pm}. Then, the equation \eqref{eq:joint_prob_int} can be written as
\[
    \mathbb{P} \{ |\bar{y}_{j}| > z \wedge |\bar{y}_i| \in U_{\delta^*}\} = \Theta_-(z) + \Theta_+(z),
\]
with $\Theta_\pm (\cdot)$ defined in \eqref{eq:Theta_pm}. Since conditional distribution of $\bar{y}_j$ obtains a continuous density function, $\mathfrak{R}^{i,j}_{\varepsilon}$ can be computed by solving $\Theta_-(z) + \Theta_+(z) = \varepsilon$ for $z$, which does not obtain a convenient explicit form but can be evaluated numerically. Since the conditional distribution of $\bar y_j$ given $|\bar y_i|\in U_{\delta^*}$ admits a continuous density, the mapping $z\mapsto \Theta_-(z)+\Theta_+(z)$ is continuous and strictly decreasing on $[0,\infty)$, with limits $1$ and $0$ at $z=0$ and $z\to\infty$, respectively. Hence, for any $\varepsilon\in(0,1)$, a unique solution $\mathfrak{R}^{i,j}_\varepsilon$ exists. 
Then, by the definition of conditional $\AVAR_\varepsilon$,
\begin{align*}
    \mathfrak{A}^{i,j}_{\varepsilon}
    &= \mathbb{E}\!\left[|\bar{y}_{j}|\,\middle|\,|\bar{y}_{j}| > \mathfrak{R}^{i,j}_{\varepsilon},\ |\bar{y}_i| \in U_{\delta^*}\right] \\
    &=
    \frac{
    \mathbb{E}\!\left[
    |\bar{y}_{j}|\,\bm{1}_{\{|\bar{y}_{j}| > \mathfrak{R}^{i,j}_{\varepsilon},\,|\bar{y}_i| \in U_{\delta^*}\}}
    \right]
    }{
    \mathbb{P}\!\left(
    |\bar{y}_{j}| > \mathfrak{R}^{i,j}_{\varepsilon},\ |\bar{y}_i| \in U_{\delta^*}
    \right)
    }.
\end{align*}
Using the joint Gaussian density, the numerator is
\begin{equation*}
        \frac{1}{2\pi\sigma_i \sigma_j \sqrt{1-\rho_{ij}^2}}
        \int_{|\bar{y}_j| > \mathfrak{R}^{i,j}_{\varepsilon}}
        \int_{|\bar{y}_i| \in U_{\delta^*}}
        |\bar{y}_j|\, h(\bar{y}_{i},\bar{y}_{j}) \textrm{d} \bar{y}_{i} \textrm{d} \bar{y}_{j}.
\end{equation*}
Moreover, since
\[
    \mathbb{P}\!\left(|\bar{y}_{j}| > \mathfrak{R}^{i,j}_{\varepsilon} \,\middle|\, |\bar{y}_i| \in U_{\delta^*}\right)=\varepsilon,
\]
the denominator can be written as
\begin{equation*}
\begin{aligned}
    &\scalebox{0.85}{$
    \mathbb{P}\!\left(|\bar{y}_{j}| > \mathfrak{R}^{i,j}_{\varepsilon},\ |\bar{y}_i| \in U_{\delta^*} \right)
    = \mathbb{P}\!\left(|\bar{y}_{j}| > \mathfrak{R}^{i,j}_{\varepsilon} \,\middle|\, |\bar{y}_i| \in U_{\delta^*}\right) \mathbb{P}\!\left(|\bar{y}_i| \in U_{\delta^*}\right)$} \\
    & \hspace{3.4cm} \scalebox{0.85}{$
    = \varepsilon \left[1 - \textup{erf}\left(\frac{c(\delta^*+1)}{\sqrt{2}\sigma_i(\delta^*+\alpha)}\right)\right].
    $}
\end{aligned}
\end{equation*}
Substituting the numerator and denominator yields \eqref{eq:range_bounded_avar}. The expression of $\cass$ then follows by using \eqref{eq:risk_def_cond}. \hfill$\square$

\vspace{0.1mm}

\noindent \underline{\bf{Proof of Lemma \ref{lem:multi_conditional_prob}:}}
The result follows directly after Lemma \ref{lem:y_steady} and the conditional distribution of a multi-variate normal random variable as in \cite{tong2012multivariate}. \hfill$\square$

\vspace{0.1mm}
\noindent \underline{\textbf{Proof of Theorem \ref{thm:mul_cas_ren_risk}:}}
Using the result obtained from \eqref{eq:lem_mul_cond} and the cumulative distribution function of the folded normal distribution, the evaluation of $\casv$ is given by
\begin{align*}
    \casv = \inf \left\{z \,\Big|\, 1 - \frac{1}{2} \left(\erf(\frac{z-\tm}{\sqrt{2\ts^2}}) + \erf(\frac{z+\tm}{\sqrt{2\ts^2}})\right) < \varepsilon \right\},
\end{align*}
which, given the continuous nature of the density function, can be obtained by solving $1 - \frac{1}{2} (\erf(\frac{z-\tm}{\sqrt{2\ts^2}}) + \erf(\frac{z+\tm}{\sqrt{2\ts^2}})) = \varepsilon$ for $z$.
Then, following \eqref{eq:mul_avar}, the value of $\mathfrak{A}^{\mathcal{I}_m,j}_\varepsilon$ is given by
\begin{align*}
    \mathbb{E}[|y| \, \big| \, |y| > \mathfrak{R}^{\mathcal{I}_m,j}_\varepsilon]
    &= \frac{\mathbb{E}\!\left[|y| \cdot \bm{1}_{\{|y| > \mathfrak{R}^{\mathcal{I}_m,j}_\varepsilon\}}\right]}
    {\mathbb{P}\!\left(|y| > \mathfrak{R}^{\mathcal{I}_m,j}_\varepsilon\right)} \\
    &\hspace{-3cm} =\frac{1}{\sqrt{2\pi} \varepsilon\tilde{\sigma}} \int_{\mathfrak{R}^{\mathcal{I}_m,j}_\varepsilon}^{\infty} y \bigg( e^{-\frac{(y -\Tilde{\mu})^2}{2\Tilde{\sigma}^2}} + e^{-\frac{(y + \Tilde{\mu})^2}{2\Tilde{\sigma}^2}}\bigg) \text{d} y ,
\end{align*}
where $\bm{1}_{\{ \cdot \}}$ denotes the indicator function. Using the result from \cite{greene2003econometric} (Theorem 22.2), one has
\begin{equation*}
    \scalebox{0.8}{$
    \begin{aligned}
        \mathfrak{A}^{\mathcal{I}_m,j}_\varepsilon &= \frac{1}{\varepsilon} \bigg[ \frac{\tilde{\mu}}{2}\left(\textup{erf} \left(\frac{\casv + \tilde{\mu}}{\sqrt{2}\tilde{\sigma}} \right) -\textup{erf}\left(\frac{\casv - \tilde{\mu}}{\sqrt{2}\tilde{\sigma}} \right) \right)\\
        &\hspace{3cm} + \frac{\tilde{\sigma}}{\sqrt{2\pi}} \left(e^{-\left(\frac{\casv + \tilde{\mu}}{\sqrt{2}\tilde{\sigma}} \right)^2 } + e^{-\left(\frac{\casv - \tilde{\mu}}{\sqrt{2}\tilde{\sigma}} \right)^2 }\right) \bigg].
    \end{aligned}
    $}
\end{equation*}
Then, one can compare $\casa$ with $c$ and $\frac{c}{\alpha}$ to conclude the conditions for cases when $\casm = 0$ and $\casm = \infty$. When $\casa \in (\frac{c}{\alpha}, c)$, one has $\casm = \frac{\alpha\casa-c}{c-\casa}$.
\hfill$\square$

\vspace{0.1mm}
\noindent \underline{\textbf{Proof of Theorem \ref{thm:conditional_prob_update}}:}
To characterize the effect of the failures of $m + 1$ agents, let us consider the block covariance matrix
$$
\tilde{\Sigma}_{22}' = \begin{bmatrix}
\tilde{\Sigma}_{22} & \tilde{\Sigma}_{21}(k)\\
\tilde{\Sigma}_{12}(k) & \tilde{\Sigma}_{kk}
\end{bmatrix},
$$
where $\tilde{\Sigma}_{kk} = \sigma _k^2, \tilde{\Sigma}_{21}(k) = \tilde{\Sigma}_{12}^\top(k) =  \begin{bmatrix}
\sigma_{ki_1} & \dots  \sigma_{ki_m}
\end{bmatrix}$, and 
$\tilde{\Sigma}_{22}$ is obtained from \eqref{eq:block_cov}. Since $\tilde{\Sigma}_{22}$ is invertible, we have 
\begin{equation*}
    \scalebox{0.9}{$
        \tilde{\Sigma}_{22}'^{-1} = \begin{bmatrix}
        \tilde{\Sigma}_{22}^{-1} (I_m+ \tilde{\Sigma}_{21}(k) S^{-1} \tilde{\Sigma}_{12}(k) \tilde{\Sigma}_{22}^{-1})  & - \tilde{\Sigma}_{22}^{-1}\tilde{\Sigma}_{21}(k)S^{-1}\\
        S^{-1} \tilde{\Sigma}_{12}(k) \tilde{\Sigma}_{22}^{-1} & S^{-1}
        \end{bmatrix},
    $}
\end{equation*}
where 
$
    S = \tilde{\Sigma}_{22}'/\tilde{\Sigma}_{22} =  \tilde{\Sigma}_{kk} -  \tilde{\Sigma}_{12}(k)\tilde{\Sigma}_{22}^{-1}\tilde{\Sigma}_{21}(k) = \tilde{\sigma}_k^2 
$
is the Schur complement \cite{diane1981schur} of block $\tilde{\Sigma}_{22}$ of the matrix $\tilde{\Sigma}_{22}'$. Let us consider the vector of failed observables of $(m+1)$ agents as
$
[
\bar{\bm{y}}_f ~ \bar{{y}}_{f_k}
]^\top,$ where $\bm{\bar{y}}_{f} = [
\bar{y}_{f_1},...,\bar{y}_{f_m}
]^\top$ is the vector of failed observables of $m$ agents and $\bar{{y}}_{f_k}$ is the failed observable of agent k, i.e., $(m+1)^{th}$ agent.
Consider the following vectors, 
$\tilde{\Sigma}_{12}' = [\tilde{\Sigma}_{12} ~ \tilde{\Sigma}_{12}(k)] = \tilde{\Sigma}_{12}'^{T}$ and the conditional cross-covariance of agents $j$ and $k$ after $m$ agents have failed $\tilde{\sigma}_{jk} = \sigma_{jk} - \tilde{\Sigma}_{12}(j) \tilde{\Sigma}_{22}^{-1} \tilde{\Sigma}_{21}(k)$,   
the result then follows directly by applying Lemma \ref{lem:multi_conditional_prob}. \hfill$\square$

\vspace{0.1mm}
\noindent \underline{\textbf{Proof of Lemma \ref{lem:sig_complete}:}}
Using the result of Lemma \ref{lem:y_steady} and considering the eigenvalues of the complete graph $\lambda_i = n$ for any $i \in \{ 2, \dots, n\}$. For the case of $i \neq j$, 
\begin{equation*}
    \scalebox{0.8}{$
    \begin{aligned}
    \sigma_{ij} &= \frac{1}{2} b^2 \frac{\cos (n \tau)}{n (1 - \sin (n \tau))} \sum_{k=2}^{n}  (\bm{m}_i^\top \bm{q}_k)(\bm{m}_j^\top \bm{q}_k)\\ 
    &= \frac{1}{2} b^2 \frac{\cos (n \tau)}{n (1 - \sin (n \tau))} \left((QQ^\top)_{ij} - \bm{q}_1^\top\bm{q}_1\right) = -\frac{b^2}{2n^2} \frac{\cos (n \tau)}{1 - \sin (n \tau)}.
    \end{aligned}
    $}
\end{equation*}
Then, the result for $i = j$ follows similarly by considering the fact that $Q$ is an orthogonal matrix. 
\hfill$\square$

\vspace{0.1mm}
\noindent \underline{\textbf{Proof of Lemma \ref{lem:complete_mu_sig_cond}:}} The structure of $\tilde{\Sigma}_{22}$ in a complete graph is a special case of the Toeplitz matrix \cite{gray2006toeplitz} where the off-diagonal elements are identical. In addition, $\tilde{\Sigma}_{22}$ can be written as sum of a diagonal matrix and a rank one matrix, such that
\begin{equation*}
    \scalebox{1}{$
        \tilde{\Sigma}_{22} =  \frac{(n-1)b^2}{2n^2}\frac{1-\sin(n\tau)}{ \cos(n\tau)}\left((1-\rho)\mathbf{I}_{m} + \rho \mathbf{1}_{m} \mathbf{1}_{m}^\top\right),
    $}
\end{equation*}
where $\rho = \frac{1}{1-n}$. Then, one can apply the Sherman-Morrison Formula \cite{Batson2014twice} to obtain
\begin{equation*}
    \scalebox{0.9}{$
    \begin{aligned}
        \tilde{\Sigma}_{22}^{-1} &= \frac{2n^2}{b^2 (n-1)}\frac{1-\sin(n\tau)}{\cos(n\tau)} \frac{1}{1-\rho}\left(\mathbf{I}_{m} - \frac{\rho}{1 + ({m} - 1) \rho}\mathbf{1}_{m} \mathbf{1}_{m}^\top\right)\\
        & = \frac{2n}{b^2}\frac{1-\sin(n\tau)}{\cos(n\tau)}\left(\mathbf{I}_{m} + \frac{\mathbf{1}_{m} \mathbf{1}_{m}^\top}{n-m} \right),
    \end{aligned}
    $}
\end{equation*}
which is well-defined since $m < n$. Then, the result follows immediately by applying Lemma \ref{lem:multi_conditional_prob}. \hfill$\square$

\vspace{0.1mm}
\noindent \underline{\textbf{Proof of Lemma \ref{lem:sig_star}}:} The proof is a direct result of Lemma \ref{lem:y_steady} by considering the eigenvalues of the star graph topology $\lambda_i = 1$ for any $i \in \{ 2, \dots, n - 1\}$ and $\lambda_n = n$. With some basic algebraic calculations, the covariance term can be written as
\begin{equation*}
    \begin{aligned} 
    \sigma_{ij} &= \frac{1}{2} b^2 \bigg[g(\tau) \left((QQ^\top)_{ij} - (\bm{q}_1\bm{q}_1^\top)_{ij} - (\bm{q}_n\bm{q}_n^\top)_{ij} \right) \\
    &\hspace{2cm} + \frac{1}{n}g(n\tau)(\bm{q}_n\bm{q}_n^\top)_{ij}\bigg], 
    \end{aligned}
\end{equation*}
where $g(x)$ is as defined in $\eqref{eq:g}$
For the case of $i \neq j$ and all $i, j \in \{ 1, \dots, n - 1\}$ 
\begin{equation*}
    \scalebox{1}{$
        \sigma_{ij} = \frac{1}{2} b^2 \left(g(\tau)\left(0 - \frac{1}{n} - \frac{1}{n(n-1)}\right) + \frac{1}{n} g(n\tau)\frac{1}{n(n-1)}\right).
    $}
\end{equation*}
The result follows by simplification. The result for other cases follows similarly by considering the fact that $Q$ is an orthogonal matrix and using appropriate elements of the matrices $(\bm{q}_1\bm{q}_1^\top)$ and $(\bm{q}_n\bm{q}_n^\top)$.
\hfill$\square$

\vspace{0.1mm}
\noindent \underline{\bf{Proof of Lemma \ref{lem:star_mu_sig}:}}
The proof is similar to the proof of Lemma \ref{lem:complete_mu_sig_cond}. The $\tilde{\Sigma}_{22}^{-1}$ is calculated using the inverse formula involving Schur complement.  
\hfill$\square$

\vspace{0.1mm}
\noindent\underline{\bf Proof of Lemma \ref{lem:f_lower_bound}:}
\cite{Somarakis2019g} Let $f(x)=\frac{1}{2x}\frac{\cos x}{1-\sin x}$ on $(0,\frac{\pi}{2})$. Since $f$ is smooth on this open interval and $\lim_{x\downarrow 0}f(x)=+\infty$ while $\lim_{x\uparrow \frac{\pi}{2}}f(x)=+\infty$, it attains a finite minimum at some $x^\star\in(0,\frac{\pi}{2})$. Differentiating and setting $f'(x)=0$ yields a unique critical point $x^\star$ in $(0,\frac{\pi}{2})$ (solvable numerically). Evaluating $f(x^\star)$ gives the stated value $\underline f\approx 1.5319$. \hfill$\square$

\vspace{0.1mm}
\noindent \underline{\textbf{Proof of Theorem \ref{thm:sigma_bound}:}}
Let us consider 
\begin{equation*}
        f(\lambda_i \tau) = \frac{1}{2\lambda_i\tau} \frac{\cos(\lambda_i \, \tau)}{1 - \sin(\lambda_i \, \tau)},
\end{equation*}
and rewrite \eqref{eq:sigma_y} as 
\begin{equation*}
    \scalebox{0.95}{$
        \begin{aligned}
        \sigma_{ij} &= b^2 \tau \Tr \big(\text{diag}\{\bm{m}_i^\top\bm{q}_1\bm{m}_j^\top\bm{q}_1 \, \underline{f}, \bm{m}_i^\top\bm{q}_2\bm{m}_j^\top\bm{q}_2 f(\lambda_2 \tau),\\
        &\hspace{3cm} ...,\bm{m}_i^\top\bm{q}_n\bm{m}_j^\top\bm{q}_n f(\lambda_n \tau)\} \big)\\
        & = b^2 \tau \Tr \big(\text{diag}\left\{\underline{f},f(\lambda_2  \tau),...,f(\lambda_n \tau)\right\} \times \\
        & \hspace{3cm} \text{diag}\left\{ \bm{m}_i^\top\bm{q}_1\bm{m}_j^\top\bm{q}_1,...,\bm{m}_i^\top\bm{q}_n\bm{m}_j^\top\bm{q}_n \right\}\big)\\
        & = b^2 \tau \Tr ( F E_{ij} ) = b^2 \tau \Tr ( E_{ij}F ),
    \end{aligned}
    $}
\end{equation*}
where $F = \text{diag}\left\{\underline{f},f(\lambda_2 \tau),...,f(\lambda_n \tau)\right\}$, and $E_{ij} = (\bm{m}_i^\top Q)^\top \bm{m}_j^\top Q$. Since $\bm{m}_i^\top\bm{q}_1\bm{m}_j^\top\bm{q}_1 = 0$ always holds, we can set $(F)_{11} = \underline{f}$, the lower bound of $f$, without loss of generality. Considering the fact that $F$ is a normal matrix \cite{horn2012matrix} in $\R^{n \times n}$, we can use the result from \cite{liu2009trace} (Theorem 2.10) to show
\vspace{-2mm}
\[
    \sum_{i = 1}^{n} \Re(\gamma_{n-i+1}(F))\mu_i(\bar{E}_{ij}) \leq \Tr(E_{ij}F),
\] 
and
\[
    \Tr(E_{ij}F) \leq \sum_{i =  1}^{n} \Re(\gamma_{n-i+1}(F))\mu_{n-i+1}(\bar{E}_{ij}),
\]
where $\gamma_{i}(\cdot)$ and $\mu_{i}(\cdot)$ denotes the $i$'th eigenvalue of the matrix $F$ and $\bar{E}_{ij}$ in the non-decreasing order, and $\bar{E}_{ij} = (E_{ij}+E_{ij}^\top)/2$. Let us denote by the eigenvalues of $F$ as $\lambda_i(F)$, the smallest and the largest eigenvalue as $\gamma_{min}(F)$ and $\gamma_{max}(F)$. Use the convexity of the $f(\cdot)$ from \cite{ghaedsharaf2016interplay}. Then, the above inequality can be written as
\begin{equation*}
    \sum_{i = 1}^{n} \lambda_i(F) \mu_i(\bar{E}_{ij}) \leq  \Tr(E_{ij}F) \leq \sum_{i = 1}^{n} \lambda_i(F) \mu_{n-i+1}(\bar{E}_{ij}).
\end{equation*}
Considering the fact that,
\[
    \bar{E}_{ij} = \frac{E_{ij}+E_{ij}^\top}{2} = \frac{Q^\top(\bm{m}_i\bm{m}_j^\top+\bm{m}_j\bm{m}_i^\top)Q}{2},
\]
and $\mu_i(\bar{E}_{ij}) = \mu_i(QQ^\top \frac{\bm{m}_i\bm{m}_j^\top+\bm{m}_j\bm{m}_i^\top}{2}) = \mu_i(\frac{\bm{m}_i\bm{m}_j^\top+\bm{m}_j\bm{m}_i^\top}{2})$. By observing the structure of $\tilde{E}_{ij} = \frac{\bm{m}_i\bm{m}_j^\top+\bm{m}_j\bm{m}_i^\top}{2}$, the eigenvalues of $\tilde{E}_{ij}$ can be simplified as follows.

\vspace{1mm}
\noindent{\underline{Case 1}:} When $|i-j| = 0$, $\tilde{E}_{ii} = \bm{m}_i\bm{m}_i^\top$ is a positive semi definite rank one matrix, which has only one non-zero eigenvalue given by 
$\mu_1 = \bm{m}_i^\top\bm{m}_i = 1 - \frac{1}{n},$ where 
$(\bm{m}_{i})_j = -1/n $ for $j\neq i$ and $(\bm{m}_{i})_i = 1 -1/n$. Then, we have $\mu_1 = (n-1)/n$ and $\mu_2 = ... = \mu_n = 0$.

\vspace{1mm}
\noindent{\underline{Case 2}:} When $|i-j| \geq 1$, $\tilde{E}_{ij} = \frac{\bm{m}_i\bm{m}_j^\top+\bm{m}_j\bm{m}_i^\top}{2}$ is a rank two matrix, all but two of its eigenvalues are zero. The eigenspace of dimension two is spanned by the the columns of each rank one term in $\tilde{E}_{ij}.$ For constants $\alpha, \beta \in \R,$ let the eigenvectors be $\alpha \bm{m}_i + \beta \bm{m}_j$, we have
\begin{align*}
    \tilde{E}_{ij} \, v = \frac{\bm{m}_i\bm{m}_j^\top+\bm{m}_j\bm{m}_i^\top}{2} \big(\alpha \bm{m}_i + \beta \bm{m}_j \big).
\end{align*}

To find eigenvalue $\mu$, we have
\begin{align*}
    (\tilde{E}_{ij} - \mu I )v = \left(\frac{\bm{m}_i\bm{m}_j^\top+\bm{m}_j\bm{m}_i^\top}{2} - \mu I \right)(\alpha \bm{m}_i + \beta \bm{m}_j) = 0. 
\end{align*}
Rearranging the R.H.S leads to 
\begin{equation}    \label{eq:E_case_2}
    \begin{aligned}
        \bm{m}_i\left(\frac{\alpha \bm{m}_j^\top\bm{m}_i + \beta \|\bm{m}_j\|^2}{2} - \alpha \, \mu I\right) &+ \\    & \hspace{-4cm}\bm{m}_j\left(\frac{\beta \bm{m}_j^\top\bm{m}_i + \alpha \|\bm{m}_i\|^2}{2} - \beta \, \mu I\right) = 0.
    \end{aligned}
\end{equation}
Since $i \neq j$, $\bm{m}_i$ and $\bm{m}_j$ are linearly independent, which  implies
\begin{align*}
    \frac{\alpha \bm{m}_j^\top\bm{m}_i + \beta \|\bm{m}_j\|^2}{2} - \alpha \mu  &=0 ,\\     
    \frac{\beta \bm{m}_j^\top\bm{m}_i + \alpha \|\bm{m}_i\|^2}{2} - \beta \mu   &= 0,
\end{align*}
which can be written as 
\begin{align*}
    \begin{bmatrix}
       \frac{\bm{m}_j^\top\bm{m}_i}{2}  - \mu  & \frac{\|\bm{m}_j\|^2}{2} \\
       \frac{\|\bm{m}_i\|^2}{2} & \frac{\bm{m}_j^\top\bm{m}_i}{2}  - \mu
    \end{bmatrix} \begin{bmatrix}
        \alpha\\
        \beta
    \end{bmatrix}= \begin{bmatrix}
        0 \\
        0
    \end{bmatrix}.
\end{align*}
Since the vector $[\alpha;\beta]$ lies in the kernel of the coefficient matrix and its determinant must be zero, such that
\begin{align*}
    \mu_1 &= \frac{\bm{m}_j^\top\bm{m}_i + \|\bm{m}_i\| \|\bm{m}_j\|}{2},
    \mu_n &= \frac{\bm{m}_j^\top\bm{m}_i - \|\bm{m}_i\| \|\bm{m}_j\|}{2} 
\end{align*}
Substituting the values of $\|\bm{m}_i\|, \|\bm{m}_j\|, \bm{m}_j^\top\bm{m}_i$ leads to $\mu_1 = (n-2)/2n$, $\mu_2 = ... = \mu_{n-1} = 0$, $\mu_n = -1/2$.

Then, by combining the eigenvalues of $\tilde{E}_{ij}$, $\gamma_{min}(F) = \underline{f}$ as in \cite{Somarakis2019g}, and $\gamma_{max}(F) = \bar{f}$ for the convex and compact subset $\bar{S}$, one can conclude the result.
\hfill$\square$

\noindent\underline{\bf Proof of Theorem \ref{thm:risk_bound}:}
Define the variance band
\[
\sigma_{\min}:=\sqrt{\tfrac{n-1}{n}\,b^2\tau\,\underline f},\qquad
\sigma_{\max}:=\sqrt{\tfrac{n-1}{n}\,b^2\tau\,\bar f},
\]
where $\underline f:=\inf_{x \in \bar{S}} \frac{\cos x}{2x(1-\sin x)}$ and
$\bar f:=\sup_{x\in\bar S} \frac{\cos x}{2x(1-\sin x)}$ (finite under Assumption~\ref{asp:stable} and the domain choice $\bar S$). By Theorem~\ref{thm:sigma_bound}, for any connected graph satisfying the delay stability we have
\[
\sigma_i,\sigma_j\in[\sigma_{\min},\sigma_{\max}].
\]
Moreover, positive semidefiniteness of $\Sigma$ implies the 2$\times$2 principal minor condition
\(
\begin{bmatrix}\sigma_i^2 & \sigma_{ij}\\ \sigma_{ij} & \sigma_j^2\end{bmatrix}\succeq 0
\),
hence
\[
|\sigma_{ij}|\le \sigma_i\sigma_j.
\]
We collect these into the feasible set
\[
\mathbb W_1:=\Big\{(\sigma_i,\sigma_j,\sigma_{ij}) : \sigma_{i,j}\in[\sigma_{\min},\sigma_{\max}],\ |\sigma_{ij}|\le \sigma_i\sigma_j\Big\},
\]
and lower bound the folded tail $\AVAR_\varepsilon$ by the unfolded surrogate
\[
\underline{\mathfrak A}_\varepsilon(\sigma_i,\sigma_j,\sigma_{ij})
=\frac{\sigma_{ij}}{\sigma_i^2}\,y_f
+\kappa_\varepsilon\sqrt{\sigma_j^2-\frac{\sigma_{ij}^2}{\sigma_i^2}},\]
where $\kappa_\varepsilon:=\big(\sqrt{2\pi}\,\varepsilon\,e^{\iota_\varepsilon^2}\big)^{-1} \text{ and } \iota_\varepsilon:=\erf^{-1}(2\varepsilon-1)$, so that $\underline{\mathfrak A}_\varepsilon\le \mathfrak A^{ij}_\varepsilon$ for fixed $(\sigma_i,\sigma_j,\sigma_{ij})$.
For fixed $\sigma_i,\sigma_j$, the function $\underline{\mathfrak A}_\varepsilon$ is strictly concave in $\sigma_{ij}$ on the interval $[-\sigma_i\sigma_j,\sigma_i\sigma_j]$ because
\[
\frac{\partial^2\underline{\mathfrak A}_\varepsilon}{\partial\sigma_{ij}^2}
=-\,\frac{\kappa_\varepsilon\,\sigma_j^2}{\sigma_i^2\big(\sigma_j^2-\sigma_{ij}^2/\sigma_i^2\big)^{3/2}}<0.
\]
Therefore, its minimum over $\sigma_{ij}$ is attained at an endpoint of that interval.

\emph{Case 1: $\sigma_{ij}>0$.}
Here the feasible interval is $[0,\sigma_i\sigma_j]$.
Evaluating at the endpoints gives
\[
\underline{\mathfrak A}_\varepsilon(0)=\kappa_\varepsilon\,\sigma_j,\qquad
\underline{\mathfrak A}_\varepsilon(\sigma_i\sigma_j)=\frac{\sigma_j}{\sigma_i}\,y_f.
\]
Hence
\[
\min_{\sigma_{ij}\in[0,\sigma_i\sigma_j]}\underline{\mathfrak A}_\varepsilon
=\min\!\Big\{\kappa_\varepsilon\,\sigma_j,\ \frac{\sigma_j}{\sigma_i}\,y_f\Big\}.
\]
Minimizing further over $\sigma_i,\sigma_j\in[\sigma_{\min},\sigma_{\max}]$ yields
\begin{align*}
    \inf_{\mathbb W_1}\underline{\mathfrak A}_\varepsilon &=\min\!\Big\{\kappa_\varepsilon\,\sigma_{\min},\ \frac{\sigma_{\min}}{\sigma_{\max}}\,y_f\Big\}\\
    &=\min\!\Big\{\kappa_\varepsilon\,\sigma_{\min},\ \sqrt{\tfrac{\underline f}{\bar f}}\,y_f\Big\}
=: \mathfrak A_+.
\end{align*}

Since $\mathfrak A^{ij}_\varepsilon\ge \underline{\mathfrak A}_\varepsilon$, we obtain the stated $\mathfrak A_+$ and the branch mapping to $\mathcal A_+$ via \eqref{eq:casa}.

\emph{Case 2: $\sigma_{ij}<0$.}
Here the feasible interval is $[-\sigma_i\sigma_j,0]$. The endpoint values are
\[
\underline{\mathfrak A}_\varepsilon(0)=\kappa_\varepsilon\,\sigma_j\ge 0,\qquad
\underline{\mathfrak A}_\varepsilon(-\sigma_i\sigma_j)=-\frac{\sigma_j}{\sigma_i}\,y_f\le 0,
\]
so the minimum is nonpositive. Consequently, any lower bound on $\AVAR_\varepsilon$ is $\le 0$, and the level-set mapping \eqref{eq:casa} gives $\mathcal A^{ij}_\varepsilon\ge 0$, i.e., $\mathcal A_-=0$.

\emph{Case 3: $\sigma_{ij}=0$.}
When the correlation vanishes, the folded tail reduces to the one-dimensional case, giving
\[
\mathfrak A^{ij}_\varepsilon=\kappa_{\varepsilon/2}\,\sigma_j\ \ \Rightarrow\ \ 
\mathfrak A_0=\kappa_{\varepsilon/2}\,\sigma_{\min},
\]
and applying \eqref{eq:casa} yields the stated branch for $\mathcal A_0$.

\medskip
Putting the three cases together and then applying the level-set mapping \eqref{eq:casa} completes the proof. \hfill$\square$

\vspace{0.1mm}
\noindent \underline{\textbf{Proof of Corollary \ref{cor:best_risk_complete}}:} Observing the result from Lemma \ref{lem:complete_mu_sig_cond}, one can notice that the conditional mean $\tilde{\mu}$ is independent to the graph structure, then the best achievable risk of cascading collision can be obtained at $\sigj = \sqrt{\frac{(n-1) b^2\tau}{n} \underline{f}}$. Then, we conclude the result by inserting the conditional statistics into Theorem \ref{thm:mul_cas_ren_risk}.
\hfill$\square$


\begin{IEEEbiography}[{\includegraphics[width=1in,height=1.25in,clip,keepaspectratio]{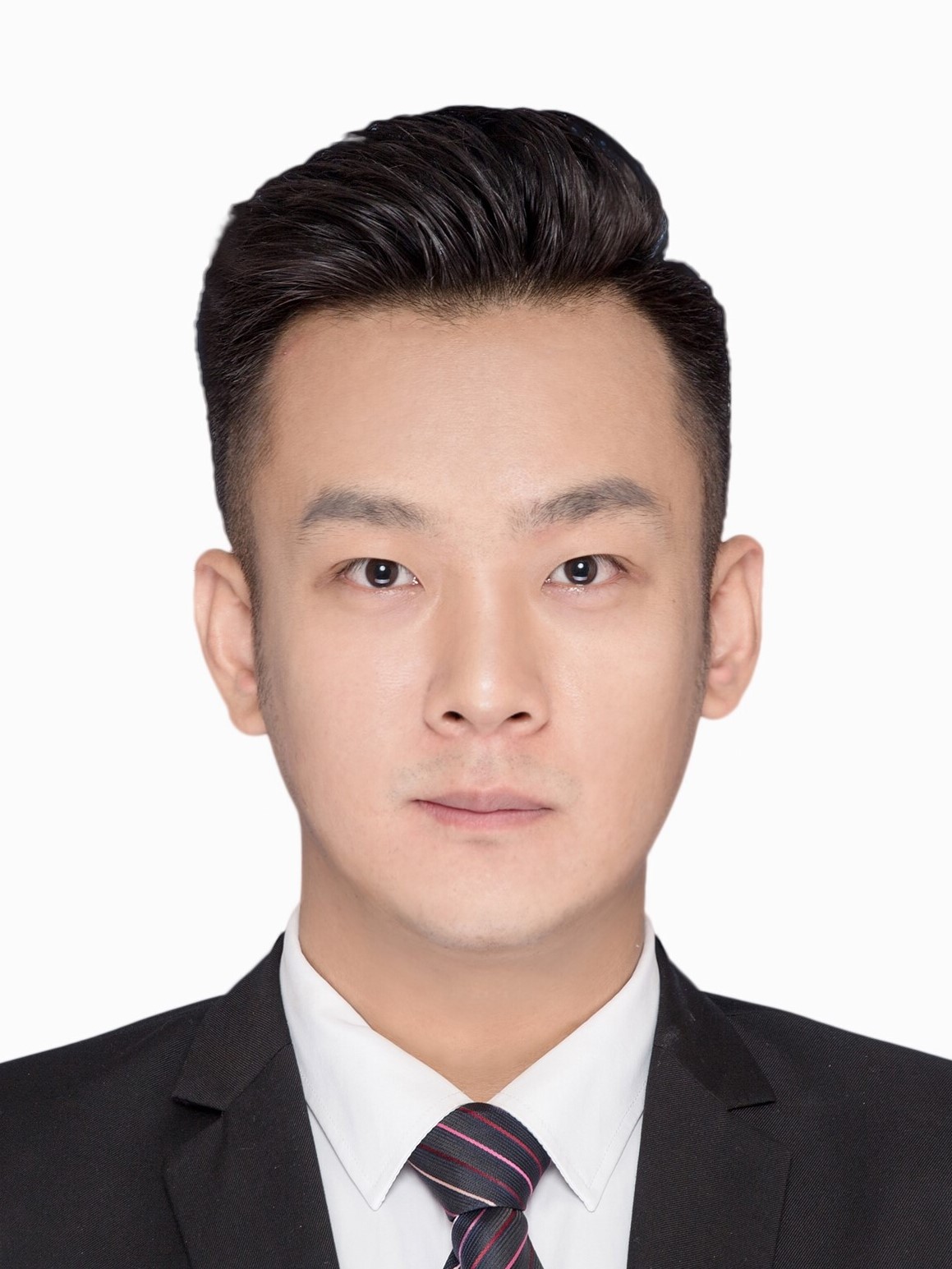}}]{Guangyi Liu}
    Guangyi Liu received the B.E. degree in aircraft design and engineering from the Beijing Institute of Technology in 2016, and the M.S. and Ph.D. degrees in mechanical engineering from Lehigh University in 2018 and 2024, respectively. He is currently a Postdoctoral Research Scientist at Amazon Robotics. His research interests include risk-aware decision-making, perception and networked control systems, and reinforcement learning for large-scale autonomous systems.
\end{IEEEbiography}

\begin{IEEEbiography}
[{\includegraphics[width=1in,height=1.25in,clip,keepaspectratio]{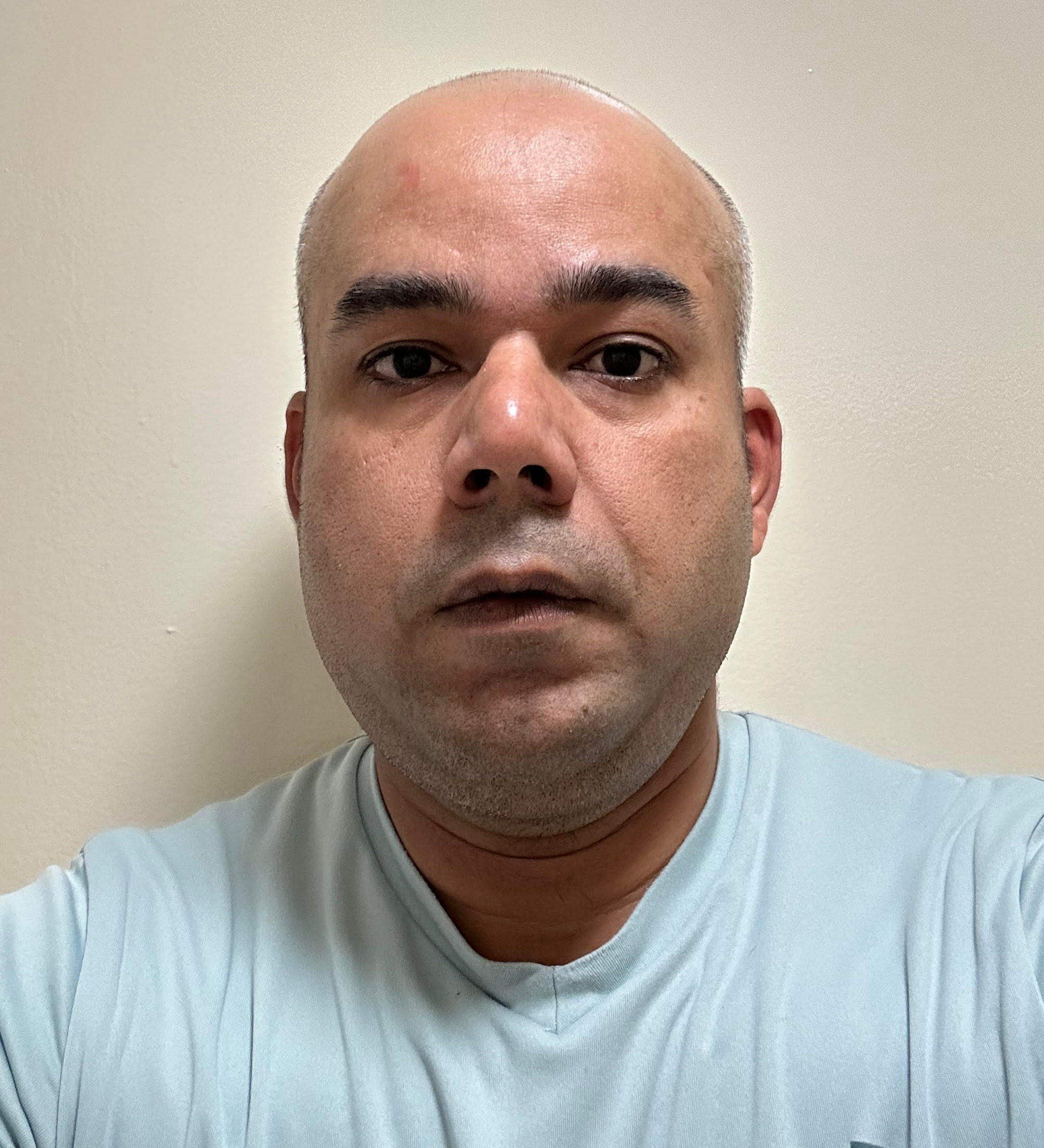}}]
{Vivek Pandey}
Vivek Pandey received his B.Tech and M.Tech degree in Chemical Engineering from Indian Institute of Technology, Mumbai, India in 2014. He is currently pursuing a Ph.D. degree in the Department of Mechanical Engineering and Mechanics at Lehigh University. His research interests include networked control systems.
\end{IEEEbiography}

\begin{IEEEbiography}[{\includegraphics[width=1in,height=1.25in,clip,keepaspectratio]{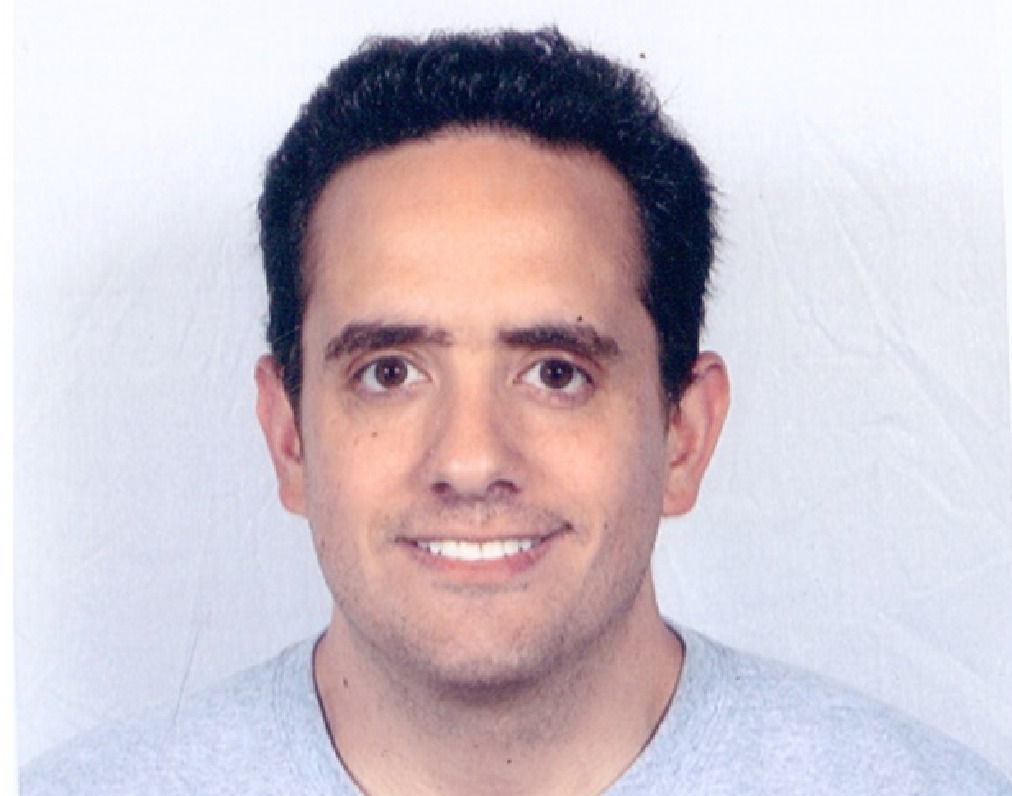}}]{Christoforos Somarakis}
Christoforos Somarakis received the B.S. degree
in Electrical Engineering from the National Technical
University of Athens, Athens, Greece, in 2007
and the M.S. and Ph.D. degrees in applied
mathematics from the University of Maryland at
College Park, in 2012 and 2015, respectively. He
was a Post-Doctoral scholar and a Research Scientist
with the Department of Mechanical Engineering and
Mechanics at Lehigh University from 2016 to 2019. From 2019 until 2022 he was a Member of Research Staff with the Intelligent Systems Lab at Palo Alto Research Center - Xerox. He is currently a Senior Scientits of Mathematical Modelling with the Applied Mathematics Group at Merck \& Co. 
\end{IEEEbiography}

\begin{IEEEbiography}[{\includegraphics[width=1in,height=1.25in,clip,keepaspectratio]{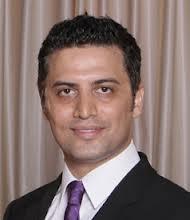}}]{Nader Motee}
Nader Motee (Senior Member, IEEE) received 
the B.Sc. degree in electrical engineering from 
the Sharif University of Technology, Tehran, 
Iran, in 2000, and the M.Sc. and Ph.D. degrees 
in electrical and systems engineering from the 
University of Pennsylvania, Philadelphia, PA, 
USA, in 2006 and 2007, respectively. 
From 2008 to 2011, he was a Postdoctoral 
Scholar with the Control and Dynamical Systems Department, California Institute of Technology, Pasadena, CA, USA. He is currently a 
Professor with the Department of Mechanical Engineering and Mechanics, Lehigh University, Bethlehem, PA, USA. His research interests
include distributed control systems and real-time robot perception. 
Dr. Motee was the recipient of several awards including the 2019 Best 
SIAM Journal of Control and Optimization Paper Prize, the 2008 AACC 
Hugo Schuck Best Paper Award, the 2007 ACC Best Student Paper 
Award, the 2008 Joseph and Rosaline Wolf Best Thesis Award, the 
2013 Air Force Office of Scientific Research Young Investigator Program 
Award, 2015 NSF Faculty Early Career Development Award, and a 2016 
Office of Naval Research Young Investigator Program Award.
 \end{IEEEbiography}

\end{document}